\theoremstyle{plain}
      \newtheorem{assumption}{\protect\assumptionname}
      \newtheorem{assumption}{\protect\assumptionname}[chapter]
\theoremstyle{definition}
      \newtheorem{defn}{\protect\definitionname}
      \newtheorem{defn}{\protect\definitionname}[chapter]
\theoremstyle{plain}
      \newtheorem{condition}{\protect\conditionname}
      \newtheorem{condition}{\protect\conditionname}[chapter]
\theoremstyle{plain}
      \newtheorem{prop}{\protect\propositionname}
      \newtheorem{prop}{\protect\propositionname}[chapter]
\theoremstyle{plain}
      \newtheorem{lem}{\protect\lemmaname}
      \newtheorem{lem}{\protect\lemmaname}[chapter]
\theoremstyle{plain}
	    \newtheorem{thm}{\protect\theoremname}
      \newtheorem{thm}{\protect\theoremname}[chapter]
\theoremstyle{plain}
  \newtheorem{cor}{\protect\corollaryname}
      \newtheorem{cor}{\protect\corollaryname}[chapter]
\providecommand{\assumptionname}{Assumption}
\providecommand{\conditionname}{Condition}
\providecommand{\corollaryname}{Corollary}
\providecommand{\definitionname}{Definition}
\providecommand{\lemmaname}{Lemma}
\providecommand{\propositionname}{Proposition}
\providecommand{\theoremname}{Theorem}
\global\long\def\at#1{\rvert_{#1}}
\global\long\def\norm#1{\left\Vert #1\right\Vert }
\global\long\def\mbe{\mathbb{E}}
\global\long\def\thetahat{\hat{\theta}}
\global\long\def\onevec{1_{w}}
\global\long\def\thetawiggle{\tilde{\theta}}
\global\long\def\thetaone{\hat{\theta}_{1}}
\global\long\def\thetaij{\hat{\theta}_{\textrm{IJ}}}
\newcommandx\thetaw[1][usedefault, addprefix=\global, 1=w]{\hat{\theta}\left(#1\right)}
\global\long\def\hone{H_{1}}
\global\long\def\constij{C_{\textrm{IJ}}}
\global\long\def\consth{C_{h}}
\global\long\def\constg{C_{g}}
\global\long\def\constw{C_{w}}
\newcommandx\constop[1][usedefault, addprefix=\global, 1={\text{\,}}]{C_{op}^{#1}}
\global\long\def\liph{L_{h}}
\global\long\def\thetasize{\Delta_{\theta}}
\global\long\def\deltasize{\Delta_{\delta}}
\global\long\def\thetaball{B_{\Delta_{\theta}}}
\global\long\def\thetadeltaball{B_{C_{op}\delta}}
\global\long\def\hint{\tilde{H}}
\def\maxwidth{ %
  \ifdim\Gin@nat@width>\linewidth
    \linewidth
  \else
    \Gin@nat@width
  \fi
}
\definecolor{fgcolor}{rgb}{0.345, 0.345, 0.345}
\definecolor{shadecolor}{rgb}{.97, .97, .97}
\definecolor{messagecolor}{rgb}{0, 0, 0}
\definecolor{warningcolor}{rgb}{1, 0, 1}
\definecolor{errorcolor}{rgb}{1, 0, 0}
\newenvironment{knitrout}{}{} 
\global\long\def\wcv{W_k}
\global\long\def\wboot{W^*_B}
\global\long\def\thetapw{\thetahat(w)}
\global\long\def\wdiff{\Delta w}
\newcommand{\fig}[1]{Fig.~\ref{fig:#1}}
\newcommand{\sect}[1]{Section~\ref{sec:#1}}
\newcommand{\subsect}[1]{Section~\ref{subsec:#1}}
\newcommand{\corollary}[1]{Corollary~\ref{cor:#1}}
\newcommand{\appsect}[1]{Appendix~\ref{sec:#1}}
\newcommand{\coreassum}{Assumptions \ref{assu:paper_smoothness}--\ref{assu:paper_lipschitz} }
\newcommand{\paperallcoreassum}{Assumptions \ref{assu:paper_smoothness}--\ref{assu:paper_weight_bounded} }
    \title{A Swiss Army Infinitesimal Jackknife}
    \author{
      Ryan Giordano\\ \texttt{rgiordano@berkeley.edu}
      \and
      Will Stephenson\\ \texttt{wtstephe@mit.edu}
      \and
      Runjing Liu\\ \texttt{runjing\_liu@berkeley.edu}
      \and
      Michael I.~Jordan\\ \texttt{jordan@cs.berkeley.edu}
      \and
      Tamara Broderick\\ \texttt{tbroderick@csail.mit.edu}
    }
\begin{document}

\iftoggle{for_arxiv} {
    \maketitle
}

\nottoggle{for_arxiv} {
\twocolumn[
    \aistatstitle{A Swiss Army Infinitesimal Jackknife}

    \aistatsauthor{
        Ryan Giordano \And
        Will Stephenson \And
        Runjing Liu \And
        Michael I.~Jordan \And
        Tamara Broderick
    }
    \aistatsaddress{
        UC Berkeley \And
        MIT \And
        UC Berkeley \And
        UC Berkeley \And
        MIT
    }
]
}

\begin{abstract}
The error or variability of machine learning algorithms is often assessed by
repeatedly re-fitting a model with different weighted versions of the observed
data. The ubiquitous tools of cross-validation (CV) and the bootstrap are
examples of this technique. These methods are powerful in large part due to
their model agnosticism but can be slow to run on modern, large data sets due to
the need to repeatedly re-fit the model. In this work, we use a linear
approximation to the dependence of the fitting procedure on the weights,
producing results that can be faster than repeated re-fitting by an order of
magnitude. This linear approximation is sometimes known as the ``infinitesimal
jackknife'' in the statistics literature, where it is mostly used as a
theoretical tool to prove asymptotic results. We provide explicit finite-sample
error bounds for the infinitesimal jackknife in terms of a small number of
simple, verifiable assumptions. Our results apply whether the weights and data
are stochastic or deterministic, and so can be used as a tool for proving the
accuracy of the infinitesimal jackknife on a wide variety of problems. As a
corollary, we state mild regularity conditions under which our approximation
consistently estimates true leave-$k$-out cross-validation for any fixed $k$.
These theoretical results, together with modern automatic differentiation
software, support the application of the infinitesimal jackknife to a wide
variety of practical problems in machine learning, providing a ``Swiss Army
infinitesimal jackknife.'' We demonstrate the accuracy of our methods on a range
of simulated and real datasets.

\end{abstract}

\section{Introduction}\label{sec:introduction}

Statistical machine learning methods are increasingly deployed in real-world
problem domains where they are the basis of decisions affecting individuals'
employment, savings, health, and safety. Unavoidable randomness in data
collection necessitates understanding how our estimates, and resulting
decisions, might have differed had we observed different data. Both cross
validation (CV) and the bootstrap attempt to diagnose this variation and are
widely used in classical data analysis. But these methods are often
prohibitively slow for modern, massive datasets, as they require running a
learning algorithm on many slightly different datasets.  In this work, we
propose to replace these many runs with a single perturbative approximation. We
show that the computation of this approximation is far cheaper than the
classical methods, and we provide theoretical conditions that establish its
accuracy.

Many data analyses proceed by minimizing a loss function of exchangeable data.
Examples include empirical loss minimization and M-estimation based on product
likelihoods. Since we typically do not know the true distribution generating the
data, it is common to approximate the dependence of our estimator on the data
via the dependence of the estimator on the empirical distribution. In
particular, we often form a new, proxy dataset using random or deterministic
modifications of the empirical distribution, such as randomly removing $k$
datapoints for leave-$k$-out CV. A proxy dataset obtained in this way can be
represented as a weighting of the original data. From a set of such proxy
datasets we can obtain estimates of uncertainty, including estimates of bias,
variance, and prediction accuracy.

As data and models grow, the cost of repeatedly solving a large optimization
problem for a number of different values of weights can become impractically
large. Conversely, though, larger datasets often exhibit greater regularity; in
particular, under fairly general conditions, limit laws based on independence
imply that an optimum exhibits diminishing dependence on any fixed set of data
points.  We use this observation to derive a linear approximation to resampling
that needs to be calculated only once, but which nonetheless captures the
variability inherent in the repeated computations of classical CV.  Our method
is an instance of the \emph{infinitesimal jackknife} (IJ), a general methodology
that was historically a precursor to cross-validation and the
bootstrap~\citep{jaeckel:1972:infinitesimal, efron:1982:jackknife}. Part of our
argument is that variants of the IJ should be reconsidered
for modern large-scale applications because, for smooth optimization problems,
the IJ can be calculated automatically with modern
automatic differentiation tools \citep{baydin:2015:automatic}.

By using this linear approximation, we incur the cost of forming and inverting a
matrix of second derivatives with size equal to the dimension of the parameter
space, but we avoid the cost of repeatedly re-optimizing the objective. As we
demonstrate empirically, this tradeoff can be extremely favorable in many
problems of interest.

Our approach aims to provide a felicitous union of two schools of thought. In
statistics, the IJ is typically used to prove normality or
consistency of other estimators~\citep{
fernholz:1983:mises,shao:1993:jackknifemestimator,shao:2012:jackknife}. However,
the conditions that are required for these asymptotic analyses to hold are
prohibitively restrictive for machine learning---specifically, they require
objectives with bounded gradients. A number of recent papers in machine learning
have provided related linear approximations for the special case of
leave-one-out cross-validation~\citep{KohL17, RadM18, BeiramiRST17}, though
their analyses lack the generality of the statistical perspective.

We combine these two approaches by modifying the proof of the Fr{\'e}chet
differentiability of M-estimators developed by \citet{clarke:1983:uniqueness}.
Specifically, we adapt the proof away from the question of Fr{\'e}chet
differentiability within the class of all empirical distributions to the
narrower problem of approximating the exact re-weighting on a particular dataset
with a potentially restricted set of weights.  This limitation of what we expect
from the approximation is crucial; it allows us to bound the error in terms of a
complexity measure of the set of derivatives of the observed objective function,
providing a basis for non-asymptotic applications in large-scale machine
learning, even for objectives with unbounded derivatives.  Together with modern
automatic differentiation tools, these results extend the use of the
IJ to a wider range of practical problems. Thus, our
``Swiss Army infinitesimal jackknife,'' like the famous Swiss Army knife, is a
single tool with many different functions.

\section{Methods and Results}\label{sec:methods}

\subsection{Problem definition}\label{subsec:problem_definition}
We consider the problem of estimating an unknown parameter
$\theta\in\Omega_{\theta}\subseteq\mathbb{R}^{D}$, with a compact
$\Omega_{\theta}$ and a dataset of size $N$. Our analysis will proceed entirely
in terms of a fixed dataset, though we will be careful to make assumptions that
will plausibly hold for all $N$ under suitably well-behaved random sampling. We
define our estimate, $\thetahat\in\Omega_{\theta}$, as the root of a weighted
estimating equation. For each $n=1, \ldots, N$, let $g_{n}\left(\theta\right)$ be a
function from $\Omega_{\theta}$ to $\mathbb{R}^{D}$. Let $w_{n}$ be a real
number, and let $w$ be the vector collecting the $w_n$.
Then $\thetahat$ is defined as the quantity that satisfies
\begin{align}
\thetapw:= &
    \quad\theta\textrm{ such that }
    \frac{1}{N}\sum_{n=1}^{N}w_{n}g_{n}\left(\theta\right) = 0.
    \label{eq:estimating_equation}
\end{align}
We will impose assumptions below that imply at least local uniqueness of
$\thetapw$; see the discussion following \assuref{paper_hessian} in
\subsect{assumptions}.

As an example, consider a family of continuously differentiable loss functions
$f\left(\cdot, \theta\right)$ parameterized by $\theta$ and evaluated at data
points $x_{n},n=1, \ldots, N$. If we want to solve the optimization problem
$\thetahat=\underset{\theta\in\Omega_{\theta}}
    {\mathrm{argmin}}\frac{1}{N}\sum_{n=1}^{N}f\left(x_{n},\theta\right),$
then we take $g_{n}\left(\theta\right)=\partial
f\left(x_{n},\theta\right)/\partial\theta$ and $w_{n}\equiv1$. By keeping our
notation general, we will be able to analyze a more general class of problems,
such as multi-stage optimization (see \sect{genomics}). However, to aid
intuition, we will sometimes refer to the $g_{n}\left(\theta\right)$ as
``gradients'' and their derivatives as ``Hessians.''

When \eqref{estimating_equation} is not degenerate (we articulate precise
conditions below), $\thetahat$ is a function of the weights through solving the
estimating equation, and we write $\thetapw$ to emphasize this.
We will focus on the case where we have solved \eqref{estimating_equation} for
the weight vector of all ones, $\onevec:=\left(1, \ldots, 1\right)$, which we denote
$\thetaone := \thetahat\left(\onevec\right)$.

A re-sampling scheme can be specified by choosing a set
$W\subseteq\mathbb{R}^{N}$ of weight vectors. For example, to approximate
leave-$k$-out CV, one repeatedly computes $\thetapw$ where $w$ has $k$ randomly
chosen zeros and all ones otherwise.  Define $\wcv$ as the set of every possible
leave-$k$-out weight vector.  Showing that our approximation is good for all
leave-$k$-out analyses with probability one is equivalent to showing that the
approximation is good for all $w \in \wcv$.

In the case of the bootstrap, $W$ contains a fixed number $B$ of randomly chosen
weight vectors,
$w_{b}^{*}\stackrel{iid}{\sim}\mathrm{Multinomial}\left(N,N^{-1}\right)$ for
$b=1, \ldots, B$, so that $\sum_{n=1}^N w_{bn}^{*} = N$ for each $b$. Note that
while $w_n$ or $w_{bn}^{*}$ are scalars, $w_{b}^{*}$ is a vector of length $N$.
The distribution of
$\thetahat\left(w_{b}^{*}\right)-\thetahat\left(\onevec\right)$ is then used to
estimate the sampling variation of $\thetaone$. Define this set $\wboot =
\{w_{1}^{*},\ldots,w_{B}^{*}\}$.  Note that $\wboot$ is stochastic and is a
subset of all weight vectors that sum to $N$.

In general, $W$ can be deterministic or stochastic, may contain integer or
non-integer values, and may be determined independently of the data or jointly
with it. As with the data, our results hold for a given $W$, but in a way that
will allow natural high-probability extensions to stochastic $W$.

\subsection{Linear approximation}

The main problem we solve is the computational expense involved in evaluating
$\thetapw$ for all the $w\in W$. Our contribution is to use only quantities
calculated from $\thetaone$ to approximate $\thetapw$ for all $w\in W$, without
re-solving \eqref{estimating_equation}. Our approximation is based on the
derivative $\frac{d\thetapw}{dw^{T}}$, whose existence depends on the
derivatives of $g_{n}\left(\theta\right)$, which we assume to exist, and which
we denote as $h_{n}\left(\theta\right):=\frac{\partial
g_{n}\left(\theta\right)}{\partial\theta^{T}}$. We use this notation because
$h_{n}\left(\theta\right)$ would be the Hessian of a term of the objective in
the case of an optimization problem. We make the following definition for
brevity.
\begin{defn}
The fixed point equation and its derivative are given respectively by
\begin{align*}
G\left(\theta,w\right) &:=
    \frac{1}{N}\sum_{n=1}^{N}w_{n}g_{n}\left(\theta\right) \\
H\left(\theta,w\right) &:=
    \frac{1}{N}\sum_{n=1}^{N}w_{n}h_{n}\left(\theta\right).
\end{align*}
\end{defn}
Note that $G\left(\thetapw,w\right)=0$ because $\thetapw$ solves
\eqref{estimating_equation} for $w$. We define
$\hone:=H\left(\thetaone,\onevec\right)$ and define the weight difference as
$\wdiff=w-\onevec\in\mathbb{R}^{N}$. When $\hone$ is invertible, one can use the
implicit function theorem and the chain rule to show that the derivative of
$\thetapw$ with respect to $w$ is given by
\begin{align*}
\frac{d\thetapw}{dw^{T}}\at{\onevec}\wdiff & =
    -\hone^{-1}\frac{1}{N}\sum_{n=1}^{N}g_{n}
    \left(\thetaone\right)\wdiff \\
    & =-\hone^{-1}G\left(\thetaone,\wdiff\right).
\end{align*}
This derivative allows us to form a first-order approximation to $\thetapw$
at $\thetaone$.
\begin{defn}
\label{defref:ij_definition}Our linear approximation to $\thetapw$
is given by
\begin{align*}
\thetaij\left(w\right) & :=\thetaone-\hone^{-1}G\left(\thetaone,\wdiff\right).
\end{align*}
\end{defn}
We use the subscript ``IJ'' for ``infinitesimal jackknife,'' which is the name
for this estimate in the statistics literature
\citep{jaeckel:1972:infinitesimal,shao:1993:jackknifemestimator}. Because
$\thetaij$ depends only on $\thetaone$ and $\wdiff$, and not on
solutions at any other values of $w$, there is no
need to re-solve \eqref{estimating_equation}. Instead, to calculate $\thetaij$
one must solve a linear system involving $\hone$. Recalling that $\theta$ is
$D$-dimensional, the calculation of $\hone^{-1}$ (or a factorization that
supports efficient solution of linear systems) can be $O\left(D^{3}\right)$.
However, once $\hone^{-1}$ is calculated or $\hone$ is factorized, calculating our approximation
$\thetaij\left(w\right)$ for each new weight costs only as
much as a single matrix-vector multiplication. Furthermore, $\hone$ often has a
sparse structure allowing $\hone^{-1}$ to be calculated more efficiently than a
worst-case scenario (see \sect{genomics} for an example). In more
high-dimensional examples with dense Hessian matrices, such as neural networks,
one may need to turn to approximations such as stochastic second-order methods
\citep{KohL17, agarwal:2016:lissa} and conjugate gradient
\citep{wright:1999:optimization}. Indeed, even in relatively small or sparse
problems, the vast bulk of the computation required to calculate $\thetaij$ is
in the computation of $\hone^{-1}$. We leave the important question of
approximate calculation of $\hone^{-1}$ for future work.

\subsection{Assumptions and results}\label{subsec:assumptions}

We now state our key assumptions and results, which are sufficient conditions
under which $\thetaij(w)$ will be a good approximation to $\thetapw$. We defer
most proofs to \appsect{appendix_proofs}. We use $\norm{\cdot}_{op}$
to denote the matrix operator norm, $\norm{\cdot}_{2}$ to denote the $L_{2}$
norm, and $\norm{\cdot}_{1}$ to denote the $L_{1}$ norm. For quantities like $g$
and $h$, which have dimensions $N\times D$ and $N\times D\times D$ respectively,
we apply the $L_p$ norm to the vectorized version of arrays.
For example,
$\frac{1}{\sqrt{N}} \norm{h\left(\theta\right)}_{2} =
\sqrt{
\frac{1}{N}\sum_{n=1}^{N}\sum_{i=1}^{D}\sum_{j=1}^{D}
\left[h_{n}\left(\theta\right)\right]_{ij}^{2}
}$ which is the square root of a sample average over $n\in[N]$.

We state all assumptions and results for a fixed $N$, a given estimating
equation vector $g\left(\theta\right)$, and a fixed class of weights $W$.
Although our analysis proceeds
with these quantities fixed, we are careful to make only assumptions that can
plausibly hold for all $N$ and/or for randomly chosen $W$
under appropriate regularity conditions.
\begin{assumption}[Smoothness] \label{assu:paper_smoothness}
For all $\theta\in\Omega_{\theta}$, each $g_{n}\left(\theta\right)$
is continuously differentiable in $\theta$.
\end{assumption}
The smoothness in \assuref{paper_smoothness} is necessary
for a local approximation like \defrefref{ij_definition} to have
any hope of being useful.
\begin{assumption}[Non-degeneracy]
\label{assu:paper_hessian}
For all $\theta\in\Omega_{\theta}$, $H\left(\theta,\onevec\right)$
is non-singular, with
$\sup_{\theta\in\Omega_{\theta}}\norm{H\left(\theta,\onevec\right)^{-1}}_{op}
\le\constop < \infty$.
\end{assumption}
Without \assuref{paper_hessian}, the derivative in \defrefref{ij_definition}
would not exist. For an optimization problem, \defrefref{ij_definition} amounts
to assuming that the Hessian is strongly positive definite, and, in general,
assures that the solution $\thetaone$ is unique.  Under our assumptions, we will
show later that, additionally, $\thetapw$ is unique in a neighborhood of
$\thetaone$; see \lemref{continuous_invertibility} of \appsect{appendix_proofs}.
Furthermore, by fixing $\constop$, if we want to apply \assuref{paper_hessian}
for $N\rightarrow\infty$, we will require that $\hone$ remains strongly positive
definite.

\begin{assumption}[Bounded averages] \label{assu:paper_bounded}
There exist finite constants $\constg$ and $\consth$ such that
$\sup_{\theta\in\Omega_{\theta}}
    \frac{1}{\sqrt{N}} \norm{g\left(\theta\right)}_{2}\le\constg<\infty
\quad\textrm{and}\quad
\sup_{\theta\in\Omega_{\theta}}
    \frac{1}{\sqrt{N}} \norm{h\left(\theta\right)}_{2} \le\consth<\infty$.
\end{assumption}
\assuref{paper_bounded} essentially states that the sample variances of the
gradients and Hessians are uniformly bounded. Note that it does not require that
these quantities are bounded term-wise. For example, we allow
$\sup_{n}\norm{g_{n}\left(\theta\right)}_2^2
\underset{N\rightarrow\infty}{\longrightarrow}\infty$, as long as
$\sup_{n}\frac{1}{N}\norm{g_{n}\left(\theta\right)}^{2}_2$ remains bounded. This
is a key advantage of the present work over many past applications of the IJ to
M-estimation, which require $\sup_n \norm{g_n(\theta)}_2^2$ to be uniformly
bounded for all $N$ \citep{shao:2012:jackknife, BeiramiRST17}.

In both machine learning and
statistics, $\sup_n\norm{g_n(\theta)}_2^2$ is rarely bounded, though
$\frac{1}{N}\norm{g(\theta)}_2^2$ often is.  As a simple example, suppose
that $\theta \in \mathbb{R}^1$, $x_n \sim \mathcal{N}(0, 1)$, and
$g_n = \theta - x_n$, as would arise from the squared error loss
$f_n\left(x_n, \theta\right) = \frac{1}{2}\left(\theta - x_n\right)^2$.
Fix a $\theta$ and let $N \rightarrow \infty$.  Then
$\sup_n\norm{g_n(\theta)}_2^2 \rightarrow \infty$ because
$\sup_n |x_n| \rightarrow \infty$, but
$\frac{1}{N}\norm{g(\theta)}_2^2 \rightarrow \theta^2 + 1$ by the law of
large numbers.
\begin{assumption}[Local smoothness] \label{assu:paper_lipschitz}
There exists a $\thetasize>0$ and a finite constant $\liph$ such that,
$\norm{\theta-\thetaone}_{2} \le \thetasize$ implies that
$\frac{\norm{h\left(\theta\right)-h\left(\thetaone\right)}_{2}}
    {\sqrt{N}}\le\liph\norm{\theta-\thetaone}_{2}$.
\end{assumption}
The constants defined in \assuref{paper_lipschitz} are needed to calculate
our error bounds explicitly.

\coreassum are quite general and should be expected to hold for many reasonable
problems, including holding uniformly asymptotically with high probability for
many reasonable data-generating distributions, as the following lemma shows.
\begin{lem}[The assumptions hold under uniform convergence]
\label{lem:assumptions_hold}
Let $\Omega_{\theta}$ be a compact set, and let
$g_{n}\left(\theta\right)$ be twice continuously differentiable IID random
functions for $n \in [N]$.  (The function is random but $\theta$ is not---for example,
$\mbe\left[g_n(\theta)\right]$ is still a function of $\theta$.)
Define
$r_{n}\left(\theta\right) :=
    \frac{\partial^{2}g{}_{n}\left(\theta\right)}
    {\partial\theta\partial\theta}$,
so $r_{n}\left(\theta\right)$ is a $D\times D\times D$ tensor.

Assume that we can exchange integration and differentiation, that
$\mbe\left[h_{n}\left(\theta\right)\right]$ is non-singular for
all $\theta\in\Omega_{\theta}$,
and that all of
$\mbe\left[\sup_{\theta\in\Omega_{\theta}}\norm{g_{n}\left(\theta\right)}_{2}^{2}\right]$,
$\mbe\left[\sup_{\theta\in\Omega_{\theta}}\norm{h_{n}\left(\theta\right)}_{2}^{2}\right]$,
and
$\mbe\left[\sup_{\theta\in\Omega_{\theta}}\norm{r_{n}\left(\theta\right)}_{2}^{2}\right]$
are finite.

Then $\lim_{N\rightarrow\infty}P\left(\textrm{\coreassum\ hold}\right)=1$.
\end{lem}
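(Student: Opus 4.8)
The plan is to derive each of \coreassum from a single workhorse: a uniform law of large numbers (ULLN) for continuous IID random functions on the compact set $\Omega_{\theta}$ with an integrable envelope. The statement I would invoke is that if $\theta \mapsto f_{n}(\theta)$ are IID and continuous on compact $\Omega_{\theta}$ with $\mbe[\sup_{\theta}\norm{f_{n}(\theta)}] < \infty$, then $\sup_{\theta}\norm{\frac{1}{N}\sum_{n}f_{n}(\theta) - \mbe[f_{1}(\theta)]} \to 0$ almost surely, and moreover $\theta \mapsto \mbe[f_{1}(\theta)]$ is continuous. The exchange-of-integration-and-differentiation hypothesis ensures $\mbe[h_{1}(\theta)] = \partial\mbe[g_{1}(\theta)]/\partial\theta^{T}$ and that these limit functions inherit the needed smoothness. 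Since \assuref{paper_smoothness} is already implied deterministically by the assumption that each $g_{n}$ is twice continuously differentiable, it holds for every $N$ with probability one, so only \assuref{paper_hessian}, \assuref{paper_bounded}, and \assuref{paper_lipschitz} require work.

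For \assuref{paper_bounded} I would apply the ULLN to the scalar functions $\theta \mapsto \norm{g_{n}(\theta)}_{2}^{2}$ and $\theta \mapsto \norm{h_{n}(\theta)}_{2}^{2}$, whose envelopes are integrable by hypothesis. Since $\frac{1}{\sqrt{N}}\norm{g(\theta)}_{2} = \sqrt{\frac{1}{N}\sum_{n}\norm{g_{n}(\theta)}_{2}^{2}}$, the ULLN gives uniform convergence to $\sqrt{\mbe[\norm{g_{1}(\theta)}_{2}^{2}]}$, a continuous function bounded on the compact $\Omega_{\theta}$; taking $\constg$ slightly above its supremum makes the bound hold with probability tending to one, and likewise for $\consth$.

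For \assuref{paper_hessian} I would apply the ULLN to $h_{n}$ itself, using the integrable envelope $\sup_{\theta}\norm{h_{n}}_{2}$ (finite by Cauchy--Schwarz from the $L^{2}$ bound). The limit $\mbe[h_{1}(\theta)]$ is continuous and non-singular on the compact set, so $\inf_{\theta}\sigma_{\min}(\mbe[h_{1}(\theta)]) =: \sigma > 0$. A matrix perturbation argument then finishes it: on the high-probability event where the uniform Frobenius error is below $\sigma/2$ (hence so is the operator-norm error, since $\norm{\cdot}_{op}\le\norm{\cdot}_{2}$), the sample $H(\theta,\onevec)$ is invertible with $\norm{H(\theta,\onevec)^{-1}}_{op} \le 2/\sigma$ uniformly, giving \assuref{paper_hessian} with $\constop = 2/\sigma$. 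I expect this to be the main obstacle, as it is the only place where uniform convergence in norm must be upgraded to a uniform bound on an inverse, and it relies essentially on both the compactness of $\Omega_{\theta}$ and the everywhere-non-singularity of $\mbe[h_{1}(\theta)]$.

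For \assuref{paper_lipschitz} I would use the fundamental theorem of calculus termwise: since $g_{n}$ is twice continuously differentiable, $h_{n}(\theta) - h_{n}(\thetaone) = \int_{0}^{1} r_{n}(\thetaone + t(\theta-\thetaone))\,dt\cdot(\theta-\thetaone)$, so $\norm{h_{n}(\theta)-h_{n}(\thetaone)}_{2} \le (\sup_{\bar\theta}\norm{r_{n}(\bar\theta)}_{2})\norm{\theta-\thetaone}_{2}$. Squaring and averaging gives $\frac{\norm{h(\theta)-h(\thetaone)}_{2}^{2}}{N} \le (\frac{1}{N}\sum_{n}\sup_{\bar\theta}\norm{r_{n}(\bar\theta)}_{2}^{2})\norm{\theta-\thetaone}_{2}^{2}$, and the bracketed term is an ordinary IID average of a scalar with finite mean $\mbe[\sup_{\theta}\norm{r_{1}}_{2}^{2}]$, which converges by the strong law. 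Taking $\liph$ slightly above the square root of this mean and $\thetasize$ equal to the finite diameter of $\Omega_{\theta}$ yields \assuref{paper_lipschitz} with probability tending to one; the bound holds uniformly in $\thetaone$, so its randomness is immaterial. Combining, each nontrivial assumption holds on an event of probability tending to one, and a finite intersection of such events again has probability tending to one, which completes the proof.
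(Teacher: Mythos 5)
Your proposal is correct, and its overall architecture matches the paper's: a uniform law of large numbers for IID random functions on the compact $\Omega_\theta$ (the paper invokes Theorems 9.1 and 9.2 of Keener), continuity and hence boundedness of the limiting expectation functions, the observation that \assuref{paper_smoothness} holds deterministically, a ULLN applied to $\norm{g_n(\theta)}_2^2$ and $\norm{h_n(\theta)}_2^2$ for \assuref{paper_bounded}, and uniform matrix convergence plus a perturbation bound for \assuref{paper_hessian} (your $\sigma_{\min}$ argument with $\constop = 2/\sigma$ is the same device as the paper's $L_1$-norm continuity proposition with $\constop = 2Q_{op}$). Where you genuinely diverge is \assuref{paper_lipschitz}: you bound $\norm{h_n(\theta)-h_n(\thetaone)}_2$ termwise via the integral form of the Taylor remainder and the contraction inequality $\norm{r\cdot v}_2 \le \norm{r}_2\norm{v}_2$, so that only the ordinary strong law applied to the scalar averages $\frac{1}{N}\sum_n \sup_\theta \norm{r_n(\theta)}_2^2$ is needed, with $\thetasize$ taken to be the diameter of $\Omega_\theta$. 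The paper instead differentiates the map $\theta \mapsto \norm{h(\theta)}_2/\sqrt{N}$ via a chain-rule lemma and then needs the full ULLN applied to $\frac{1}{N}\norm{r(\theta)}_2^2$, accumulating explicit dimension factors ($D^2$, $D^4$) in its choice of $\liph$. Your route is cleaner on two counts: it avoids those dimension factors, and it directly controls the norm of the difference $\norm{h(\theta)-h(\thetaone)}_2/\sqrt{N}$, which is literally what \assuref{paper_lipschitz} demands, whereas the paper's displayed chain controls the difference of norms $\left|\norm{h(\theta_1)}_2 - \norm{h(\theta_2)}_2\right|/\sqrt{N}$, a weaker quantity. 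One small caveat applies to both arguments: the segment-based Taylor/mean-value step implicitly requires the line segment between $\theta$ and $\thetaone$ to lie in the domain of smoothness, so strictly one should assume $\Omega_\theta$ convex or that the $g_n$ extend smoothly to a convex neighborhood; this does not affect the substance of either proof.
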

\lemref{assumptions_hold} follows from
the uniform convergence results of Theorems 9.1 and 9.2 in
\citet{keener:2011:theoretical}.
See \appsect{use_cases} for a detailed proof.  A common example to which
\lemref{assumptions_hold} would apply is where $x_n$ are well-behaved
IID data and $g_n(\theta) = \gamma(x_n, \theta)$ for an appropriately
smooth estimating function $\gamma(\cdot, \theta)$.
See \citet[Chapter 9]{keener:2011:theoretical} for more details and examples,
including applications to maximum likelihood estimators on unbounded domains.

\coreassum apply to the estimating equation.  We also require
a boundedness condition for $W$.
\begin{assumption}[Bounded weight averages] \label{assu:paper_weight_bounded}
The quantity
$\frac{1}{\sqrt{N}}\norm w_{2}$ is uniformly bounded for $w\in W$ by a finite
constant $\constw$.
\end{assumption}
Our final requirement is considerably more restrictive, and
contains the essence of whether or not $\thetaij(w)$ will be a good approximation
to $\thetapw$.
\begin{condition}[Set complexity]
\label{cond:paper_uniform_bound}There exists a $\delta\ge0$ and
a corresponding set $W_{\delta}\subseteq W$ such that
\begin{align*}
\max_{w\in W_{\delta}}\sup_{
    \theta\in\Omega_{\theta}} &\norm{\frac{1}{N}\sum_{n=1}^{N}
    \left(w_{n}-1\right)g_{n}\left(\theta\right)}_{1}  \le\delta
\quad\textrm{and} \\
\max_{w\in W_{\delta}}\sup_{
    \theta\in\Omega_{\theta}} &\norm{\frac{1}{N}\sum_{n=1}^{N}
    \left(w_{n}-1\right)h_{n}\left(\theta\right)}_{1}  \le\delta.
\end{align*}
\end{condition}
\condref{paper_uniform_bound} is central to establishing when the approximation
$\thetaij\left(w\right)$ is accurate. For a given $\delta$, $W_{\delta}$ will be
the class of weight vectors for which $\thetaij(w)$ is accurate to within order
$\delta$. Trivially, $\onevec\in W_{\delta}$ for $\delta=0$, so $W_{\delta}$ is
always non-empty, even for arbitrarily small $\delta$. The trick will be to
choose a small $\delta$ that still admits a large class $W_{\delta}$ of weight
vectors. In \sect{methods_examples} we will discuss
\condref{paper_uniform_bound} in more depth, but it will help to first state our
main theorem.
\begin{defn}
\label{defref:constants_definition}  The following constants are given by
quantities in \paperallcoreassum.
\begin{align*}
    \constij &:= 1+D \constw \liph \constop \\
    \deltasize &:=
        \min\left\{ \thetasize\constop[-1],
                    \frac{1}{2}\constij^{-1}\constop[-1]\right\}.
\end{align*}
\end{defn}
Note that, although the parameter dimension $D$ occurs explicitly only once in
\defrefref{constants_definition}, all of $\constw$, $\constop$, and $\liph$ in
general might also contain dimension dependence. Additionally, the bound
$\delta$ in \condref{paper_uniform_bound}, a measure of the set complexity of
the parameters, will typically depend on dimension. However, the particular
place where the parameter dimension enters will depend on the problem and
asymptotic regime, and our goal is to provide an adaptable toolkit for a wide
variety of problems.

We are now ready to state our main result.
\begin{thm}[Error bound for the approximation]
\label{thmref:paper_ij_error}Under
\paperallcoreassum and \condref{paper_uniform_bound},
\begin{align*}
\delta\le \deltasize \Rightarrow
\max_{w\in W_{\delta}}\norm{\thetaij\left(w\right)-{\thetapw}}_{2}
    \le 2 \constop[2] \constij \delta^{2}.
\end{align*}
\end{thm}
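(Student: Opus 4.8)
The plan is to view $\thetaij(w)$ as a single quasi-Newton step toward the true root $\thetapw$ of the weighted equation $G(\cdot,w)=0$, and to control the gap between that step and $\thetapw$ with a second-order remainder. First I would put the approximation in a more convenient form. Because $\thetaone$ solves the all-ones equation, $G(\thetaone,\onevec)=0$, so $G(\thetaone,\wdiff)=G(\thetaone,w)$ and therefore, by \defrefref{ij_definition}, $\thetaij(w)=\thetaone-\hone^{-1}G(\thetaone,w)$. This exhibits $\thetaij(w)$ as a Newton step for $G(\cdot,w)$ taken from $\thetaone$, except that the fixed matrix $\hone$ replaces the exact Jacobian $H(\cdot,w)$; the entire difficulty is that $\hone$ is evaluated at the wrong parameter and the wrong weights.

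Next I would invoke \lemref{continuous_invertibility} to guarantee that, for $\delta\le\deltasize$, the root $\thetapw$ exists, is the unique solution in a ball of radius $\constop\delta$ about $\thetaone$, and hence obeys the a priori bound $\norm{\thetapw-\thetaone}_2\le\constop\delta$. The first branch $\thetasize\constop[-1]$ of $\deltasize$ in \defrefref{constants_definition} is what ensures $\constop\delta\le\thetasize$, so that the whole segment $\theta_t:=\thetaone+t(\thetapw-\thetaone)$, $t\in[0,1]$, lies in the region $\thetaball$ where \assuref{paper_lipschitz} is valid. Applying the fundamental theorem of calculus to $G(\cdot,w)$ along this segment gives $G(\thetaone,w)=-\bar H(\thetapw-\thetaone)$ with $\bar H:=\int_0^1 H(\theta_t,w)\,dt$. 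Provided $\bar H$ is invertible (verified below) we have $\thetapw=\thetaone-\bar H^{-1}G(\thetaone,w)$, and subtracting the reformulated expression for $\thetaij(w)$ yields the error identity
\[ \thetaij(w)-\thetapw=\big(\bar H^{-1}-\hone^{-1}\big)G(\thetaone,w)=\bar H^{-1}\big(\hone-\bar H\big)\hone^{-1}G(\thetaone,w), \]
which exposes the error as a product of factors I can bound one at a time.

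The crux is the bound $\norm{\hone-\bar H}_{op}\le\constij\delta$. I would split the integrand as $H(\theta_t,w)-\hone=H(\thetaone,\wdiff)+\big(H(\theta_t,w)-H(\thetaone,w)\big)$. The first term is handled directly by \condref{paper_uniform_bound}: since the operator norm of a $D\times D$ matrix is dominated by its vectorized $L_1$ norm, $\norm{H(\thetaone,\wdiff)}_{op}\le\delta$. For the second term I would pass to the Frobenius norm, apply Cauchy--Schwarz over $n$ to separate the weights, bound $\tfrac{1}{\sqrt N}\norm{w}_2\le\constw$ via \assuref{paper_weight_bounded} and the increment by \assuref{paper_lipschitz}, and convert back to the $L_1$ norm at a cost of a factor $D$; using $\norm{\theta_t-\thetaone}_2\le\constop\delta$ this gives $\norm{H(\theta_t,w)-H(\thetaone,w)}_{op}\le D\constw\liph\constop\delta$. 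Adding the two contributions and integrating over $t$ yields $\norm{\hone-\bar H}_{op}\le(1+D\constw\liph\constop)\delta=\constij\delta$, matching \defrefref{constants_definition}. The remaining factors are immediate: $\norm{\hone^{-1}}_{op}\le\constop$ by \assuref{paper_hessian}, $\norm{G(\thetaone,w)}_2\le\delta$ again by \condref{paper_uniform_bound}, and a Neumann-series perturbation of $\hone$ shows that when $\norm{\hone-\bar H}_{op}\le\constij\delta\le\tfrac12\constop[-1]$ --- guaranteed by the second branch $\tfrac12\constij^{-1}\constop[-1]$ of $\deltasize$ --- the matrix $\bar H$ is invertible with $\norm{\bar H^{-1}}_{op}\le2\constop$. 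Multiplying the four bounds gives $\norm{\thetaij(w)-\thetapw}_2\le 2\constop\cdot\constij\delta\cdot\constop\cdot\delta=2\constop[2]\constij\delta^2$, and since every bound holds uniformly over $w\in W_\delta$, so does the conclusion.

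The step I expect to be the main obstacle is the apparent circularity at the heart of the argument: controlling $\norm{\hone-\bar H}_{op}$ requires the a priori bound $\norm{\thetapw-\thetaone}_2\le\constop\delta$, yet merely establishing that $\thetapw$ exists (and that $\bar H$ and the Jacobians along the segment are invertible) already demands a perturbation argument of the same type. This is exactly the content I would relegate to \lemref{continuous_invertibility}, whose proof must run a self-contained contraction or continuation argument rather than the one-shot expansion above. Once existence and the radius-$\constop\delta$ localization are in hand, the rest is careful bookkeeping among the operator, Frobenius, $L_1$, and $L_2$ norms --- which is precisely where the explicit dimension factor $D$ and the weight constant $\constw$ enter $\constij$ --- together with the two smallness conditions packaged into $\deltasize$, one keeping the iterate inside the Lipschitz region and the other keeping $\bar H^{-1}$ bounded (the source of the factor of two in the final constant).
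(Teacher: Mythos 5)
Your proposal is correct and follows essentially the same route as the paper's proof: the same error identity $\thetaij\left(w\right)-\thetapw=\bar{H}^{-1}\left(\hone-\bar{H}\right)\hone^{-1}G\left(\thetaone,w\right)$ with $\bar{H}$ the integrated Hessian, the same splitting of $\hone-\bar{H}$ into a weight-perturbation term (bounded by \condref{paper_uniform_bound}) and a Lipschitz term (bounded via Cauchy--Schwarz, \assuref{paper_weight_bounded}, and the localization $\norm{\thetapw-\thetaone}_{2}\le\constop\delta$), and the same Neumann-type perturbation yielding $\norm{\bar{H}^{-1}}_{op}\le2\constop$ under the second branch of $\deltasize$. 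The only point worth flagging is that the circularity you anticipate does not actually arise in the paper: the a priori bound $\norm{\thetapw-\thetaone}_{2}\le\constop\delta$ is \lemref{theta_difference_bound} (not \lemref{continuous_invertibility}), and it is proved by Taylor-expanding $G\left(\cdot,\onevec\right)$ --- rather than $G\left(\cdot,w\right)$ --- around $\thetapw$, so the integrated Hessian involved has all-ones weights and is invertible directly by \assuref{paper_hessian} uniformly over $\Omega_{\theta}$, with no contraction or continuation argument needed.
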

We stress that \thmrefref{paper_ij_error} bounds only the difference between
$\thetaij(w)$ and $\thetapw$.  \thmrefref{paper_ij_error} alone does not
guarantee that $\thetaij(w)$ converges to any hypothetical infinite population
quantity. We see this as a strength, not a weakness.  To begin with, convergence
to an infinite population requires stronger assumptions.  Contrast, for example,
the Fr{\'e}chet differentiability work of \citet{clarke:1983:uniqueness}, on which
our work is based, with the stricter requirements in the proof of consistency in
\cite{shao:1993:jackknifemestimator}.  Second, machine learning problems may not
naturally admit a well-defined infinite population, and the dataset at hand may
be of primary interest. Finally, by  analyzing a particular sample rather than a
hypothetical infinite population, we can bound the error in terms of the
quantities $\constij$ and $\deltasize$, which can actually be calculated from the
data at hand.

Still, \thmrefref{paper_ij_error} is useful to prove asymptotic
results about the difference $\norm{\thetaij\left(w\right)-{\thetapw}}_{2}$.
As an illustration, we now show that the uniform consistency of leave-$k$-out
CV follows from \thmrefref{paper_ij_error} by a straightforward
application of H{\"o}lder's inequality.
\begin{cor}[Consistency for leave-$k$-out CV]
\label{cor:paper_k_means_consistent}
Assume that \paperallcoreassum hold uniformly for all $N$. Fix an integer
$k$, and let
$$
W_{k}:=\left\{ w:w_{n}=0\textrm{ in }k\textrm{ entries and }1\textrm{ otherwise}\right\} .
$$
Then, for all $N$, there exists a constant $C_K$ such that
\begin{align*}
\sup_{w\in W_{k}}\norm{\thetaij\left(w\right)-{\thetapw}}_{2}
    & \le C_K \frac{\norm g_{\infty}^2}{N^2} \\
    & \le C_K \frac{\max\left\{\constg, \consth\right\}^2}{N}.
\end{align*}
\end{cor}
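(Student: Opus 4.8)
The plan is to realize the leave-$k$-out set $\wcv$ as one of the sets $W_\delta$ appearing in \condref{paper_uniform_bound} for an explicit, small $\delta$, and then simply read off the conclusion of \thmrefref{paper_ij_error}. Everything hinges on the sparse structure of leave-$k$-out weights: for every $w\in\wcv$ the difference $\wdiff=w-\onevec$ has exactly $k$ entries equal to $-1$ and all other entries equal to $0$, so it is supported on a set $S$ of size $k$ and satisfies $\norm{\wdiff}_1=k$. (Note also that $\frac{1}{\sqrt N}\norm{w}_2=\sqrt{(N-k)/N}\le 1$, so \assuref{paper_weight_bounded} holds automatically for $\wcv$ with $\constw\le 1$, and under the hypothesis that \paperallcoreassum hold uniformly in $N$ all of $\constop$, $\constij$, and $\deltasize$ are fixed positive constants.)

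First I would verify \condref{paper_uniform_bound}. Writing $\norm{g}_\infty:=\sup_{\theta}\max_{n,i}\lvert[g_n(\theta)]_i\rvert$ and $\norm{h}_\infty:=\sup_\theta\max_{n,i,j}\lvert[h_n(\theta)]_{ij}\rvert$, Hölder's inequality paired at exponents $(1,\infty)$ gives, for each coordinate $i$ and each $\theta$, the bound $\bigl\lvert\frac{1}{N}\sum_{n=1}^N(w_n-1)[g_n(\theta)]_i\bigr\rvert\le\frac{1}{N}\norm{\wdiff}_1\max_n\lvert[g_n(\theta)]_i\rvert=\frac{k}{N}\max_n\lvert[g_n(\theta)]_i\rvert$. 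Summing the $D$ coordinates and taking the maximum over $w\in\wcv$ bounds the first line of \condref{paper_uniform_bound} by $\frac{kD}{N}\norm{g}_\infty$, and the identical estimate over the $D^2$ entries of each $h_n$ bounds the second line by $\frac{kD^2}{N}\norm{h}_\infty$. Hence \condref{paper_uniform_bound} holds with $\delta=\frac{k}{N}\max\{D\norm{g}_\infty,\,D^2\norm{h}_\infty\}\le\frac{kD^2}{N}\max\{\norm{g}_\infty,\norm{h}_\infty\}$. The decisive point is that this $\delta$ scales like $1/N$ with $k$ fixed: it is precisely the $(1,\infty)$ pairing, rather than a Cauchy--Schwarz $(2,2)$ pairing, that produces the $k/N$ rate instead of the useless $\sqrt k/\sqrt N$.

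Next I would check the hypothesis $\delta\le\deltasize$ of \thmrefref{paper_ij_error}. Since $\norm{g}_\infty\le\norm{g(\theta)}_2\le\sqrt N\constg$ and $\norm{h}_\infty\le\sqrt N\consth$ by \assuref{paper_bounded} (the vectorized $\infty$-norm is dominated by the vectorized $2$-norm), we get $\delta\le\frac{kD^2}{\sqrt N}\max\{\constg,\consth\}=O(N^{-1/2})$, which drops below the fixed threshold $\deltasize$ once $N\ge N_0:=(kD^2\max\{\constg,\consth\}/\deltasize)^2$. For such $N$ the theorem applies verbatim and yields $\sup_{w\in\wcv}\norm{\thetaij(w)-\thetapw}_2\le 2\constop[2]\constij\delta^2\le 2\constop[2]\constij\frac{k^2D^4}{N^2}\max\{\norm{g}_\infty,\norm{h}_\infty\}^2$, which is the first displayed bound after absorbing the $N$-independent factor $2\constop[2]\constij k^2D^4$ into $C_K$ (reading the statement's $\norm{g}_\infty$ as the larger of the two sup-norms). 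The second displayed inequality then follows at once by substituting $\max\{\norm{g}_\infty,\norm{h}_\infty\}^2\le N\max\{\constg,\consth\}^2$.

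The only genuinely delicate step is the threshold $\delta\le\deltasize$, which I secured only for $N\ge N_0$; to state a single $C_K$ valid for \emph{all} $N$ I would dispose of the finitely many $N<N_0$ by compactness. For those $N$, $\thetapw\in\Omega_\theta$ while $\norm{\thetaij(w)-\thetaone}_2\le\constop\norm{G(\thetaone,\wdiff)}_2\le\constop\frac{k}{N}\max_n\norm{g_n(\thetaone)}_2$ is bounded (again using the $k$-sparsity of $\wdiff$ together with \assuref{paper_bounded}), so the left-hand side is uniformly bounded over the finite range $N<N_0$ whereas the right-hand side $\max\{\constg,\consth\}^2/N$ is bounded below by a positive constant there; enlarging $C_K$ to cover this initial segment finishes the argument. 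I expect this bookkeeping --- choosing one constant that simultaneously serves the asymptotic regime and the finite initial segment --- to be the only real obstacle, the remainder being the routine Hölder estimate described above.
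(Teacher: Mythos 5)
Your proposal is correct and follows essentially the same route as the paper's own proof: verify \condref{paper_uniform_bound} for $W_k$ via H{\"o}lder's inequality with the $(1,\infty)$ pairing, obtaining $\delta = O\left(k \norm{g}_{\infty}/N\right) = O\left(N^{-1/2}\right)$, and then invoke \thmrefref{paper_ij_error}. If anything you are more careful than the paper's terse proof, which drops the dimension factors and silently restricts to ``$N$ large enough'' despite the statement's ``for all $N$''; your compactness argument covering the finitely many $N$ below the threshold $N_0$ patches that gap.
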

\begin{proof}
For $w\in W_{k}$, $\frac{\norm{\wdiff}_{2}}{\sqrt{N}}=\sqrt{\frac{K}{N}}.$
Define $C_{gh} := \max\left\{\constg, \consth\right\}$.
By \assuref{paper_bounded},
$\norm g_{2}/\sqrt{N}\le C_{gh}$ and $\norm h_{2}/\sqrt{N}\le C_{gh}$
for all $N$. By H{\"o}lder's inequality,
\begin{align*}
\lefteqn{ \sup_{w\in W}\sup_{\theta\in\Omega_{\theta}}
    \norm{\frac{1}{N}\sum_{n=1}^{N}\left(w_{n}-1\right)g_{n}
        \left(\theta\right)}_{1} } \\
    & \le \sup_{w\in W} \norm{w - \onevec}_1
    	\sup_{\theta\in\Omega_{\theta}} \frac{\norm g_{\infty}}{N}
    =
    	K \frac{\norm g_{\infty}}{N} \le
    	K \frac{C_{gh}}{\sqrt{N}},
\end{align*}
with a similar bound for $\norm h_{2}$. Consequently, for $N$ large
enough, \condref{paper_uniform_bound} is satisfied with $W_{\delta}=W_{k}$
and either $\delta=K \frac{\norm g_{\infty}}{N}$
or $\delta=K \frac{C_{gh}}{\sqrt{N}}$.
The result then follows from \thmrefref{paper_ij_error}.
\end{proof}

\section{Examples}\label{sec:methods_examples}

The moral of \thmrefref{paper_ij_error} is that, under \paperallcoreassum and
\condref{paper_uniform_bound},
$\norm{\thetaij-{\thetaw}}=O\left(\delta^{2}\right)$ for $w\in W_{\delta}$. That
is, if we can make $\delta$ small enough, $W_{\delta}$ big enough, and still
satisfy \condref{paper_uniform_bound}, then $\thetaij\left(w\right)$ is a good
approximation to $\thetaw$ for ``most'' $w$, where ``most'' is defined as the size
of $W_\delta$. So it is worth taking a moment to develop some intuition for
\condref{paper_uniform_bound}. We have already seen in
\corollary{paper_k_means_consistent} that $\thetaij$ is, asymptotically, a good
approximation for leave-$k$-out CV uniformly in $W$. We now discuss some additional
cases: first, a worst-case example for which $\thetaij$ is not expected to work,
second the bootstrap, and finally we revisit leave-one-out cross
validation in the context of these other two methods.

First, consider a pathological example. Let $W_{full}$ be the set of all weight
vectors that sum to $N$. Let
$n^{*}=\max_{n\in[N]}\norm{g_{n}\left(\thetaone\right)}_{1}$ be the index of the
gradient term with the largest $L_{1}$ norm, and let $w_{n^{*}}=N$ and $w_n=0$
for $n \ne n^{*}$. Then
\begin{align*}
&\sup_{\theta\in\Omega_{\theta}}
    \norm{\frac{1}{N}\sum_{n=1}^{N}
        \left(w_{n}-1\right)g_{n}\left(\theta\right)}_{1} \\
&\quad =\sup_{\theta\in\Omega_{\theta}}
    \norm{g_{n^{*}}\left(\theta\right)-
        \frac{1}{N}\sum_{n=1}^{N}g_{n}\left(\theta\right)}_{1}
    \ge\norm{g_{n^{*}}\left(\thetaone\right)}_{1}.
\end{align*}
(The last inequality uses the fact that $G\left(\thetaone,\onevec\right)=0$.) In
this case, unless the largest gradient,
$\norm{g_{n^{*}}\left(\thetaone\right)}_{1}$, is small,
\condref{paper_uniform_bound} will not be satisfied for small $\delta$, and we
would not expect $\thetaij$ to be a good estimate for $\thetaw$ for all $w\in
W_{full}$. The class $W_{full}$ is too expressive. In the language of
\condref{paper_uniform_bound}, for some small fixed $\delta$, $W_{\delta}$ will
be some very restricted subset of $W_{full}$ in most realistic situations.

Now, suppose that we are using $B$ bootstrap weights,
$w_{b}^{*}\stackrel{iid}{\sim} \mathrm{Multinomial}\left(N,N^{-1}\right)$ for
$b=1,...,B$, and analyzing an optimization problem as defined in
\subsect{problem_definition}.
For a given $w_{b}^{*}$, a dataset $x_{1}^*,...,x_{N}^*$ formed by taking
$w_{b,n}^{*}$ copies of datapoint $x_n$ is equivalent in distribution to
$N$ IID samples with replacement from the empirical distribution
on $\left(x_{1},...,x_{N}\right)$.  In this notation, we then have
\begin{align*}
&\frac{1}{N}\sum_{n=1}^{N}\left(w_{b}^{*}-1\right)g_{n}\left(\theta\right) = \\
&\quad \frac{1}{N}\sum_{n=1}^{N}\frac{\partial
    f\left(\theta,x_{n}^{*}\right)}{\partial\theta} -
        \frac{1}{N}\sum_{n=1}^{N}\frac{\partial
    f\left(\theta,x_{n}\right)}{\partial\theta}.
\end{align*}
In this case, \condref{paper_uniform_bound} is a uniform bound on a centered
empirical process of derivatives of the objective function. Note that estimating
sample variances by applying the IJ with bootstrap weights is equivalent to the
ordinary delta method based on an asymptotic normal approximation \citep[Chapter
21]{efron:1982:jackknife}.  In order to provide an approximation to the
bootstrap that retains benefits (such as the faster-than-normal convergence to
the true sampling distribution described by \citet{hall:2013:bootstrap}), one
must consider higher-ordered Taylor expansions of $\thetapw$.  We leave this for
future work.

Finally, let us return to leave-one-out CV. In this case, $w_{n}-1$ is nonzero
for exactly one entry. Again, we can choose to leave out the
adversarially-chosen $n^{*}$ as in the first pathological example.  However,
unlike the pathological example, the leave-one-out CV weights are constrained to
be closer to $\onevec$---specifically, we set $w_{n^{*}}=0$, and let $w$ be one
elsewhere. Then \condref{paper_uniform_bound} requires
$\sup_{\theta\in\Omega_{\theta}}\norm{\frac{1}{N}g_{n^{*}}\left(\theta\right)}_{1}  \le\delta.$
In contrast to the pathological example, this supremum will get smaller as $N$
increases as long as $\norm{g_{n^{*}}\left(\theta\right)}_{1}$ grows more slowly
than $N$. For this reason, we expect leave-one-out (and, indeed, leave-$k$-out
for fixed $k$) to be accurately approximated by $\thetaij$ in many cases of
interest, as stated in \corollary{paper_k_means_consistent}.

\section{Related Work}\label{sec:relatedwork}
Although the idea of forming a linear approximation to the re-weighting of an
M-estimator has a long history, we nevertheless contribute in a number of ways.
By limiting ourselves to approximating the exact reweighting on a particular
dataset, we both loosen the strict requirements from the statistical literature
and generalize the existing results from the machine learning literature.

The jackknife is often favored over the IJ in the
statistics literature because of the former's simple computational approach, as
well as perceived difficulties in calculating the necessary derivatives when
some of the parameters are implicitly defined via optimization
\citep[Chapter 2.1]{shao:2012:jackknife}
(though exceptions exist; see, e.g., \citet{wager:2014:confidence}).
The brute-force
approach of the jackknife is, however, a liability in large-scale machine
learning problems, which are generally extremely expensive to re-optimize.
Furthermore, and critically, the complexity and tedium of calculating the
necessary derivatives is entirely eliminated by modern automatic differentiation
\citep{baydin:2015:automatic, maclaurin:2015:autograd}.

Our work is based on the proof of the Fr{\'e}chet differentiability of
M-estimators of \citet{clarke:1983:uniqueness}. In classical statistics,
Fr{\'e}chet differentiability is typically used to describe the asymptotic
behavior of functionals of the empirical distribution in terms of a functional
\citep{mises:1947:asymptotic,fernholz:1983:mises}. Since
\citet{clarke:1983:uniqueness} was motivated by such asymptotic questions, he
studied the Fr{\'e}chet derivative evaluated at a continuous probability
distribution for function classes that included delta functions. This focus led
to the requirement of a bounded gradient.  However, unbounded gradients are
ubiquitous in both statistics and machine learning, and an essential
contribution of the current paper is to remove the need for bounded gradients.

There exist proofs of the consistency of the (non-infinitesimal) jackknife that
allow for unbounded gradients.  For example, it is possible that the proofs of
\citet{reeds:1978:jackknifing}, which require a smoothness assumption similar to
our \assuref{paper_lipschitz}, could be adapted to the IJ.
However, the results of \citet{reeds:1978:jackknifing}---as well as those of
\citet{clarke:1983:uniqueness} and subsequent applications such as those of
\citet{shao:2012:jackknife}---are asymptotic and applicable only to IID data. By
providing finite sample results for a fixed dataset and weight set, we are able
to provide a template for proving accuracy bounds for more generic probability
distributions and re-weighting schemes.

A number of recent machine learning papers have derived approximate linear
versions of leave-one-out estimators.  \citet{KohL17} consider approximating the
effect of leaving out one observation at a time to discover influential
observations and construct adversarial examples, but provide little supporting
theory. \citet{BeiramiRST17} provide rigorous proofs for an approximate
leave-one-out CV estimator; however, their estimator requires computing a new
inverse Hessian for each new weight at the cost of a considerable increase in
computational complexity.  Like the classical statistics literature,
\citet{BeiramiRST17} assume that the gradients are bounded for all $N$.  When
$\norm g_{\infty}^2$ in \corollary{paper_k_means_consistent} is finite for all
$N$, we achieve the same $N^{-2}$ rate claimed by \citet{BeiramiRST17} for
leave-one-out CV although we use only a single matrix inverse. \citet{RadM18}
also approximate leave-one-out CV, and prove tighter bounds for the error of
their approximation than we do, but their work is customized to leave-one-out CV
and makes much more restrictive assumptions (e.g., Gaussianity).

\section{Simulated Experiments}

We begin the empirical demonstration of our method on two simple generalized
linear models: logistic and Poisson regression.\footnote{Leave-one-out CV may
not be the most appropriate estimator of generalization error in this setting
\citep{rosset:2018:fixed}, but this section is intended only to provide simple
illustrative examples.} In each case, we generate a synthetic dataset $Z =
\{(x_n, y_n) \}_{n=1}^N$ from parameters $(\theta, b)$, where $\theta
\in \mathbb{R}^{100}$ is a vector of regression coefficients and $b \in
\mathbb{R}$ is a bias term. In each experiment, $x_n \in \mathbb{R}^{100}$ is
drawn from a multivariate Gaussian, and $y_n$ is a scalar drawn from a Bernoulli
distribution with the logit link or from a Poisson distribution with the
exponential link.

\begin{knitrout}
\definecolor{shadecolor}{rgb}{0.969, 0.969, 0.969}\color{fgcolor}\begin{figure}[!h]

{\centering \includegraphics[width=0.98\linewidth,height=0.784\linewidth]{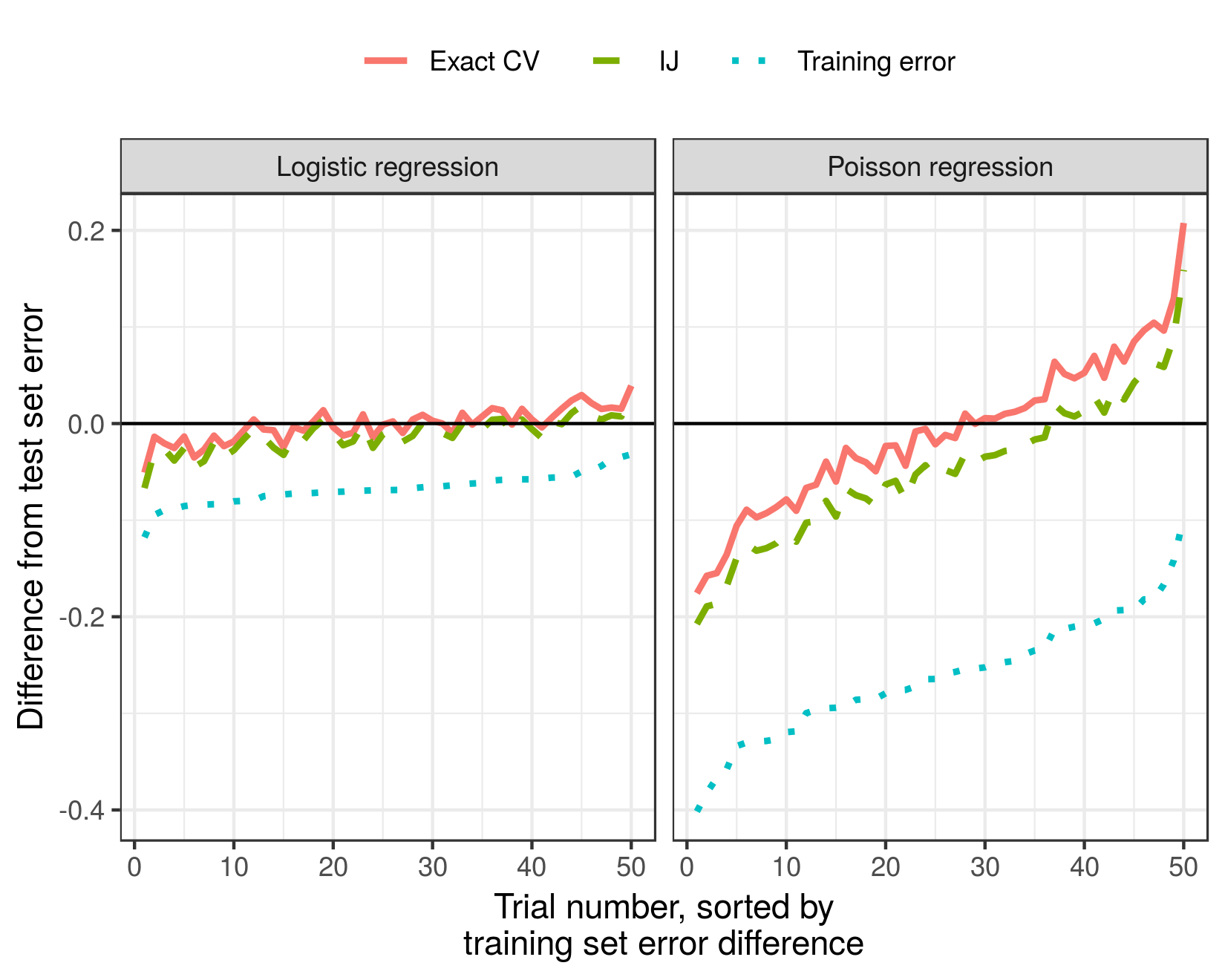} 

}

\caption[Simulated data]{Simulated data: accuracy results.}\label{fig:simulated_experiments_accuracy}
\end{figure}

\end{knitrout}
For a ground truth, we generate a large test set with $N=100{,}000$ datapoints
to measure the true generalization error. We show in
\fig{simulated_experiments_accuracy} that, over 50 randomly generated datasets,
our approximation consistently underestimates the actual error predicted by
exact leave-one-out CV; however, the difference is small relative to the
improvements they both make over the error evaluated on the training set.

\begin{knitrout}
\definecolor{shadecolor}{rgb}{0.969, 0.969, 0.969}\color{fgcolor}\begin{figure}[!h]

{\centering \includegraphics[width=0.98\linewidth,height=0.549\linewidth]{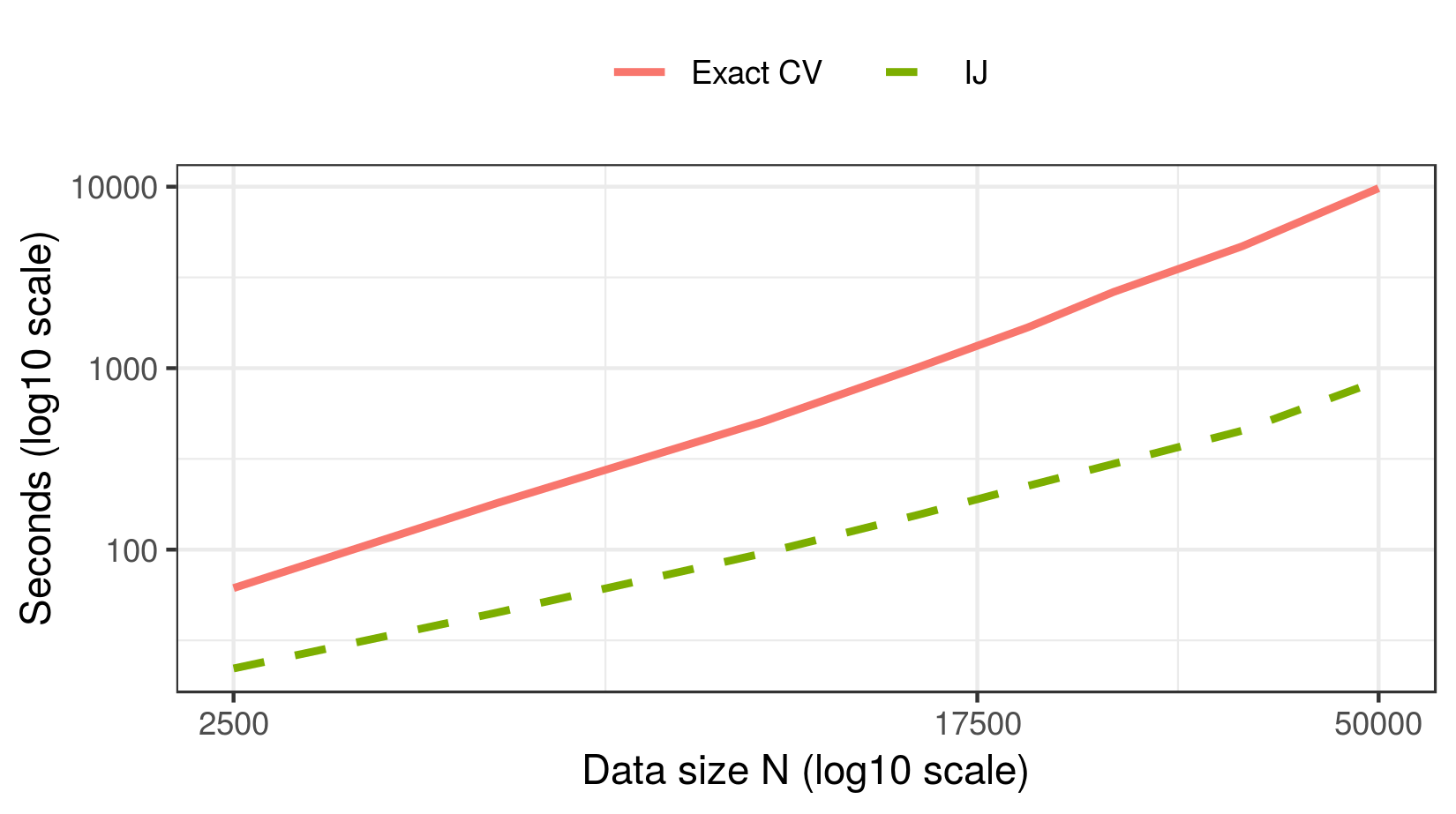} 

}

\caption[Simulated data]{Simulated data: timing results.}\label{fig:simulated_experiments_timing}
\end{figure}

\end{knitrout}
\fig{simulated_experiments_timing} shows the relative timings of our
approximation and exact leave-one-out CV on logistic regression with datasets of
increasing size. The time to run our approximation is roughly an order of
magnitude smaller.

\section{Genomics Experiments\label{sec:genomics}}

\newcommand{\approxopttime}{42}
\newcommand{\approxhesstime}{360}
\newcommand{\fullparamdim}{39,000}

\global\long\def\splinedegree{3}
\global\long\def\ntime{14}
\global\long\def\ngenes{1000}
\global\long\def\nclusters{18}
\global\long\def\covregularization{0.1}

We now consider a genomics application in which we use CV to choose the degree
of a spline smoother when clustering time series of gene expression data. Code and
instructions to reproduce our results can be found in the git repository
\href{https://github.com/rgiordan/AISTATS2019SwissArmyIJ}{rgiordan/AISTATS2019SwissArmyIJ}.
The application is also described in detail in \appsect{appendix_genomics}.

We use a publicly available data set of mice gene expression
\citep{shoemaker:2015:ultrasensitive} in which mice were infected with influenza
virus, and gene expression was assessed several times after infection. The
observed data consists of expression levels $y_{gt}$ for genes $g=1, \ldots,
n_{g}$ and time points $t=1, \ldots, n_{t}$.  In our case $n_{g}=\ngenes$ and
$n_{t}=\ntime$. Many genes behave the same way; thus, clustering the genes by
the pattern of their behavior over time allows dimensionality reduction that can
facilitate interpretation. Consequently, we wish to first fit a smoothed
regression line to each gene and then cluster the results. Following
\cite{Luan:2003:clustering}, we model the time series as a gene-specific
constant additive offset plus a B-spline basis of degree $\splinedegree$, and
the task is to choose the B-spline basis degrees of freedom using
cross-validation on the time points.

Our analysis runs in two stages---first, we regress the genes on the spline
basis, and then we cluster a transformed version of the regression fits. By
modeling in two stages, we both speed up the clustering and allow for the use of
flexible transforms of the fits. We are interested in choosing the smoothing
parameter using CV on the time points. Both the time points and the smoothing
parameter enter the regression objective directly, but they affect the
clustering objective only through the optimal regression parameters. Because the
optimization proceeds in two stages, the fit is not the optimum of any single
objective function.  However, it can still be represented as an M-estimator
(see \appsect{appendix_genomics}).

We implemented the model in \texttt{scipy} \citep{scipy} and computed all
derivatives with \texttt{autograd} \citep{maclaurin:2015:autograd}. We note that
the match between ``exact'' cross-validation (removing time points and
re-optimizing) and the IJ was considerably improved by
using a high-quality second-order optimization method.
In particular, for these
experiments, we employed the Newton conjugate-gradient trust region method
\citep[Chapter 7.1]{wright:1999:optimization} as implemented by the method
\texttt{trust-ncg} in \texttt{scipy.optimize}, preconditioned by the Cholesky
decomposition of an inverse Hessian calculated at an initial approximate
optimum.
The Hessian used for the preconditioner was with respect to
the clustering parameters only and so could be calculated quickly, in contrast
to the $\hone$ matrix used for the IJ, which includes the
regression parameters as well.
 We found that first-order or quasi-Newton
methods (such as BFGS) often got stuck or terminated at points with fairly large
gradients. At such points our method does not apply in theory nor, we found,
very well in practice.

\begin{knitrout}
\definecolor{shadecolor}{rgb}{0.969, 0.969, 0.969}\color{fgcolor}\begin{figure}[!h]

{\centering \includegraphics[width=0.98\linewidth,height=0.784\linewidth]{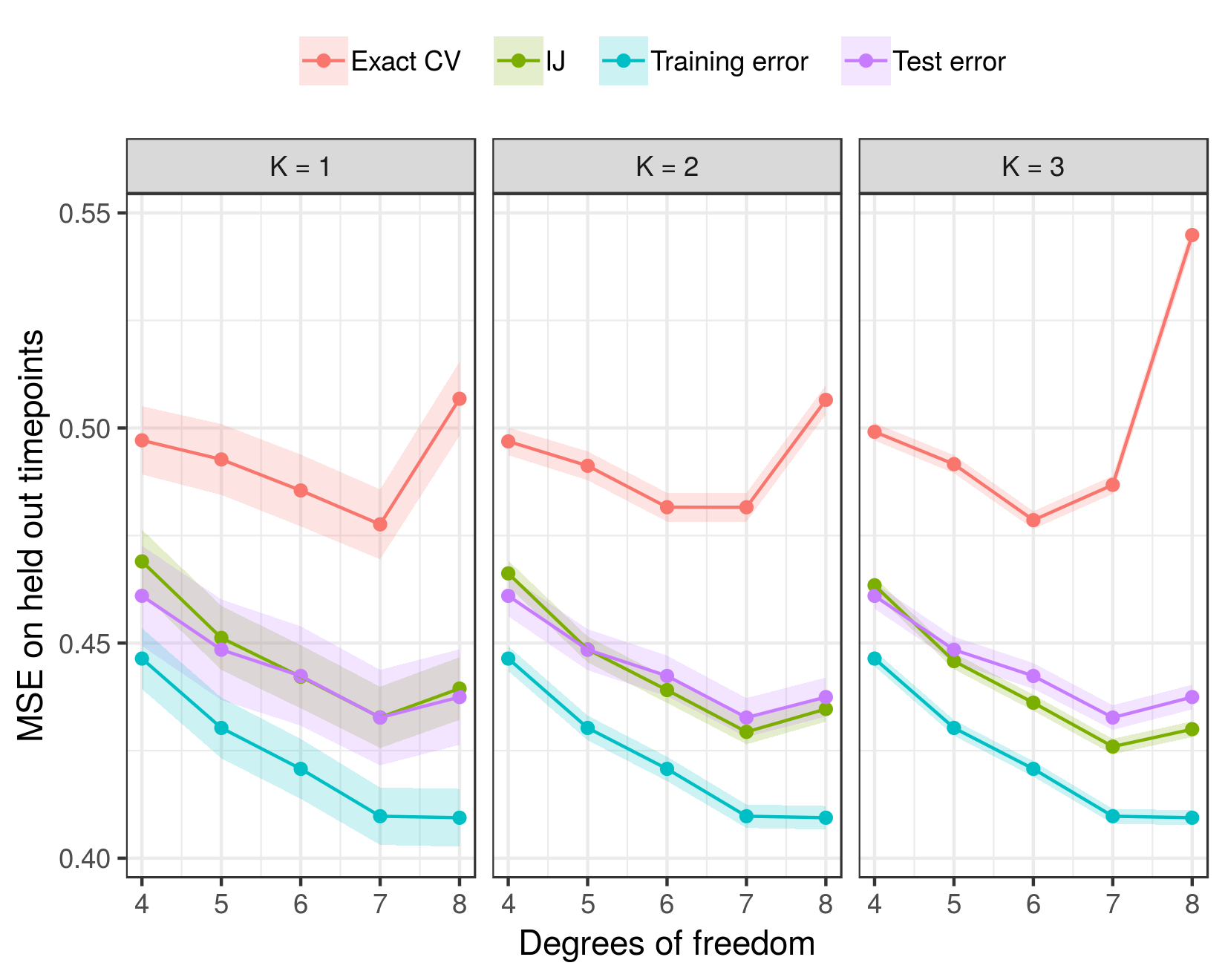} 

}

\caption[Genomics data]{Genomics data: accuracy results.}\label{fig:mse_graph}
\end{figure}

\end{knitrout}
\fig{mse_graph} shows that the IJ is a reasonably good approximation to the test
set error.\footnote{In fact, in this case, the IJ is a better predictor of test
set error than exact CV.  However, the authors have no reason at present to
believe that the IJ is a better predictor of test error than exact CV in
general.} In particular, both the IJ and exact CV capture the increase in test
error for $df=8$, which is not present in the training error. Thus we see that,
like exact CV, the IJ is able to prevent overfitting. Though the IJ
underestimates exact CV, we note that it differs from exact CV by no more than
exact CV itself differs from the true quantity of iterest, the test error.

\begin{knitrout}
\definecolor{shadecolor}{rgb}{0.969, 0.969, 0.969}\color{fgcolor}\begin{figure}[!h]

{\centering \includegraphics[width=0.98\linewidth,height=0.706\linewidth]{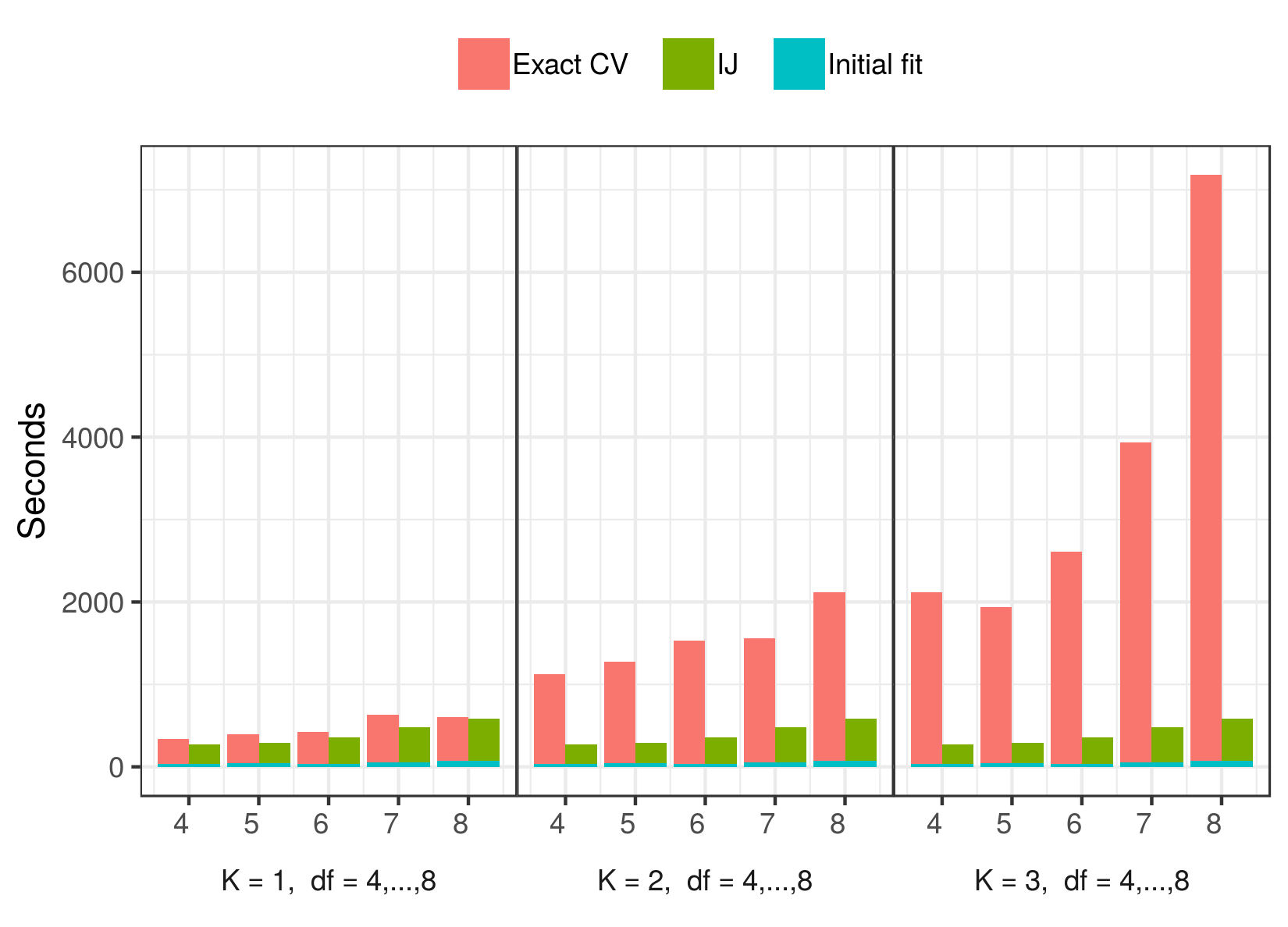} 

}

\caption[Genomics data]{Genomics data: timing results.}\label{fig:timing_graph}
\end{figure}

\end{knitrout}
The timing results for the genomics experiment are shown in \fig{timing_graph}.
For this particular problem with approximately $\fullparamdim$ parameters (the
precise number depends on the degrees of freedom), finding the initial optimum
takes about $\approxopttime$ seconds. The cost of finding the initial optimum is
shared by exact CV and the IJ, and, as shown in
\fig{timing_graph}, is a small proportion of both.

The principle time cost of the IJ is the
computation of $\hone$. Computing and inverting a dense matrix of size
$\fullparamdim$ would be computationally prohibitive.
But, for the regression objective, $\hone$ is extremely sparse and block diagonal, so computing
$\hone$ in this case took only around $\approxhesstime$ seconds.  Inverting $\hone$
took negligible time. Once we have $\hone^{-1}$, obtaining the
subsequent IJ approximations is nearly instantaneous.

The cost of refitting the model for exact CV varies by degrees of freedom
(increasing degrees of freedom increases the number of parameters) and the
number of left-out points (an increasing number of left-out datapoints increases
the number of refits). As can be seen in \fig{timing_graph}, for low degrees of
freedom and few left-out points, the cost of re-optimizing is approximately the
same as the cost of computing $\hone$.  However, as the degrees of freedom and
number of left-out points grow, the cost of exact CV increases to as much as an
order of magnitude more than that of the IJ.

\section{Conclusion}
We recommend consideration of the Swiss Army infinitesimal jackknife for modern
machine learning problems.  The large size of modern data both increases the
need for fast approximations and renders such approximations more accurate.
Furthermore, modern automatic differentiation renders many past practical
difficulties obsolete. By stepping back from the strict requirements of
classical statistical theory, we can see that the value of the infinitesimal
jackknife extends beyond its traditional application areas, while retaining
desirable generality in other respects.

\newpage
\paragraph{Acknowledgements.} We thank anonymous reviewers for their helpful
comments and suggestions. We are grateful to Nelle Varoquaux for her help with
the genomics experiments and to Pang Wei Koh for pointing out and helping to
correct an error in an earlier version of our proofs. This research was
supported in part by DARPA (FA8650-18-2-7832), an ARO YIP award, an NSF CAREER
award, and the CSAIL-MSR Trustworthy AI Initiative. Ryan Giordano was supported
by the Gordon and Betty Moore Foundation through Grant GBMF3834 and by the
Alfred P. Sloan Foundation through Grant 2013-10-27 to the University of
California, Berkeley. Runjing Liu was supported by the NSF Graduate Research
Fellowship.

\bibliography{references}
\bibliographystyle{plainnat}

\newpage
\onecolumn
\appendix

\newcommand{\globalassum}{Assumptions \ref{assu:paper_smoothness}---\ref{assu:paper_bounded} }

\section{Detailed assumptions, lemmas, and proofs\label{sec:appendix_proofs}}

\subsection{Tools}

We begin by stating two general propositions that will be useful.
First, we show that a version of Cauchy-Schwartz can be applied to
weighted sums of tensors.
%
\begin{prop}
\label{propref:tensor_cauchy_schwartz}Tensor array version of H{\"o}lder's
inequality. Let $w$ be an array of scalars and let $a=\left(a_{1},...,a_{N}\right)$
be an array of tensors, were each $a_{n}$ is indexed by $i=1,\ldots,D_{A}$
($i$ may be a multi-index---e.g., if $A$ is a $D\times D$ matrix,
then $i=\left(j,k\right)$, for $j,k\in\left[D\right]$ and $D_{A}=D^{2}$).
Let $p,q\in\left[1,\infty\right]$ be two numbers such that $p^{-1}+q^{-1}=1$.
Then
\begin{align*}
\norm{\frac{1}{N}\sum_{n=1}^{N}w_{n}a_{n}}_{1} & \le\frac{D_{A}^{\frac{1}{p}}}{N}\norm w_{p}\norm a_{q}.
\end{align*}
In particular, with $p=q=2$,
\begin{align*}
\norm{\frac{1}{N}\sum_{n=1}^{N}w_{n}a_{n}}_{1} & \le\sqrt{D_{A}}\frac{\norm w_{2}}{\sqrt{N}}\frac{\norm a_{2}}{\sqrt{N}}.
\end{align*}
\end{prop}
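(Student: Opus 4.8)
The plan is to reduce the tensor statement to two successive applications of the ordinary scalar Hölder inequality: one across the sample index $n$ and one across the tensor (multi-)index $i$. First I would unfold the definition of the vectorized $L_1$ norm on the left-hand side. Writing $(a_n)_i$ for the $i$-th entry of the tensor $a_n$, with $i$ ranging over the $D_A$ (multi-)indices, the left side equals $\frac{1}{N}\sum_{i=1}^{D_A}\left|\sum_{n=1}^N w_n (a_n)_i\right|$, and the triangle inequality pushes the absolute value inside each summand to give the upper bound $\frac{1}{N}\sum_{i=1}^{D_A}\sum_{n=1}^N |w_n|\,|(a_n)_i|$.

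Next, for each fixed $i$, I would apply scalar Hölder in the index $n$ with exponents $p$ and $q$, obtaining $\sum_{n=1}^N |w_n|\,|(a_n)_i| \le \norm{w}_p\left(\sum_{n=1}^N |(a_n)_i|^q\right)^{1/q}$. Abbreviating $b_i := \left(\sum_{n=1}^N |(a_n)_i|^q\right)^{1/q}$, this bounds the whole expression by $\frac{\norm{w}_p}{N}\sum_{i=1}^{D_A} b_i$, where $w$ has been factored cleanly out of the $i$-sum because $\norm{w}_p$ does not depend on $i$.

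The key remaining step is to bound $\sum_{i=1}^{D_A} b_i$ by the full vectorized norm $\norm{a}_q = \left(\sum_{i=1}^{D_A} b_i^q\right)^{1/q}$. This is a second Hölder inequality, now over the index $i$, pairing $(b_i)$ with the all-ones vector and using that $p$ is precisely the conjugate exponent of $q$: $\sum_{i=1}^{D_A} b_i \le \left(\sum_{i=1}^{D_A} 1\right)^{1/p}\left(\sum_{i=1}^{D_A} b_i^q\right)^{1/q} = D_A^{1/p}\norm{a}_q$. Chaining the two bounds yields $\frac{D_A^{1/p}}{N}\norm{w}_p\norm{a}_q$, and specializing to $p=q=2$ gives the stated corollary, since $D_A^{1/2}=\sqrt{D_A}$ and the two factors of $N$ regroup as $\frac{\norm{w}_2}{\sqrt N}\frac{\norm{a}_2}{\sqrt N}$.

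The computation is routine; the only thing worth flagging is that the dimension factor $D_A^{1/p}$ comes out of the \emph{second} Hölder step over the tensor index, not the first, and that the matching exponent $1/p$ is forced by the constraint $p^{-1}+q^{-1}=1$. The boundary cases $p=1,q=\infty$ (and vice versa) need no separate treatment, as they are covered by the usual Hölder conventions with the max norm. There is no real obstacle beyond recognizing that a single application of Hölder does not suffice: the array-of-tensors structure requires separating the $n$-sum from the $i$-sum and applying the inequality once along each.
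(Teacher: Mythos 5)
Your proof is correct and takes essentially the same route as the paper's: unfold the vectorized $L_1$ norm over the tensor index $i$, apply the scalar H{\"o}lder inequality over $n$ for each fixed $i$, and then pay the dimension factor $D_A^{1/p}$ when collapsing the sum over $i$. The only cosmetic difference is in that last step, where the paper bounds $\sum_{i=1}^{D_A} b_i \le D_A^{1/p}\left(\sum_{i=1}^{D_A} b_i^q\right)^{1/q}$ by Jensen's inequality (concavity of $x\mapsto x^{1/q}$) while you use a second H{\"o}lder application against the all-ones vector; these two devices yield the identical norm comparison and are fully interchangeable here.
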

\begin{proof}
The conclusion follows from the ordinary H{\"o}lder's inequality
applied term-wise to $n$ and Jensen's inequality applied to the indices
$i$.

\begin{align*}
\norm{\frac{1}{N}\sum_{n=1}^{N}w_{n}a_{n}}_{1} & =\sum_{i=1}^{D_{A}}\left|\frac{1}{N}\sum_{n=1}^{N}w_{n}\left(a_{n}\right)_{i}\right|\\
 & \le\frac{1}{N}\sum_{i=1}^{D_{A}}\left|\left(\sum_{n=1}^{N}\left|w_{n}\right|^{p}\right)^{\frac{1}{p}}\left(\sum_{n=1}^{N}\left|\left(a_{n}\right)_{i}\right|^{q}\right)^{\frac{1}{q}}\right|\text{(H{\"o}lder)}\\
 & =\frac{1}{N}\norm w_{p}\frac{D_{A}}{D_{A}}\sum_{i=1}^{D_{A}}\left(\sum_{n=1}^{N}\left|\left(a_{n}\right)_{i}\right|^{q}\right)^{\frac{1}{q}}\\
 & \le\frac{1}{N}\norm w_{p}D_{A}\left(\frac{1}{D_{A}}\sum_{i=1}^{D_{A}}\sum_{n=1}^{N}\left|\left(a_{n}\right)_{i}\right|^{q}\right)^{\frac{1}{q}}\textrm{ (Jensen applied to }i\textrm{)}\\
 & =\frac{1}{N}\norm w_{p}D_{A}\left(\frac{1}{D_{A}}\sum_{n=1}^{N}\norm{a_{n}}_{q}^{q}\right)^{\frac{1}{q}}\\
 & =\frac{1}{N}\norm w_{p}D_{A}^{1-\frac{1}{q}}\norm a_{q}\\
 & =\frac{D_{A}^{\frac{1}{p}}}{N}\norm w_{p}\norm a_{q}.
\end{align*}
\end{proof}

%
Next, we prove a relationship between the term-wise difference between
matrices and the difference between their operator norms. It is well-known
that the minimum eigenvalue of a non-singular matrix is continuous
in the entries of the matrix. In the next proposition, we quantify
this continuity for the $L_{1}$ norm.


\begin{prop}
\label{propref:operator_norm_continuity}
Let $A$ and $B$ be two square matrices of the same size.
Let $\norm{A^{-1}}_{op}\le \constop$ for some finite $\constop$, and  Then
\begin{align*}
\norm{A-B}_{1}\le \frac{1}{2} (\constop)^{-1} &
    \quad\Rightarrow\quad\norm{B^{-1}}_{op} \le 2 \constop.
\end{align*}
\begin{proof}
We will use the results stated in Theorem 4.29 of \citet{schott:2016:matrix} and
the associated discussion in Example 4.14, which establish the following result.
Let $A$ be a square, nonsigular matrix, and let $I$ be the identity matrix of
the same size.  Let $\norm{\cdot}$ denote any matrix norm satisfying $\norm{I} =
1$.  Let $D$ be a matrix of the same size as $A$ satisfying
\begin{align}\eqlabel{ab_matrix_condition}
\norm{A^{-1}} \norm{D} \le 1.
\end{align}
Then
\begin{align}\label{eq:matrix_norm_continuity}
    \norm{A^{-1} - (A + D)^{-1}} \le
    \frac{\norm{A^{-1}}\norm{D}}{1 - \norm{A^{-1}\norm{D}}} \norm{A^{-1}}.
\end{align}
We will apply \eqref{matrix_norm_continuity} using the operator norm
$\norm{\cdot}_{op}$, for which $\norm I_{op}=1$ and with $D := B - A$.
Because $\norm{A^{-1}}_{op}\le \constop$, $A$ is invertible.

Assume that $\norm{A-B}_{1}\le \frac{1}{2} (\constop)^{-1}$.  First, note that
\begin{align}\label{eq:ab_matrix_condition_fulfilled}
\norm{A^{-1}}_{op} \norm{D}_{op} &=
    \norm{A^{-1}}_{op}\norm{B - A}_{op} \nonumber \\
&\le\norm{A^{-1}}_{op}\norm{B - A}_{1}
    & \textrm{(ordering of matrix norms)}\nonumber \\
 & \le \constop \frac{1}{2} (\constop)^{-1}
    & \textrm{(by assumption)} \nonumber \\
&= \frac{1}{2}  < 1,
\end{align}
so \eqref{ab_matrix_condition} is satisfied and we can apply
\eqref{matrix_norm_continuity}. Then
\begin{align*}
\norm{B^{-1}}_{op}
 & \le \norm{B^{-1}-A^{-1}}_{op} + \norm{A^{-1}}_{op}
    & \textrm{ (triangle inequality)}\\
 & \le \frac{\norm{A^{-1}}_{op}\norm{B - A}_{op}}
            {1 - \norm{A^{-1}}_{op}\norm{B - A}_{op}}
        \norm{A^{-1}}_{op} + \norm{A^{-1}}_{op}
     & \textrm{(\eqref{matrix_norm_continuity})}\\
 & \le \frac{\frac{1}{2}}{1-\frac{1}{2}}\norm{A^{-1}}_{op} +
    \norm{A^{-1}}_{op}
    &\textrm{(\eqref{ab_matrix_condition_fulfilled})} \\
 & \le 2 \constop.&\textrm{(by assumption)}
\end{align*}
\end{proof}
\end{prop}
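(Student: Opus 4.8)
The plan is to regard $B$ as a perturbation of $A$, writing $D := B - A$ so that $B = A + D$, and to invoke the standard Neumann-series (Banach lemma) perturbation bound for matrix inverses, exactly as packaged in Theorem 4.29 of \citet{schott:2016:matrix}: whenever $\norm{A^{-1}}\norm{D} < 1$ in a submultiplicative matrix norm satisfying $\norm{I} = 1$, the matrix $A + D$ is invertible and $\norm{A^{-1} - (A+D)^{-1}}$ is controlled by $\frac{\norm{A^{-1}}\norm{D}}{1 - \norm{A^{-1}}\norm{D}}\norm{A^{-1}}$. I would apply this with the operator norm, for which indeed $\norm{I}_{op} = 1$, and for which $A$ is invertible by the hypothesis $\norm{A^{-1}}_{op} \le \constop < \infty$.

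The one genuinely non-routine step is reconciling the two norms in play: the hypothesis bounds the \emph{vectorized} $L_1$ norm $\norm{B - A}_1$, whereas the perturbation lemma requires smallness of $D$ in the \emph{operator} norm. The bridge is the ordering $\norm{D}_{op} \le \norm{D}_1$, which holds because $\norm{D}_{op} \le \norm{D}_F = \norm{\mathrm{vec}(D)}_2 \le \norm{\mathrm{vec}(D)}_1 = \norm{D}_1$ (spectral norm dominated by Frobenius norm, and $\norm{\cdot}_2 \le \norm{\cdot}_1$ on vectors). Combining this with $\norm{A^{-1}}_{op} \le \constop$ and the hypothesis $\norm{B - A}_1 \le \frac{1}{2}\constop[-1]$ gives $\norm{A^{-1}}_{op}\norm{D}_{op} \le \constop \cdot \frac{1}{2}\constop[-1] = \frac{1}{2} < 1$, so the lemma applies with ratio $\rho := \norm{A^{-1}}_{op}\norm{D}_{op} \le \frac{1}{2}$.

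With the smallness condition verified, the conclusion is immediate. The lemma gives $\norm{A^{-1} - B^{-1}}_{op} \le \frac{\rho}{1 - \rho}\norm{A^{-1}}_{op}$, and since $t \mapsto t/(1-t)$ is increasing on $[0,1)$, substituting the upper bound $\rho = \frac{1}{2}$ yields $\norm{A^{-1} - B^{-1}}_{op} \le \norm{A^{-1}}_{op}$. A final triangle inequality then produces
$$
\norm{B^{-1}}_{op} \le \norm{B^{-1} - A^{-1}}_{op} + \norm{A^{-1}}_{op} \le 2\norm{A^{-1}}_{op} \le 2\constop.
$$
I expect the main obstacle to be entirely the norm-comparison step: once the operator-norm smallness condition is correctly extracted from the $L_1$ hypothesis (and one is careful that the perturbation bound is stated and used consistently in the operator norm), the remainder is a mechanical application of the cited inequality followed by the triangle inequality.
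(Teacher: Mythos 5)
Your proposal is correct and follows essentially the same route as the paper's own proof: both define $D := B - A$, invoke the perturbation bound of Theorem 4.29 in \citet{schott:2016:matrix} in the operator norm, bridge the $L_1$ hypothesis to the operator-norm smallness condition via $\norm{D}_{op} \le \norm{D}_1$, and finish with the triangle inequality to obtain $\norm{B^{-1}}_{op} \le 2\constop$. Your explicit justification of the norm ordering through the Frobenius norm is a small elaboration of the step the paper labels ``ordering of matrix norms,'' but the argument is otherwise identical.
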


Next, we define the quantities needed to make use of the integral form of
the Taylor series remainder.\footnote{We are indebted to Pang Wei Koh for
pointing out the need to use the integral form of the remainder for
Taylor series expansions of vector-valued functions.}

\begin{prop}
\label{propref:taylor_series_remainder}
For any $\theta \in \Omega_{\theta}$ and any $\tilde{w} \in W$,
\begin{align*}
G(\theta, \tilde{w}) - G(\thetaone, \tilde{w}) =&
    \left(\int_0^1 H(\thetaone + t (\theta - \thetaone), w) dt\right)
    \left(\theta - \thetaone\right)
\end{align*}
\end{prop}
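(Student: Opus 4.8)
The plan is to recognize this proposition as the integral (first-order) form of the Taylor remainder for the vector-valued map $\theta \mapsto G(\theta, \tilde{w})$, applied along the line segment joining $\thetaone$ to $\theta$. The essential content is just the fundamental theorem of calculus combined with the chain rule; the one point requiring care is that we must \emph{not} invoke the scalar mean value theorem, since $G$ takes values in $\mathbb{R}^{D}$ and in general no single intermediate point $\xi$ satisfies $G(\theta,\tilde{w}) - G(\thetaone,\tilde{w}) = H(\xi,\tilde{w})(\theta - \thetaone)$. This is exactly the reason the integral form of the remainder is needed, as flagged in the accompanying footnote.

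First I would fix $\theta \in \Omega_{\theta}$ and the weight vector $\tilde{w}$, and define the auxiliary curve $\phi(t) := G\left(\thetaone + t(\theta - \thetaone), \tilde{w}\right)$ for $t \in [0,1]$, assuming the segment lies in $\Omega_{\theta}$ (which holds, e.g., when $\Omega_{\theta}$ is convex, or for $\theta$ in a sufficiently small ball about $\thetaone$, which is the regime in which this proposition is later applied). By construction $\phi(0) = G(\thetaone, \tilde{w})$ and $\phi(1) = G(\theta, \tilde{w})$, so the left-hand side of the claimed identity is precisely $\phi(1) - \phi(0)$.

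Next I would differentiate $\phi$ in $t$. By \assuref{paper_smoothness} each $g_{n}$ is continuously differentiable, hence so is $G(\cdot, \tilde{w}) = \frac{1}{N}\sum_{n=1}^{N} \tilde{w}_{n} g_{n}(\cdot)$, and its Jacobian with respect to $\theta^{T}$ is exactly $\frac{1}{N}\sum_{n=1}^{N} \tilde{w}_{n} h_{n}(\cdot) = H(\cdot, \tilde{w})$, using the definition $h_{n} = \partial g_{n}/\partial \theta^{T}$. The chain rule then gives $\phi'(t) = H\left(\thetaone + t(\theta - \thetaone), \tilde{w}\right)(\theta - \thetaone)$, which is a continuous function of $t$ on $[0,1]$.

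Finally, applying the fundamental theorem of calculus \emph{componentwise} to each coordinate of the continuously differentiable curve $\phi$ yields $\phi(1) - \phi(0) = \int_{0}^{1} \phi'(t)\, dt$. Substituting the expression for $\phi'(t)$ and pulling the constant vector $(\theta - \thetaone)$ outside the entrywise integral produces the stated identity. The only genuine obstacle is the one already noted: sidestepping the (false) vector-valued mean value theorem in favor of the componentwise integral remainder. This substitution is legitimate precisely because continuous differentiability of each $g_{n}$ makes $\phi'$ continuous, hence Riemann integrable, so the coordinatewise fundamental theorem of calculus applies and can be reassembled into the matrix-times-vector form asserted.
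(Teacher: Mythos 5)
Your proof is correct and takes essentially the same approach as the paper's: both define the curve $t \mapsto G(\thetaone + t(\theta - \thetaone), \tilde{w})$ along the segment, apply the integral (zeroth-order Taylor remainder) form of the fundamental theorem of calculus componentwise, and stack the components into the stated matrix-times-vector identity. Your write-up merely makes explicit the chain-rule computation of the derivative and the segment-containment caveat that the paper leaves implicit in its citation.
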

\begin{proof}
Let $G_d(\theta, \tilde{w})$ denote the $d$-th component of the vector
$G(\theta, \tilde{w})$, and define the function $f_d(t) := G_d(\thetaone + t
(\thetaone - \theta), \tilde{w})$. The proposition follows by taking the
integral remainder form of the zero-th order Taylor series expansion of $f_d(t)$
around $t=0$ \citep[Appendix B.2]{dudley:2018:analysis}, and stacking the result
into a vector.
\end{proof}


The Taylor series residual of \proprefref{taylor_series_remainder} will show up
repeatedly, so we will give it a concise name in the following definition.


\begin{defn}
\label{defref:hess_integral}
For a fixed weight $w$ and a fixed parameter $\theta$, define the Hessian
integral
\begin{align*}
\hint(\theta, w) :=&
    \int_0^1 H(\thetaone + t (\theta - \thetaone), w) dt.
\end{align*}
\end{defn}



\subsection{Lemmas}

We now prove some useful consequences of our assumptions. The proof
roughly proceeds for all $w\in W_{\delta}$ by the following steps:
\begin{enumerate}
\item When $\delta$ is small we can make $\norm{{{\thetaw}}-\thetaone}_{2}$
small. (\lemref{theta_difference_bound} below.)
\item When $\norm{\theta-\thetaone}_{2}$ is small, then the derivatives
$H\left(\theta,w\right)$ are close to their optimal value, $H\left(\thetaone,\onevec\right)$.
(\lemref{bounded_continuous} and \lemref{gh_difference_from_one}
below.)
\item When the derivatives are close to their optimal values, then $H\left(\theta,w\right)$
is uniformly non-singular. (\lemref{continuous_invertibility} below.)
\item When the derivatives are close to their optimal values and $H\left(\theta,w\right)$
is uniformly non-singular we can control the error in $\thetaij-\thetaw$
in terms of $\delta$. (\thmrefref{taylor_error_first} below.)
\end{enumerate}
We begin by showing that the difference between $\thetaw$ and $\thetaone$
for $w\in W_{\delta}$ can be made small by making $\delta$ from
\condref{paper_uniform_bound} small.  First, however, we need to prove that
operator norm bounds on $H(\theta, w)$ also apply to the integrated Hessian
$\hint(\theta, w)$.


\begin{lem}\label{lem:hess_integral_invertible}
Invertibility of the integrated Hessian.
If, for some domain $\Omega$ and some constant $C$, $\sup_{\theta \in \Omega}
\norm{H(\theta, w)^{-1}}_{op} \le C$, then
$\sup_{\theta \in \Omega} \norm{\hint(\theta, w)^{-1}}_{op} \le C$.
\end{lem}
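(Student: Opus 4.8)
The plan is to show that the operator-norm bound on $H(\theta, w)$ transfers to its integral $\hint(\theta, w)$ by exploiting the fact that the integrand is a convex-combination-like average of Hessians, each of which has a uniformly bounded inverse. The key observation is that $\hint(\theta, w)$ is itself an average (integral) of matrices $H(\thetaone + t(\theta - \thetaone), w)$ over $t \in [0,1]$, and each of these matrices has operator-norm-bounded inverse by hypothesis. So the task reduces to showing that a bound on the inverse operator norms of a family of matrices yields the same bound on the inverse of their average.

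The cleanest route is to bound the smallest singular value of $\hint$ from below by $C^{-1}$, since $\norm{\hint^{-1}}_{op} = \sigma_{\min}(\hint)^{-1}$. For any unit vector $v$, I would estimate $\norm{\hint(\theta, w) v}_2$ from below. Writing out the integral, for any unit vector $v$ and any unit vector $u$,
\begin{align*}
u^T \hint(\theta, w) v = \int_0^1 u^T H(\thetaone + t(\theta - \thetaone), w) v \, dt.
\end{align*}
The hypothesis $\norm{H(\cdot, w)^{-1}}_{op} \le C$ means that for each fixed $t$, the matrix $H_t := H(\thetaone + t(\theta - \thetaone), w)$ satisfies $\norm{H_t x}_2 \ge C^{-1} \norm{x}_2$ for all $x$. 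The obstacle is that a lower bound on each $\norm{H_t v}_2$ does not immediately lower-bound $\norm{\int_0^1 H_t v \, dt}_2$, because the vectors $H_t v$ could point in different directions and partially cancel under integration.

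To overcome this, I would pass through the inverse. Note that $\sigma_{\min}(H_t) \ge C^{-1}$ is equivalent to saying the field of values (numerical range) argument does not directly apply, so instead I would use the dual characterization: $\norm{\hint^{-1}}_{op} \le C$ holds provided $\hint$ maps the unit ball onto a set containing the ball of radius $C^{-1}$. A direct and robust argument is to fix the unit vector $v$ achieving $\sigma_{\min}(\hint)$ and set $u = \hint v / \norm{\hint v}_2$; then
\begin{align*}
\sigma_{\min}(\hint) = \norm{\hint v}_2 = u^T \hint v = \int_0^1 u^T H_t v \, dt \ge \int_0^1 \sigma_{\min}(H_t) \, \cos\angle(u, H_t v / \norm{H_t v}_2) \, dt,
\end{align*}
which still faces the cancellation issue. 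The hard part, then, is handling this directional cancellation; I expect the actual proof avoids it entirely by a slicker observation — most likely that integration is a contraction in operator norm in the right sense, or by noting that $H_t$ being symmetric and uniformly positive (or negative) definite with eigenvalues bounded away from zero makes $\hint$ inherit a definite quadratic form $v^T \hint v = \int_0^1 v^T H_t v \, dt$, whose magnitude is bounded below by $C^{-1}$ term-wise without any cancellation. If the $H_t$ are symmetric positive definite with $\lambda_{\min}(H_t) \ge C^{-1}$, then $v^T \hint v \ge C^{-1}$ for every unit $v$, giving $\lambda_{\min}(\hint) \ge C^{-1}$ and hence $\norm{\hint^{-1}}_{op} \le C$ immediately. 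I would therefore structure the proof around this quadratic-form monotonicity, checking that symmetry (inherited from the Hessian structure of the $h_n$) is what makes the cancellation problem disappear, and treating the general non-symmetric case, if needed, via the singular-value argument with care.
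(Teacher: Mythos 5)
Your proposal is correct and, in its final form, is exactly the paper's proof: the paper writes $\norm{\hint(\theta,w)^{-1}}_{op}^{-1} = \min_{\norm{v}_2 = 1} v^T \hint(\theta,w)\, v$, pulls the minimum inside the integral ($\min_v \int_0^1 v^T H(\thetaone + t(\theta - \thetaone), w)\, v\, dt \ge \int_0^1 \min_v v^T H\, v\, dt \ge C^{-1}$), which is precisely your quadratic-form monotonicity argument. Your caveat about symmetry and positive-definiteness is well placed: both the paper's opening identity and its final lower bound are valid only for symmetric positive definite matrices (for general nonsymmetric $H$ the conclusion can fail, since averaging can shrink the smallest singular value), and the paper relies on this property implicitly without comment.
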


\begin{proof}
By definition of the operator norm,
\begin{align*}
\norm{\hint(\theta, w)^{-1}}_{op}^{-1} =&
    \min_{v \in \mathbb{R}^D: \norm{v}_2 = 1} v^T \hint(\theta, w) v \\
=& \min_{v \in \mathbb{R}^D: \norm{v}_2 = 1}
    \int_0^1 v^T H(\thetaone + t (\theta - \thetaone), w) v dt \\
\ge& \int_0^1 \min_{v \in \mathbb{R}^D: \norm{v}_2 = 1}
    v^T H(\thetaone + t (\theta - \thetaone), w) v dt \\
\ge& \inf_{\theta \in \Omega} \min_{v \in \mathbb{R}^D: \norm{v}_2 = 1}
        v^T H(\theta, w) v \\
\ge& C^{-1}.
\end{align*}
The result follows by inverting both sides of the inequality.
\end{proof}


\begin{lem}
\label{lem:theta_difference_bound}Small parameter changes. Under
\globalassum and \condref{paper_uniform_bound},
\begin{align*}
\textrm{for all }w\in W_{\delta},\quad\norm{{{\thetaw}}-\thetaone}_{2} &
    \le\constop\delta.
\end{align*}
\end{lem}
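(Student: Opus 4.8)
The plan is to write $\thetaw-\thetaone$ exactly through the integral-remainder Taylor expansion of \proprefref{taylor_series_remainder}, but applied to the \emph{all-ones}-weighted fixed-point function $G(\cdot,\onevec)$ rather than to $G(\cdot,w)$. This choice is the crux of the argument: only the all-ones Hessian is assumed uniformly invertible (\assuref{paper_hessian}), so expanding in $\onevec$ is what lets us use that bound, while the entire dependence on the weights $w$ gets shunted into a term controlled directly by \condref{paper_uniform_bound}.

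Concretely, I would first apply \proprefref{taylor_series_remainder} with $\tilde w=\onevec$ evaluated at $\theta=\thetaw$. Since $G(\thetaone,\onevec)=0$, this yields
\[
G(\thetaw,\onevec)=\hint(\thetaw,\onevec)\,(\thetaw-\thetaone).
\]
Next I rewrite the left-hand side using the defining relation $G(\thetaw,w)=0$, so that
\[
G(\thetaw,\onevec)=G(\thetaw,\onevec)-G(\thetaw,w)=-\frac{1}{N}\sum_{n=1}^{N}(w_{n}-1)g_{n}(\thetaw),
\]
which is precisely the vector whose $L_1$ norm is bounded in \condref{paper_uniform_bound}. Hence, for $w\in W_{\delta}$, using $\norm{\cdot}_{2}\le\norm{\cdot}_{1}$ and bounding the value at $\thetaw$ by the supremum over $\theta\in\Omega_{\theta}$, I get $\norm{G(\thetaw,\onevec)}_{2}\le\norm{G(\thetaw,\onevec)}_{1}\le\delta$. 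It then remains to invert the integrated Hessian: \assuref{paper_hessian} gives $\sup_{\theta}\norm{H(\theta,\onevec)^{-1}}_{op}\le\constop$, and \lemref{hess_integral_invertible} transfers this bound to the integral, so $\norm{\hint(\thetaw,\onevec)^{-1}}_{op}\le\constop$. Combining the two pieces,
\[
\norm{\thetaw-\thetaone}_{2}=\norm{\hint(\thetaw,\onevec)^{-1}G(\thetaw,\onevec)}_{2}\le\constop\delta,
\]
which is the claim.

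The main obstacle, and the step requiring care, is exactly the decision at the first line to expand $G(\cdot,\onevec)$ instead of $G(\cdot,w)$: expanding the $w$-weighted function would force us to invert $\hint(\thetaw,w)$, for which no operator-norm bound is available at this stage of the argument (such a bound is only established later, in the uniform-nonsingularity lemma). Routing the expansion through $\onevec$ sidesteps this and is what makes \lemref{hess_integral_invertible} directly applicable. A secondary technical point worth flagging is that $\hint(\thetaw,\onevec)$ integrates $H(\cdot,\onevec)$ along the segment from $\thetaone$ to $\thetaw$, so one needs this segment to remain in $\Omega_{\theta}$ for the uniform bound of \assuref{paper_hessian} to hold along the whole path; this is automatic if $\Omega_{\theta}$ is taken convex, and the resulting estimate $\constop\delta$ is then exactly what pins $\thetaw$ into a controlled neighborhood of $\thetaone$ for the subsequent lemmas.
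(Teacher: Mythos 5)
Your proof is correct and follows essentially the same route as the paper's: both apply \proprefref{taylor_series_remainder} with $\tilde{w}=\onevec$ at $\theta=\thetapw$, use $G\left(\thetapw,w\right)=G\left(\thetaone,\onevec\right)=0$ to identify the resulting term with the quantity controlled by \condref{paper_uniform_bound}, and invert the integrated Hessian via \assuref{paper_hessian} combined with \lemref{hess_integral_invertible}. Your closing observation---that the segment from $\thetaone$ to $\thetapw$ must lie in $\Omega_{\theta}$ (automatic if $\Omega_{\theta}$ is convex) for the uniform operator-norm bound to apply along the path---is a legitimate technical point that the paper's proof leaves implicit.
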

\begin{proof}
Applying \proprefref{taylor_series_remainder} with $\theta = \thetaw$
and $\tilde{w} = \onevec$ gives
\begin{align*}
G\left(\thetaw,\onevec\right) &
    =G\left(\thetaone,\onevec\right) +
    \hint\left(\thetaw, \onevec\right)
    \left(\thetaw-\thetaone\right).
\end{align*}
By \assuref{paper_hessian} and
\lemref{hess_integral_invertible},
$\sup_{\theta \in \Omega_\theta}
\norm{\hint(\theta,\onevec)^{-1}} \le \constop$.
In particular, $\hint(\theta,\onevec)$ is non-singular.
A little manipulation, together with the fact that
$G\left(\thetaw,w\right)=G\left(\thetaone,\onevec\right)=0$ gives
\begin{align*}
\thetaw-\thetaone & =
    \hint\left(\thetaw,\onevec\right)^{-1}
    \left(G\left(\thetaw, \onevec\right) - G\left(\thetaw,w\right)\right).
\end{align*}
Taking the norm of both sides gives
\begin{align*}
\norm{{{\thetaw}}-\thetaone}_{2} & =
    \norm{\hint\left(\thetaw,\onevec\right)^{-1}
        \left(G\left({{\thetaw}},\onevec\right) -
             G\left({{\thetaw}},w\right)\right)}_{2}\\
& \le\norm{\hint\left(\thetaw,\onevec\right)^{-1}}_{op}
    \norm{\left(G\left({{\thetaw}},\onevec\right) -
                G\left({{\thetaw}},w\right)\right)}_{2}\\
& \le\constop\norm{G\left({{\thetaw}},\onevec\right) -
                   G\left({{\thetaw}},w\right)}_{2}
    \textrm{ (Lemma \ref{lem:hess_integral_invertible})}\\
& \le\constop\norm{
    G\left({{\thetaw}},\onevec\right) -
    G\left({{\thetaw}},w\right)}_{1}\textrm{ (relation between norms)}\\
& \le\constop\sup_{\theta\in\Omega_{\theta}}
    \norm{G\left(\theta,\onevec\right)-G\left(\theta,w\right)}_{1}\\
& \le\constop\delta.\textrm{ (Condition \ref{cond:paper_uniform_bound}).}
\end{align*}
\end{proof}
%
Because we will refer to it repeatedly, we give the set of $\theta$
defined in \lemref{theta_difference_bound} a name.
%
\begin{defn}
For a given $\delta$, define the region around $\thetaone$ given
by \lemref{theta_difference_bound} as

\begin{align*}
\thetadeltaball & :=\left\{ \theta:\norm{\theta-\thetaone}_{2}\le\constop\delta\right\} \bigcap\Omega_{\theta}.
\end{align*}
\end{defn}
%
%
In other words, \lemref{theta_difference_bound} states that \condref{paper_uniform_bound}
implies $\thetaw\in\thetadeltaball$ when $w\in W_{\delta}$.

Next, we show that closeness in $\theta$ will mean closeness in $H\left(\theta,w\right)$.

\begin{lem}
\label{lem:bounded_continuous} Boundedness and continuity. Under
\paperallcoreassum and \condref{paper_uniform_bound},
\begin{align*}
\textrm{for all }\theta\in\thetaball,\quad\sup_{w\in W}
    \norm{H\left(\theta,w\right) - H\left(\thetaone,w\right)}_{1}
    &\le D\constw\liph\norm{\theta-\thetaone}_{2}.
\end{align*}
\end{lem}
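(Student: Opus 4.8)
The plan is to reduce the claim to a single application of the tensor H\"older inequality (\proprefref{tensor_cauchy_schwartz}), with the two resulting factors each dispatched by exactly one assumption. First I would use the linearity of $H$ in its first argument: since $H\left(\theta,w\right)=\frac{1}{N}\sum_{n=1}^{N}w_{n}h_{n}\left(\theta\right)$, subtracting the value at $\thetaone$ gives
\begin{align*}
H\left(\theta,w\right) - H\left(\thetaone,w\right) =
    \frac{1}{N}\sum_{n=1}^{N}w_{n}\left(h_{n}\left(\theta\right)-h_{n}\left(\thetaone\right)\right).
\end{align*}
This exhibits the left-hand side as a weighted sum $\frac{1}{N}\sum_n w_n a_n$ of the tensors $a_{n}:=h_{n}\left(\theta\right)-h_{n}\left(\thetaone\right)$, each of which is a $D\times D$ matrix, so the relevant tensor dimension is $D_{A}=D^{2}$.

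Next I would invoke \proprefref{tensor_cauchy_schwartz} with $p=q=2$, noting that $\sqrt{D_{A}}=\sqrt{D^{2}}=D$, to obtain
\begin{align*}
\norm{H\left(\theta,w\right) - H\left(\thetaone,w\right)}_{1}
    \le D\,\frac{\norm{w}_{2}}{\sqrt{N}}\,
    \frac{\norm{h\left(\theta\right)-h\left(\thetaone\right)}_{2}}{\sqrt{N}}.
\end{align*}
The two trailing factors are then controlled separately. The weight factor $\frac{\norm{w}_{2}}{\sqrt{N}}$ is bounded by $\constw$ uniformly over $w\in W$ by \assuref{paper_weight_bounded}. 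For the Hessian-difference factor I would observe that $\theta\in\thetaball$ means precisely $\norm{\theta-\thetaone}_{2}\le\thetasize$, which is exactly the hypothesis needed to apply the local smoothness bound of \assuref{paper_lipschitz}, yielding $\frac{\norm{h\left(\theta\right)-h\left(\thetaone\right)}_{2}}{\sqrt{N}}\le\liph\norm{\theta-\thetaone}_{2}$.

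Chaining these three bounds gives $\norm{H\left(\theta,w\right)-H\left(\thetaone,w\right)}_{1}\le D\constw\liph\norm{\theta-\thetaone}_{2}$, and since the bounding constants $D$, $\constw$, and $\liph$ do not depend on $w$, taking the supremum over $w\in W$ leaves the bound intact, which is the claim. There is no substantive obstacle here; the proof is essentially bookkeeping. The only two places requiring care are instantiating \proprefref{tensor_cauchy_schwartz} with the correct dimension factor $\sqrt{D_A}=D$ (rather than an erroneous $D^{2}$ or $\sqrt{D}$), and confirming that \assuref{paper_lipschitz} applies throughout $\thetaball$ rather than at a single point, which holds precisely because $\thetaball$ is the radius-$\thetasize$ ball matching that assumption's hypothesis.
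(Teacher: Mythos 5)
Your proposal is correct and follows essentially the same route as the paper's own proof: expand $H\left(\theta,w\right)-H\left(\thetaone,w\right)$ by linearity, apply Proposition \proprefref{tensor_cauchy_schwartz} with $p=q=2$ (correctly yielding the factor $\sqrt{D_A}=D$ for $D\times D$ matrices), and then bound the weight factor by \assuref{paper_weight_bounded} and the Hessian-difference factor by \assuref{paper_lipschitz} on $\thetaball$. No gaps; the bookkeeping matches the paper's argument step for step.
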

\begin{proof}
For $\theta\in\thetaball$,
\begin{align*}
\sup_{w\in W}\norm{H\left(\theta,w\right)-H\left(\thetaone,w\right)}_{1} &=
    \sup_{w\in W}\norm{\frac{1}{N}
    \sum_{n=1}^{N}w_{n}\left(h_{n}\left(\theta\right) -h_{n}\left(\thetaone\right)\right)}_{1}\textrm{ (by definition)}\\
 & \le D\sup_{w\in W}\frac{\norm w_{2}}{\sqrt{N}}
    \frac{\norm{h\left(\theta\right) -
                h\left(\thetaone\right)}_{2}}{\sqrt{N}}
        \textrm{ (Proposition \ref{propref:tensor_cauchy_schwartz})}\\
 & \le D\constw\frac{\norm{h\left(\theta\right) -
                           h\left(\thetaone\right)}_{2}}{\sqrt{N}}
       \textrm{ (Assumption \ref{assu:paper_weight_bounded})}\\
 & \le D\constw\liph\norm{\theta-\thetaone}_{2}
    \textrm{ (Assumption \ref{assu:paper_lipschitz} and }
    \theta\in\thetaball\textrm{)}.
\end{align*}
\end{proof}
%
%
We now combine \lemref{theta_difference_bound} and \lemref{bounded_continuous}
to show that $H\left(\theta,w\right)$ is close to its value at the
solution $H\left(\thetaone,\onevec\right)$ for sufficiently small
$\delta$ and for all $\theta\in\thetadeltaball$.
%

\begin{lem}
\label{lem:gh_difference_from_one}Bounds for difference in parameters.
Under \paperallcoreassum and \condref{paper_uniform_bound}, if $\delta\le\thetasize\constop[-1]$,
then
\begin{align*}
\sup_{\theta\in\thetadeltaball}\sup_{w\in W_{\delta}} &
    \norm{H\left(\theta,w\right)-H\left(\thetaone,\onevec\right)}_{1}
    \le\left(1 + D\constw\liph\constop\right)\delta.
\end{align*}
\end{lem}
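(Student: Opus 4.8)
The plan is to decompose $H(\theta, w) - H(\thetaone, \onevec)$ by inserting the intermediate term $H(\thetaone, w)$ and applying the triangle inequality. This cleanly separates the \emph{parameter perturbation} (moving from $\thetaone$ to $\theta$ at fixed weights $w$) from the \emph{weight perturbation} (moving from $\onevec$ to $w$ at the fixed solution $\thetaone$). Each piece is already controlled by a result in hand: the parameter-perturbation term by \lemref{bounded_continuous}, and the weight-perturbation term directly by \condref{paper_uniform_bound}. Concretely, I would start from
\begin{align*}
\norm{H(\theta, w) - H(\thetaone, \onevec)}_1 \le
    \norm{H(\theta, w) - H(\thetaone, w)}_1 +
    \norm{H(\thetaone, w) - H(\thetaone, \onevec)}_1,
\end{align*}
and then take $\sup_{\theta \in \thetadeltaball} \sup_{w \in W_{\delta}}$ of each summand separately.

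For the first summand, \lemref{bounded_continuous} gives the bound $D \constw \liph \norm{\theta - \thetaone}_2$, but only for $\theta \in \thetaball$. So the key step is to verify the containment $\thetadeltaball \subseteq \thetaball$: for $\theta \in \thetadeltaball$ we have $\norm{\theta - \thetaone}_2 \le \constop \delta$, and the hypothesis $\delta \le \thetasize \constop[-1]$ forces $\constop \delta \le \thetasize$, so indeed $\theta \in \thetaball$. Applying \lemref{bounded_continuous} and then using $\norm{\theta - \thetaone}_2 \le \constop \delta$ again yields $\sup_{\theta \in \thetadeltaball} \sup_{w \in W} \norm{H(\theta, w) - H(\thetaone, w)}_1 \le D \constw \liph \constop \delta$.

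For the second summand I would observe that $H(\thetaone, w) - H(\thetaone, \onevec) = \frac{1}{N}\sum_{n=1}^{N} (w_n - 1) h_n(\thetaone)$, which is precisely the quantity whose $L_1$ norm is bounded by $\delta$ in the second inequality of \condref{paper_uniform_bound}; evaluating at the fixed point $\theta = \thetaone$ is covered by the supremum over $\Omega_{\theta}$ there, so this term is $\le \delta$ uniformly over $w \in W_{\delta}$. Adding the two bounds gives $(1 + D \constw \liph \constop)\delta$, as claimed. The only real subtlety is the containment $\thetadeltaball \subseteq \thetaball$ needed to invoke \lemref{bounded_continuous}; this is the entire reason the hypothesis $\delta \le \thetasize \constop[-1]$ appears in the statement, and everything else is routine bookkeeping with the triangle inequality.
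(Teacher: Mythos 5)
Your proof is correct, and it rests on the same two ingredients as the paper's: \lemref{bounded_continuous} for the parameter shift, \condref{paper_uniform_bound} for the weight shift, glued together by the triangle inequality after verifying the containment $\thetadeltaball\subseteq\thetaball$. The one genuine difference is the pivot term. You insert $H(\thetaone,w)$, whereas the paper inserts $H(\theta,\onevec)$, writing $H(\theta,w)-H(\thetaone,\onevec)=\left[H(\theta,w)-H(\theta,\onevec)\right]+\left[H(\theta,\onevec)-H(\thetaone,\onevec)\right]$ and bounding the first bracket by the condition (using its uniformity over all $\theta\in\Omega_{\theta}$) and the second by the Lipschitz-type bound evaluated at the weight vector $\onevec$. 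Your mirror-image split is arguably slightly tidier in how it matches hypotheses to their stated scope: \lemref{bounded_continuous} is invoked only for $w\in W_{\delta}\subseteq W$, exactly as stated, and \condref{paper_uniform_bound} is needed only at the single point $\theta=\thetaone$ rather than uniformly in $\theta$. By contrast, the paper's split applies the parameter-perturbation bound at the weight $\onevec$, which tacitly requires $\onevec$ to behave like an element of $W$ (i.e., $\constw\ge 1$, harmless since $\frac{1}{\sqrt{N}}\norm{\onevec}_{2}=1$ and $\onevec\in W_{\delta}$ for $\delta=0$, but left implicit). Both routes yield the same constant $\left(1+D\constw\liph\constop\right)\delta$.
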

\begin{proof}
By \lemref{theta_difference_bound}, $\delta\le\thetasize\constop[-1]$
implies that $\constop\delta\le\thetasize$ and so $\thetadeltaball\subseteq\thetaball$.
Consequently, we can apply \lemref{bounded_continuous}:
\begin{align*}
\sup_{\theta\in\thetadeltaball}\sup_{w\in W_{\delta}}\norm{H\left(\theta,w\right)-H\left(\thetaone,w\right)}_{1} & \le\sup_{\theta\in\thetaball}\sup_{w\in W_{\delta}}\norm{H\left(\theta,w\right)-H\left(\thetaone,w\right)}_{1}\\
 & \le D\constw\liph\norm{\theta-\thetaone}_{2}\textrm{ (Lemma \ref{lem:bounded_continuous})}\\
 & \le D\constw\liph\constop\delta\quad\textrm{ (because }\theta\in\thetadeltaball\textrm{)}.
\end{align*}
Next, we can use this to write
\begin{align*}
\sup_{\theta\in\thetadeltaball}\sup_{w\in W_{\delta}} & \norm{H\left(\theta,w\right)-H\left(\thetaone,\onevec\right)}_{1}\\
 & =\sup_{\theta\in\thetadeltaball}\sup_{w\in W_{\delta}}\norm{H\left(\theta,w\right)-H\left(\theta,\onevec\right)+H\left(\theta,\onevec\right)-H\left(\thetaone,\onevec\right)}_{1}\\
 & \le\sup_{\theta\in\thetadeltaball}\sup_{w\in W_{\delta}}\norm{H\left(\theta,w\right)-H\left(\theta,\onevec\right)}_{1}+\sup_{\theta\in\thetadeltaball}\sup_{w\in W_{\delta}}\norm{H\left(\theta,\onevec\right)-H\left(\thetaone,\onevec\right)}_{1}\\
 & \le\sup_{\theta\in\Omega_{\theta}}\sup_{w\in W_{\delta}}\norm{H\left(\theta,w\right)-H\left(\theta,\onevec\right)}_{1}+\sup_{\theta\in\thetadeltaball}\sup_{w\in W_{\delta}}\norm{H\left(\theta,\onevec\right)-H\left(\thetaone,\onevec\right)}_{1}\\
 & \le\delta+\sup_{\theta\in\thetadeltaball}\sup_{w\in W_{\delta}}\norm{H\left(\theta,\onevec\right)-H\left(\thetaone,\onevec\right)}_{1}\textrm{ (Condition \ref{cond:paper_uniform_bound})}\\
 & \le\delta+D\constw\liph\constop\delta.
\end{align*}
\end{proof}

%
The constant that appears multiplying $\delta$ at the end of the proof of
\lemref{gh_difference_from_one} will appear often in what follows, so we give it
the special name $\constij$ in \defrefref{constants_definition}.

Note that \lemref{gh_difference_from_one} places a condition on how small
$\delta$ must be in order for our regularity conditions to apply.
\lemref{theta_difference_bound} will guarantee that $\thetaw\in\thetadeltaball$,
but if we are not able to make $\delta$ arbitrarily small in
\condref{paper_uniform_bound}, then we are not guaranteed to ensure that
$\thetadeltaball\subseteq\thetaball$, will not be able to assume Lipschitz
continuity, and none of our results will apply.

Next, using \lemref{gh_difference_from_one}, we can extend the operator bound on
$\hone^{-1}$ from \assuref{paper_hessian} to $H\left(\theta,w\right)^{-1}$ for
all $w\in W_{\delta}$.


\begin{lem}
\label{lem:continuous_invertibility}Uniform invertibility of the
Hessian. Under \paperallcoreassum and \condref{paper_uniform_bound}, if $\delta\le\min\left\{ \thetasize\constop[-1],\frac{1}{2}\constij^{-1}\constop[-1]\right\} $,
then
\begin{align*}
\sup_{\theta\in\thetadeltaball}\sup_{w\in W_{\delta}}\norm{H\left(\theta,w\right)^{-1}}_{op} & \le2\constop.
\end{align*}
\end{lem}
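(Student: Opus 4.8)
The plan is to apply \proprefref{operator_norm_continuity} pointwise, for each fixed $\theta \in \thetadeltaball$ and $w \in W_{\delta}$, with the choices $A = \hone = H(\thetaone, \onevec)$ and $B = H(\theta, w)$. That proposition converts a sufficiently small $L_1$ perturbation of a matrix with a controlled inverse into an operator-norm bound of $2\constop$ on the inverse of the perturbed matrix, which is exactly the conclusion sought. Since the resulting bound $2\constop$ does not depend on $\theta$ or $w$, taking the double supremum at the end immediately yields the stated inequality.

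To invoke \proprefref{operator_norm_continuity} I must verify its two hypotheses. The first, $\norm{A^{-1}}_{op} = \norm{\hone^{-1}}_{op} \le \constop$, is immediate from \assuref{paper_hessian}. The second requires the perturbation to be small in the $L_1$ norm, namely $\norm{A - B}_1 = \norm{H(\thetaone, \onevec) - H(\theta, w)}_1 \le \tfrac{1}{2}\constop[-1]$. This is precisely where the hypothesis on $\delta$ combines with the preceding lemma. The assumed bound $\delta \le \thetasize\constop[-1]$ activates \lemref{gh_difference_from_one}, which supplies $\sup_{\theta \in \thetadeltaball} \sup_{w \in W_{\delta}} \norm{H(\theta, w) - \hone}_1 \le \constij \delta$. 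The second piece of the hypothesis, $\delta \le \tfrac{1}{2}\constij^{-1}\constop[-1]$, then gives $\constij \delta \le \tfrac{1}{2}\constop[-1]$, placing the $L_1$ perturbation at exactly the threshold that \proprefref{operator_norm_continuity} demands.

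With both hypotheses verified, \proprefref{operator_norm_continuity} yields $\norm{H(\theta,w)^{-1}}_{op} \le 2\constop$ for every $\theta \in \thetadeltaball$ and $w \in W_{\delta}$, and taking suprema over $\theta$ and $w$ completes the argument. I anticipate no genuine obstacle: the two terms in the $\min$ defining the $\delta$-threshold are engineered to match the two distinct requirements—the first to activate \lemref{gh_difference_from_one}, the second to push the perturbation below $\tfrac{1}{2}\constop[-1]$—so the only care needed is in bookkeeping the constants $\constij$ and $\constop$ (both recorded in \defrefref{constants_definition}) and in ensuring the perturbation is measured in the $L_1$ norm, which is the common norm in which both \lemref{gh_difference_from_one} and \proprefref{operator_norm_continuity} are stated.
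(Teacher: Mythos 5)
Your proposal is correct and follows essentially the same route as the paper's own proof: both verify the hypothesis of \proprefref{operator_norm_continuity} by using \assuref{paper_hessian} for the bound $\norm{\hone^{-1}}_{op}\le\constop$ and \lemref{gh_difference_from_one} (activated by $\delta\le\thetasize\constop[-1]$) to bound the $L_1$ perturbation by $\constij\delta\le\frac{1}{2}\constop[-1]$. The bookkeeping of the two terms in the $\min$ matches the paper exactly.
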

\begin{proof}
By \assuref{paper_hessian}, $\norm{H\left(\thetaone,\onevec\right)^{-1}}_{op}\le\constop$.
So by \proprefref{operator_norm_continuity}, it will suffice to select
$\delta$ so that
\begin{align}
\sup_{\theta\in\thetadeltaball}\sup_{w\in W_{\delta}}\norm{H\left(\theta,w\right)-H\left(\thetaone,\onevec\right)}_{1} & \le\frac{1}{2}\constop[-1].\label{eq:h_bound}
\end{align}
When we can apply \lemref{gh_difference_from_one}, we have
\begin{align*}
\sup_{\theta\in\thetadeltaball}\sup_{w\in W_{\delta}}\norm{H\left(\theta,w\right)-H\left(\thetaone,\onevec\right)}_{1} & \le\constij\delta.
\end{align*}
So $H\left(\theta,w\right)$ will satisfy \eqref{h_bound} if we can
apply \lemref{gh_difference_from_one} and if
\begin{align*}
\delta\le & \frac{1}{2}\constop[-1]\constij^{-1}.
\end{align*}
To apply \lemref{gh_difference_from_one} we additionally require
that $\delta\le\thetasize\constop[-1]$. By taking $\delta\le\min\left\{ \thetasize\constop[-1],\frac{1}{2}\constop[-1]\constij^{-1}\right\} $,
we satisfy \eqref{h_bound} and the result follows.
\end{proof}


Next, we show that a version of \lemref{gh_difference_from_one} also applies
to the integrated Hessian $\hint(\theta, w)$ when $\theta \in \thetadeltaball$.


\begin{lem}
\label{lem:int_h_difference_from_one}
Bounds for difference of the integrated Hessian.
Under \paperallcoreassum and \condref{paper_uniform_bound}, if $\delta\le\thetasize\constop[-1]$ and $\theta \in \thetadeltaball$,
\begin{align*}
\sup_{w\in W_{\delta}}\norm{
    \hint\left(\theta, w\right) -
    H(\thetaone, \onevec)}_{1}
\le& \left(1 + D\constw\liph\constop\right)\delta.
\end{align*}
\end{lem}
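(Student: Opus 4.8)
The plan is to recognize this lemma as the integrated analogue of \lemref{gh_difference_from_one} and to reduce it directly to that lemma by pulling the constant matrix $H(\thetaone, \onevec)$ inside the integral defining $\hint$. Since $\int_0^1 dt = 1$, I would first rewrite, using \defrefref{hess_integral},
\begin{align*}
\hint(\theta, w) - H(\thetaone, \onevec) =
    \int_0^1 \left[H\left(\thetaone + t(\theta - \thetaone), w\right) - H(\thetaone, \onevec)\right] dt,
\end{align*}
and then apply the triangle inequality for integrals (the norm of an integral is at most the integral of the norm, valid componentwise for the $L_1$ norm) to obtain
\begin{align*}
\norm{\hint(\theta, w) - H(\thetaone, \onevec)}_1 \le
    \int_0^1 \norm{H\left(\thetaone + t(\theta - \thetaone), w\right) - H(\thetaone, \onevec)}_1 dt.
\end{align*}

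Next I would show that every point on the segment from $\thetaone$ to $\theta$ lies in $\thetadeltaball$. Writing $\theta_t := \thetaone + t(\theta - \thetaone)$, we have $\norm{\theta_t - \thetaone}_2 = t\,\norm{\theta - \thetaone}_2 \le \norm{\theta - \thetaone}_2 \le \constop\delta$ for all $t \in [0,1]$, using $t \le 1$ and the hypothesis $\theta \in \thetadeltaball$. Hence each $\theta_t$ satisfies the ball constraint defining $\thetadeltaball$, and since the hypothesis $\delta \le \thetasize\constop[-1]$ is exactly the condition under which \lemref{gh_difference_from_one} applies, that lemma bounds the integrand uniformly in $t$ by $\constij\delta = (1 + D\constw\liph\constop)\delta$. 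Integrating this constant bound over $t \in [0,1]$, an interval of unit length, preserves it and yields the claim.

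The main obstacle is the implicit requirement that the entire segment $\{\theta_t : t \in [0,1]\}$ remain inside the domain $\Omega_{\theta}$ on which $H(\cdot, w)$ is defined, because $\thetadeltaball$ is the intersection of the ball with $\Omega_{\theta}$ and the latter need not be convex in general. This is the same convexity-of-the-segment property already relied upon when invoking the integral Taylor remainder of \proprefref{taylor_series_remainder} and \defrefref{hess_integral}, so it imposes no new requirement beyond what the surrounding development already assumes; once it is granted, every step above is a routine consequence of the already-established \lemref{gh_difference_from_one}.
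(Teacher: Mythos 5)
Your proposal is correct and follows essentially the same route as the paper's proof: rewrite $\hint(\theta,w) - H(\thetaone,\onevec)$ as the integral of $H(\thetaone + t(\theta-\thetaone), w) - H(\thetaone,\onevec)$, move the norm inside the integral (the paper cites Jensen's inequality for this step), and bound the integrand uniformly via \lemref{gh_difference_from_one}. Your explicit verification that each segment point $\theta_t$ stays within the ball of radius $\constop\delta$, and your remark about the segment needing to remain in $\Omega_{\theta}$, make explicit two points the paper leaves implicit, but they do not change the argument.
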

\begin{proof}
\begin{align*}
\MoveEqLeft
\sup_{w\in W_{\delta}}\norm{
    \hint\left(\theta, w\right) -
    H(\thetaone, \onevec)}_{1} &\\
=& \sup_{w\in W_{\delta}}\norm{
    \int_0^1 \left(
        H(\thetaone + t (\theta - \thetaone), w) dt -
        H(\thetaone, \onevec)\right)}_{1}
        &\textrm{ (Definition \ref{defref:hess_integral})} \\
\le&
    \sup_{w\in W_{\delta}}
    \int_0^1 \norm{
        H(\thetaone + t (\theta - \thetaone), w) -
        H(\thetaone, \onevec)}_{1} dt
        &\textrm{ (Jensen's inequality)} \\
\le&
    \sup_{\theta\in\thetadeltaball} \sup_{w\in W_{\delta}}
    \norm{H(\theta), w) - H(\thetaone, \onevec)}_{1} &\\
\le&
    \left(1 + D\constw\liph\constop\right)\delta
    &\textrm{ (Lemma \ref{lem:gh_difference_from_one})}
\end{align*}
\end{proof}


With these results in hand, the upper bound on $\delta$ will at last be
sufficient to control the error terms in our approximation. For compactness, we
give it the upper bound on $\delta$ the name $\deltasize$ in
\defrefref{constants_definition}.

Finally, we state a result that will allow us to define derivatives
of $\thetaw$ with respect to $w$.
\begin{lem}
\label{lem:implicit_function_theorem}Inverse function theorem. Under
\paperallcoreassum and \condref{paper_uniform_bound}, and for $\delta\le\deltasize$,
there exists a continuous, differentiable function of $w$, $\thetahat\left(w\right)$,
such that, for all $w\in W$, G$\left(\thetahat\left(w\right),w\right)=0$.
\end{lem}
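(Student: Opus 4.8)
The plan is to realize $\thetahat(w)$ as the implicitly defined root of $G(\theta,w)=0$ via the implicit function theorem, with the non-degeneracy supplied by the Jacobian bounds already established. First I would check the hypotheses of the implicit function theorem at the anchor point $(\thetaone,\onevec)$. The map $(\theta,w)\mapsto G(\theta,w)=\frac{1}{N}\sum_{n}w_{n}g_{n}(\theta)$ is continuously differentiable jointly in $(\theta,w)$: it is $C^{1}$ in $\theta$ by \assuref{paper_smoothness} and linear, hence smooth, in $w$. We have $G(\thetaone,\onevec)=0$ by definition of $\thetaone$, and the partial Jacobian $\partial G(\theta,w)/\partial\theta^{T}=H(\theta,w)$ evaluated at $(\thetaone,\onevec)$ is exactly $\hone$, which is invertible by \assuref{paper_hessian}. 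The classical implicit function theorem then produces a neighborhood of $\onevec$ and a $C^{1}$ map $w\mapsto\thetahat(w)$ on it, with $\thetahat(\onevec)=\thetaone$ and $G(\thetahat(w),w)=0$.

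The content of the lemma is to upgrade this purely local conclusion to one valid throughout $W_{\delta}$, which is precisely where the hypothesis $\delta\le\deltasize$ and the preceding lemmas enter. Here I would argue pointwise: for any $w\in W_{\delta}$ a solution $\thetaw$ exists and, by \lemref{theta_difference_bound}, lies in $\thetadeltaball$; and by \lemref{continuous_invertibility}, $H(\theta,w)$ is invertible with $\norm{H(\theta,w)^{-1}}_{op}\le 2\constop$ for every $\theta\in\thetadeltaball$. Applying the implicit function theorem afresh at the base point $(\thetaw,w)$ shows the root is $C^{1}$ in a neighborhood of that $w$, and implicit differentiation recovers the derivative $d\thetahat(w)/dw^{T}=-H(\thetaw,w)^{-1}\,\partial G(\thetaw,w)/\partial w^{T}$ already displayed in the main text. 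Since $w\in W_{\delta}$ was arbitrary, differentiability holds at every such $w$.

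The step I expect to be the main obstacle is verifying that these locally defined roots assemble into a single, well-defined, globally $C^{1}$ function rather than a multivalued branch structure. The key is a uniqueness argument inside $\thetadeltaball$: if $\theta_{1},\theta_{2}\in\thetadeltaball$ both solved $G(\cdot,w)=0$, then the integral-remainder Taylor expansion of \proprefref{taylor_series_remainder}, now anchored at $\theta_{1}$, gives $0=G(\theta_{2},w)-G(\theta_{1},w)=\bar{H}(\theta_{2}-\theta_{1})$ for an averaged Hessian $\bar{H}$ over the segment joining them; by the same operator-norm reasoning as in \lemref{hess_integral_invertible}, $\bar{H}$ is invertible, forcing $\theta_{1}=\theta_{2}$. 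Uniqueness of the root in $\thetadeltaball$ makes the local implicit functions agree on overlaps, so they patch into one continuous, differentiable $\thetahat$. The delicate wrinkle to address is convexity of the path of integration: the ball $\{\theta:\norm{\theta-\thetaone}_{2}\le\constop\delta\}$ is convex, but its intersection with $\Omega_{\theta}$ need not be, so I would either invoke convexity of the relevant neighborhood of $\thetaone$ or shrink $\delta$ so that $\thetadeltaball$ sits inside a convex region on which the operator bound holds.
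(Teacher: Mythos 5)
Your proposal is correct and takes essentially the same route as the paper, whose entire proof is the single sentence that the result ``follows from \lemref{continuous_invertibility} and the implicit function theorem.'' Your additional material---re-invoking the implicit function theorem at each $(\thetaw, w)$, the uniqueness argument via the integrated Hessian, and the patching of local branches---is a careful expansion of details the paper leaves entirely implicit.
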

\begin{proof}
This follows from \lemref{continuous_invertibility} and the implicit
function theorem.
\end{proof}
By definition, $\thetahat\left(\onevec\right)=\thetaone$.

\subsection{Bounding the errors in a Taylor expansion}

We are now in a position to use \paperallcoreassum and \condref{paper_uniform_bound}
to bound the error terms in a first-order Taylor expansion of $\thetaw$.
We begin by simply calculating the derivative $d\thetahat\left(w\right)/dw$.
\begin{prop}
\label{propref:theta_w_first_derivative}For any $w\in W$ for which
$H\left(\thetaw,w\right)$ is invertible, and for any vector $a\in\mathbb{R}^{N}$,
\begin{align*}
\frac{d\thetaw}{dw^{T}}\at wa & =-H\left(\thetaw,w\right)^{-1}G\left(\thetaw,a\right).
\end{align*}
\end{prop}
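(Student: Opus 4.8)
The plan is to differentiate the defining identity $G(\thetaw, w) = 0$, which holds for every $w \in W$ because, by construction, $\thetaw$ is the root of the weighted estimating equation. \lemref{implicit_function_theorem} already guarantees that $w \mapsto \thetaw$ is continuous and differentiable on $W$ (in the regime $\delta \le \deltasize$), so the composition $w \mapsto G(\thetaw, w)$ is differentiable and its total derivative in $w$ must vanish identically. Implicit differentiation of this identity is the entire content of the proposition.

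Concretely, I would apply the chain rule to $G(\thetaw, w) = 0$ and contract against an arbitrary direction $a \in \mathbb{R}^N$. This splits the total derivative into two contributions. The first is the partial derivative of $G$ in its parameter argument, composed with the derivative of the inner map; since $h_n(\theta) = \partial g_n(\theta)/\partial \theta^T$ gives $\partial G(\theta, w)/\partial \theta^T = H(\theta, w)$ by definition, this contribution is exactly $H(\thetaw, w)\, \tfrac{d\thetaw}{dw^T}\at{w} a$. The second is the direct partial derivative of $G$ in its weight argument contracted against $a$; here the \emph{linearity} of $G$ in $w$ does the work, because $G(\theta, w) = \tfrac{1}{N}\sum_n w_n g_n(\theta)$ implies that this direct derivative against $a$ is simply $\tfrac{1}{N}\sum_n a_n g_n(\theta) = G(\theta, a)$, evaluated at $\theta = \thetaw$.

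Setting the sum of the two contributions to zero then yields
\[
H(\thetaw, w)\, \frac{d\thetaw}{dw^T}\at{w} a + G(\thetaw, a) = 0.
\]
The hypothesis that $H(\thetaw, w)$ is invertible (supplied in the relevant regime by \lemref{continuous_invertibility}) allows me to left-multiply by $H(\thetaw, w)^{-1}$ and solve for the derivative, producing the claimed identity $\tfrac{d\thetaw}{dw^T}\at{w} a = -H(\thetaw, w)^{-1} G(\thetaw, a)$.

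I do not expect a serious obstacle here, as the result is essentially the implicit function theorem made explicit. The only point needing care is the bookkeeping of the chain rule when the inner function $\thetaw$ is vector-valued and the outer function $G$ depends on $w$ both directly and through $\theta$: one must correctly identify the direct $w$-derivative of $G$ with $G(\cdot, a)$ by linearity, rather than inadvertently routing it through the parameter-dependence term. The invertibility hypothesis is precisely what permits inverting the parameter-derivative block $H(\thetaw, w)$ to isolate $\tfrac{d\thetaw}{dw^T}$.
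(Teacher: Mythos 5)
Your proposal is correct and follows essentially the same route as the paper's proof: implicitly differentiate the identity $G(\thetaw,w)=0$, split the total derivative via the chain rule into $H(\thetaw,w)\,\frac{d\thetaw}{dw^{T}}\at{w}a$ plus the direct $w$-derivative, identify the latter with $G(\thetaw,a)$ by linearity of $G$ in $w$, and invert $H(\thetaw,w)$. The only cosmetic difference is that you cite \lemref{implicit_function_theorem} for differentiability of $w\mapsto\thetaw$, which the paper leaves implicit in the hypothesis of the proposition.
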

\begin{proof}
Because $G\left(\thetaw,w\right)=0$ for all $w\in W$, by direct
calculation,
\begin{align*}
0 & =\frac{d}{dw^{T}}G\left(\thetaw,w\right)\at wa\\
 & =\left(\frac{\partial G}{\partial\theta^{T}}\frac{d\hat{\theta}}{dw^{T}}+\frac{\partial G}{\partial w^{T}}\right)\at{_{w}}a\\
 & =H\left(\thetaw,w\right)\frac{d\hat{\theta}}{dw^{T}}\at{_{w}}a+\left(\frac{\partial}{\partial w^{T}}\frac{1}{N}\sum_{n=1}^{N}w_{n}g_{n}\left(\theta\right)\right)\at{_{w}}a\\
 & =H\left(\thetaw,w\right)\frac{d\hat{\theta}}{dw^{T}}\at{_{w}}a+\frac{1}{N}\sum_{n=1}^{N}g_{n}\left(\thetaw\right)a\\
 & =H\left(\thetaw,w\right)\frac{d\hat{\theta}}{dw^{T}}\at{_{w}}a+G\left(\thetaw,a\right).
\end{align*}
Because $H\left(\thetaw,w\right)$ is invertible by assumption, the
result follows.
\end{proof}
\begin{defn}
\label{defref:theta_infinitesimal_jackknife}Define
\begin{align*}
\thetaij\left(w\right) & :=\thetaone+\frac{d\thetaw}{dw^{T}}\at{\onevec}\left(w-\onevec\right)\\
 & =\thetaone-\hone^{-1}G\left(\thetaone,w\right).\textrm{ (because }G\left(\thetaone,\onevec\right)=0\textrm{)}
\end{align*}
\end{defn}
$\thetaij\left(w\right)$ in \defrefref{theta_infinitesimal_jackknife}
is the first term in a Taylor series expansion of $\thetaw$ as a
function of $w$. We want to bound the error, $\thetaij\left(w\right)-\thetaw$.
%
\begin{thm}
\label{thmref:taylor_error_first}

Under \paperallcoreassum and \condref{paper_uniform_bound},
when $\delta\le\deltasize$,
\begin{align*}
\sup_{w\in W_{\delta}}\norm{\thetaij\left(w\right)-\thetahat\left(w\right)}_{2} & \le2\constop[2]\constij\delta^{2}.
\end{align*}
\end{thm}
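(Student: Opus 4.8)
The plan is to write both $\thetaij(w)$ and the exact root $\thetaw$ as $\thetaone$ minus an inverse Hessian applied to the \emph{same} vector $G(\thetaone,w)$, so that their difference collapses to a single term measuring only the discrepancy between the two Hessians. By \defrefref{theta_infinitesimal_jackknife}, $\thetaij(w)-\thetaone = -\hone^{-1}G(\thetaone,w)$. For the exact root I would apply \proprefref{taylor_series_remainder} with $\theta=\thetaw$ and $\tilde w = w$; since $\thetaw$ solves the estimating equation, $G(\thetaw,w)=0$, so $-G(\thetaone,w) = \hint(\thetaw,w)(\thetaw-\thetaone)$, and once the integrated Hessian is known to be invertible this rearranges to $\thetaw-\thetaone = -\hint(\thetaw,w)^{-1}G(\thetaone,w)$. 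Subtracting,
\[
\thetaij(w)-\thetaw = \left(\hint(\thetaw,w)^{-1}-\hone^{-1}\right)G(\thetaone,w).
\]

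Next I would apply the resolvent identity $B^{-1}-A^{-1}=B^{-1}(A-B)A^{-1}$ with $A=\hone$ and $B=\hint(\thetaw,w)$, together with submultiplicativity of the operator norm, to get
\[
\norm{\thetaij(w)-\thetaw}_2 \le \norm{\hint(\thetaw,w)^{-1}}_{op}\,\norm{\hone-\hint(\thetaw,w)}_{op}\,\norm{\hone^{-1}}_{op}\,\norm{G(\thetaone,w)}_2,
\]
and bound each factor uniformly over $w\in W_\delta$. The factor $\norm{\hone^{-1}}_{op}\le\constop$ is \assuref{paper_hessian}. For the vector, $G(\thetaone,\onevec)=0$ gives $G(\thetaone,w)=G(\thetaone,\wdiff)$, so $\norm{G(\thetaone,w)}_2\le\norm{G(\thetaone,\wdiff)}_1\le\delta$ by \condref{paper_uniform_bound}. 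For the Hessian discrepancy, I would pass to the $L_1$ norm (which dominates the operator norm) and invoke \lemref{int_h_difference_from_one} to get $\norm{\hone-\hint(\thetaw,w)}_{op}\le\constij\delta$. Multiplying $2\constop$, $\constij\delta$, $\constop$, and $\delta$ then yields exactly $2\constop[2]\constij\delta^2$.

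The step needing the most care, and the only place the full strength of $\delta\le\deltasize$ is used, is the bound $\norm{\hint(\thetaw,w)^{-1}}_{op}\le 2\constop$. By \lemref{theta_difference_bound}, $\thetaw\in\thetadeltaball$, so by convexity the whole integration segment $\thetaone+t(\thetaw-\thetaone)$, $t\in[0,1]$, lies in $\thetadeltaball$; \lemref{continuous_invertibility} controls $\norm{H(\theta,w)^{-1}}_{op}\le 2\constop$ pointwise along this segment, and \lemref{hess_integral_invertible} lifts the bound from $H$ to the integrated Hessian $\hint$. Keeping straight that the exact Hessian $\hone$ contributes the clean factor $\constop$ while the integrated Hessian evaluated at the unknown $\thetaw$ contributes $2\constop$ is what produces the constant $2\constop[2]$ rather than $\constop[2]$ or $4\constop[2]$.
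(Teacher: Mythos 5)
Your proposal is correct and follows essentially the same route as the paper's proof: the same Taylor-remainder decomposition via \proprefref{taylor_series_remainder}, the same resolvent identity $\hint^{-1}-\hone^{-1}=\hint^{-1}(\hone-\hint)\hone^{-1}$, and the same chain of \lemref{theta_difference_bound}, \lemref{continuous_invertibility}, \lemref{hess_integral_invertible}, and \lemref{int_h_difference_from_one} to bound the factors, yielding $2\constop[2]\constij\delta^{2}$ exactly as in the paper. Your closing remark correctly pinpoints where the factor $2$ enters (the integrated Hessian at the unknown $\thetaw$ versus the clean $\constop$ from \assuref{paper_hessian}), which matches the paper's accounting.
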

\begin{proof}
Applying \proprefref{taylor_series_remainder} with $\theta = \thetaw$ and
$\tilde{w} = w$, we have
\begin{align*}
0=G\left(\thetaw,w\right) & =
    G\left(\thetaone, w\right)
    +\hint\left(\thetaw, w\right) \left(\thetaw[w] - \thetaone\right).
\end{align*}
Because $\delta\in W_{\delta}$, \lemref{theta_difference_bound} implies that
$\thetaw\in\thetadeltaball$ so, \lemref{continuous_invertibility}
and \lemref{hess_integral_invertible} imply that
$\hint\left(\thetaw,w\right)$ is invertible and we can solve for
$\thetaw - \thetaone$.
\begin{align*}
\thetaw[w]-\thetaone & =
    -\hint\left(\thetaw, w\right)^{-1}G\left(\thetaone,w\right)\\
 & =\left(-\hint\left(\thetawiggle,w\right)^{-1} +
          H\left(\thetaone,\onevec\right)^{-1} -
          H\left(\thetaone,\onevec\right)^{-1}\right)
            G\left(\thetaone,w\right)\\
 & =\left(H\left(\thetaone,\onevec\right)^{-1} -
          \hint\left(\thetaw,w\right)^{-1}\right) G\left(\thetaone,w\right)+
    \thetaij\left(w\right)-\thetaone.
\end{align*}
Eliminating $\thetaone$ and taking the supremum of both sides gives
\begin{align*}
\sup_{w\in W_{\delta}} &
    \norm{\thetaij\left(w\right)-{{\thetaw[w]}}}_{2}\\
 & =\sup_{w\in W_{\delta}}\norm{
    \left(H\left(\thetaone,\onevec\right)^{-1} -
          \hint\left(\theta, w\right)^{-1}\right)
            G\left(\thetaone,w\right)}_{2}\\
 & =\sup_{w\in W_{\delta}}\norm{
    \hint\left(\thetaw, w\right)^{-1}\left(
        \hint\left(\thetaw,w\right) -
        H\left(\thetaone,\onevec\right)\right)
            H\left(\thetaone,\onevec\right)^{-1}
                G\left(\thetaone,w\right)}_{2}\\
 & \le2\constop\sup_{w\in W_{\delta}}\norm{
    \left(\hint\left(\thetaw,w\right) -
          H\left(\thetaone,\onevec\right)\right)
            H\left(\thetaone,\onevec\right)^{-1}
                G\left(\thetaone,w\right)}_{2}\textrm{ (Lemma
                    \ref{lem:hess_integral_invertible})}\\
 & \le2\constop\sup_{w\in W_{\delta}}\norm{
    \hint\left(\thetaw,w\right) -
    H\left(\thetaone,\onevec\right)}_{op}
    \norm{H\left(\thetaone,\onevec\right)^{-1}G\left(\thetaone,w\right)}_{2}\\
 & \le2\constop\sup_{w\in W_{\delta}}\norm{
    \hint\left(\thetaw,w\right) -
        H\left(\thetaone,\onevec\right)}_{1}
    \norm{H\left(\thetaone,\onevec\right)^{-1}G\left(\thetaone,w\right)}_{2}
    \textrm{ (ordering of matrix norms)} \\
 & \le 2\constop\constij\delta
    \sup_{w\in W_{\delta}}
        \norm{H\left(\thetaone,\onevec\right)^{-1}
              G\left(\thetaone,w\right)}_{2}
    \textrm{ (Lemma \ref{lem:int_h_difference_from_one})}\\
 & \le2\constop[2]\constij\delta
    \sup_{w\in W_{\delta}}\norm{G\left(\thetaone,w\right)}_{2}
    \textrm{ (Assumption \ref{assu:paper_hessian})}\\
 & =2\constop[2]\constij\delta
    \sup_{w\in W_{\delta}}
        \norm{G\left(\thetaone,w\right) -
              G\left(\thetaone,\onevec\right)}_{2}
    \textrm{ (because }G\left(\thetaone,\onevec\right)=0\textrm{)}\\
 & \le2\constop[2]\constij\delta^{2}
    \textrm{ (Condition \ref{cond:paper_uniform_bound})}.
\end{align*}

\end{proof}
%

\subsection{Use cases\label{sec:use_cases}}

First, let us state a simple condition under which \coreassum
hold. It will help to have a lemma for the Lipschitz continuity.
\begin{lem}
Derivative Cauchy Schwartz. Let $a\left(\theta\right)=\left(a_{1}\left(\theta\right),...,a_{N}\left(\theta\right)\right)$
be an array of tensors with multi-index $i\in\left[D_{A}\right]$,
and let $\frac{\partial a\left(\theta\right)}{\partial\theta}=\left(\frac{\partial}{\partial\theta}a_{1}\left(\theta\right),...,\frac{\partial}{\partial\theta}a_{N}\left(\theta\right)\right)$
be an array of tensors of size $D\times D_{A}$. Then
\begin{align*}
\norm{\frac{\partial}{\partial\theta}\norm{a\left(\theta\right)}_{2}}_{2} & \le D_{A}\norm{\frac{\partial a}{\partial\theta}}_{2}.
\end{align*}
\end{lem}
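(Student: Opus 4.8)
The plan is to differentiate the scalar $S(\theta) := \norm{a(\theta)}_2$ directly and then control each of its partial derivatives by Cauchy--Schwarz. Writing $S(\theta) = \left(\sum_{n=1}^N \sum_{i=1}^{D_A} \left(a_n(\theta)\right)_i^2\right)^{1/2}$, at any point where $S(\theta) > 0$ the chain rule gives, for each coordinate $d \in [D]$,
\[
\frac{\partial S}{\partial \theta_d}
  = \frac{1}{S(\theta)}
    \sum_{n=1}^N \sum_{i=1}^{D_A}
    \left(a_n(\theta)\right)_i
    \frac{\partial \left(a_n(\theta)\right)_i}{\partial \theta_d}.
\]

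First I would fix $d$ and apply Cauchy--Schwarz to the sum over the combined index $(n,i)$, obtaining
\[
\left(
  \sum_{n,i} \left(a_n\right)_i
  \frac{\partial \left(a_n\right)_i}{\partial \theta_d}
\right)^2
\le
\left( \sum_{n,i} \left(a_n\right)_i^2 \right)
\left( \sum_{n,i}
  \left( \frac{\partial \left(a_n\right)_i}{\partial \theta_d} \right)^2
\right)
= S(\theta)^2 \sum_{n,i}
  \left( \frac{\partial \left(a_n\right)_i}{\partial \theta_d} \right)^2.
\]
Then I would sum over $d$ and cancel the common factor $S(\theta)^2$, which yields
\[
\norm{\frac{\partial}{\partial\theta}\norm{a(\theta)}_2}_2^2
= \sum_{d=1}^D \left(\frac{\partial S}{\partial \theta_d}\right)^2
\le \sum_{d=1}^D \sum_{n,i}
  \left( \frac{\partial \left(a_n\right)_i}{\partial \theta_d} \right)^2
= \norm{\frac{\partial a}{\partial \theta}}_2^2.
\]
This is in fact the claimed inequality with the constant sharpened from $D_A$ to $1$; since $D_A \ge 1$, the stated bound follows a fortiori.

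Finally I would dispose of the degenerate case, which is the only genuine obstacle. The function $\norm{a(\cdot)}_2$ need not be differentiable where $a(\theta) = 0$, since the Euclidean norm is not smooth at the origin, so the chain-rule step above is valid only on $\{\theta : S(\theta) > 0\}$. At a point where $a(\theta) = 0$ and $\norm{a(\cdot)}_2$ does happen to be differentiable, that point is a global minimizer of the nonnegative function $\norm{a(\cdot)}_2$, so its gradient vanishes and the left-hand side equals $0$, making the inequality trivially true. Hence the bound holds wherever the left-hand side is defined, and everything else reduces to the routine Cauchy--Schwarz computation above.
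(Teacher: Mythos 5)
Your proof is correct, and it in fact establishes a sharper inequality than the one stated: you obtain the constant $1$ in place of $D_A$. The difference from the paper's argument lies in how Cauchy--Schwarz is deployed. The paper differentiates $\norm{a\left(\theta\right)}_{2}^{2}$, applies Cauchy--Schwarz to the sum over $n$ separately for each fixed component $i$, and then applies a second Jensen-type averaging step to the resulting sum over the $D_A$ components; it is this second step that introduces the factor $D_A$ (i.e., $D_{A}^{2}$ after squaring), and the chain rule then converts the bound on $\norm{a}_{2}^{2}$ into the stated bound on $\norm{a}_{2}$. You instead apply Cauchy--Schwarz once, jointly over the combined index $\left(n,i\right)$, which avoids that loss entirely; since $D_A \ge 1$, the stated inequality follows a fortiori. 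Notably, the paper itself uses exactly your joint-index version---obtaining the constant $1$---in its proof of the very next result, Lemma \ref{lem:lipschitz_helper}, so your argument is consistent with the rest of the appendix. You also address a point the paper passes over silently: its chain-rule step divides by $\norm{a\left(\theta\right)}_{2}^{2}$, which is legitimate only where $a\left(\theta\right)\ne 0$, whereas you observe that at any point where $a\left(\theta\right)=0$ and the norm is nevertheless differentiable, the left-hand side is the gradient of a nonnegative function at a global minimizer and hence vanishes, so the inequality holds wherever it is meaningful. Nothing downstream is affected by the improved constant: the lemma's uses (e.g., in the proof of Proposition \ref{prop:assumptions_hold}) only require some finite constant, so your sharper bound could be substituted verbatim.
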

\begin{proof}
By direct calculation,
\begin{align*}
\norm{\frac{\partial}{\partial\theta}\norm{a\left(\theta\right)}_{2}^{2}}_{2}^{2} & =\sum_{r=1}^{D}\left(\frac{\partial}{\partial\theta_{r}}\sum_{n=1}^{N}\sum_{i=1}^{D_{A}}a_{n,i}\left(\theta\right)^{2}\right)^{2}\\
 & =\sum_{r=1}^{D}\left(\sum_{n=1}^{N}\sum_{i=1}^{D_{A}}2a_{n,i}\left(\theta\right)\frac{\partial a_{n,i}\left(\theta\right)}{\partial\theta_{r}}\right)^{2}\\
 & \le\sum_{r=1}^{D}\left(2\sum_{i=1}^{D_{A}}\left(\sum_{n=1}^{N}a_{n,i}\left(\theta\right)^{2}\right)^{\frac{1}{2}}\left(\sum_{n=1}^{N}\left(\frac{\partial a_{n,i}\left(\theta\right)}{\partial\theta_{r}}\right)^{2}\right)^{\frac{1}{2}}\right)^{2}\\
 & \le\sum_{r=1}^{D}\left(2D_{A}^{2}\left(\frac{1}{D_{A}}\sum_{i=1}^{D_{A}}\sum_{n=1}^{N}a_{n,i}\left(\theta\right)^{2}\right)^{\frac{1}{2}}\left(\frac{1}{D_{A}}\sum_{n=1}^{N}\left(\frac{\partial a_{n,i}\left(\theta\right)}{\partial\theta_{r}}\right)^{2}\right)^{\frac{1}{2}}\right)^{2}\\
 & =4D_{A}^{2}\norm a_{2}^{2}\sum_{r=1}^{D}\norm{\frac{\partial a}{\partial\theta_{r}}}_{2}^{2}\\
 & =4D_{A}^{2}\norm a_{2}^{2}\norm{\frac{\partial a}{\partial\theta}}_{2}^{2}.
\end{align*}
By the chain rule,
\begin{align*}
\norm{\frac{\partial}{\partial\theta}\norm{a\left(\theta\right)}_{2}}_{2}^{2} & =\frac{1}{4\norm{a\left(\theta\right)}_{2}^{2}}\norm{\frac{\partial}{\partial\theta}\norm{a\left(\theta\right)}_{2}^{2}}_{2}^{2}\le D_{A}^{2}\norm{\frac{\partial a}{\partial\theta}}_{2}^{2}.
\end{align*}
\end{proof}
\begin{lem}
\label{lem:lipschitz_helper}Let $a\left(\theta\right)\in\mathbb{R}^{D\times D}$
be a continuously differentiable random matrix with a $D\times D\times D$
derivative tensor. (Note that the function, not $\theta$, is random.
For example, $\mbe\left[a\left(\theta\right)\right]$ is still a function
of $\theta$.) Suppose that $\mbe\left[\norm{a\left(\theta\right)}_{2}\right]$
is finite for all $\theta\in\Omega_{\theta}$. Then, for all $\theta_{1},\theta_{2}\in\Omega_{\theta}$,
\begin{align*}
\left|\mbe\left[\norm{a\left(\theta_{1}\right)}_{2}\right]-\mbe\left[\norm{a\left(\theta_{2}\right)}_{2}\right]\right| & \le\sqrt{\mbe\left[\sup_{\theta\in\Omega_{\theta}}\norm{\frac{\partial a\left(\theta\right)}{\partial\theta}}_{2}^{2}\right]}\norm{\theta_{1}-\theta_{2}}_{2}.
\end{align*}
\end{lem}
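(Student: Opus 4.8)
The plan is to prove a pathwise (realization-by-realization) Lipschitz bound on the map $\theta \mapsto \norm{a(\theta)}_2$ with a random Lipschitz constant, and then take expectations, converting the resulting first moment into the square root of a second moment via Jensen's inequality. The one point that needs care is that the Frobenius-type norm $\norm{a(\theta)}_2$ is not differentiable where $a(\theta)=0$, so I would avoid differentiating the norm itself; instead I would bound the difference of norms by the norm of the difference and differentiate only the (smooth) matrix entries $a_{ij}(\theta)$. This also explains why the stated bound carries no factor of $D_A$: routing through the preceding Derivative Cauchy--Schwartz lemma, which differentiates $\norm{a(\theta)}_2$ directly, would introduce an extra dimensional factor that the direct argument avoids.

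First I would apply the reverse triangle inequality, $\left| \norm{a(\theta_1)}_2 - \norm{a(\theta_2)}_2 \right| \le \norm{a(\theta_1) - a(\theta_2)}_2$, which holds for every realization. Next, writing each entry along the segment $\theta(t) := \theta_2 + t(\theta_1 - \theta_2)$ by the fundamental theorem of calculus, $a_{ij}(\theta_1) - a_{ij}(\theta_2) = \int_0^1 \nabla_\theta a_{ij}(\theta(t))^T (\theta_1 - \theta_2)\, dt$, and bounding by Cauchy--Schwarz in $\theta$-space followed by Jensen's inequality on the $t$-integral, I would obtain $|a_{ij}(\theta_1) - a_{ij}(\theta_2)|^2 \le \norm{\theta_1 - \theta_2}_2^2 \int_0^1 \norm{\nabla_\theta a_{ij}(\theta(t))}_2^2\, dt$. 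Summing over the $D^2$ entries $(i,j)$ recognizes the integrand as $\norm{\partial a(\theta(t))/\partial\theta}_2^2$, and replacing the integral over $t$ by the supremum over $\Omega_{\theta}$ (the segment lying in $\Omega_{\theta}$) gives the pathwise bound $\norm{a(\theta_1) - a(\theta_2)}_2 \le \left( \sup_{\theta \in \Omega_{\theta}} \norm{\partial a(\theta)/\partial\theta}_2 \right) \norm{\theta_1 - \theta_2}_2$.

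Finally I would take expectations. Since $\left| \mbe[\norm{a(\theta_1)}_2] - \mbe[\norm{a(\theta_2)}_2] \right| \le \mbe\left[ \left| \norm{a(\theta_1)}_2 - \norm{a(\theta_2)}_2 \right| \right]$, chaining the two pathwise bounds yields $\left| \mbe[\norm{a(\theta_1)}_2] - \mbe[\norm{a(\theta_2)}_2] \right| \le \mbe\left[ \sup_{\theta \in \Omega_{\theta}} \norm{\partial a(\theta)/\partial\theta}_2 \right] \norm{\theta_1 - \theta_2}_2$. The claimed inequality then follows from Jensen's inequality (equivalently Cauchy--Schwarz), $\mbe[X] \le \sqrt{\mbe[X^2]}$, applied to $X = \sup_{\theta \in \Omega_{\theta}} \norm{\partial a(\theta)/\partial\theta}_2$.

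The argument is essentially routine, so there is no single hard obstacle; the main thing to get right is the non-smoothness of the norm at $a=0$, which is handled by never differentiating the norm, together with mild measurability and integrability bookkeeping (measurability of the supremum, finiteness of the second moment appearing on the right-hand side, and the assumed finiteness of $\mbe[\norm{a(\theta)}_2]$). I would also flag the implicit use of domain convexity so that the integration path stays in $\Omega_{\theta}$; if $\Omega_{\theta}$ is not convex, one integrates over a convex region on which $a$ is differentiable and still bounds by the supremum over $\Omega_{\theta}$.
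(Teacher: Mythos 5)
Your proof is correct, and it reaches the bound by a genuinely different route in the key pathwise step. The paper differentiates the norm itself: it first shows $\norm{\frac{\partial}{\partial\theta}\norm{a(\theta)}_{2}^{2}}_{2}^{2}\le 4\norm{a(\theta)}_{2}^{2}\norm{\frac{\partial a(\theta)}{\partial\theta}}_{2}^{2}$ by an entry-wise Cauchy--Schwarz, then applies the chain rule to conclude $\norm{\frac{\partial}{\partial\theta}\norm{a(\theta)}_{2}}_{2}\le\norm{\frac{\partial a(\theta)}{\partial\theta}}_{2}$, and finally uses a mean-value-type bound on the scalar map $\theta\mapsto\norm{a(\theta)}_{2}$ followed by the same expectation-plus-Jensen finish that you use. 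You instead never differentiate the norm: the reverse triangle inequality reduces everything to $\norm{a(\theta_{1})-a(\theta_{2})}_{2}$, which you control by the fundamental theorem of calculus applied to the smooth entries $a_{ij}(\theta)$ along the segment. Your route buys genuine rigor at one point where the paper is cavalier: the paper's chain-rule step divides by $\norm{a(\theta)}_{2}$, which is ill-defined wherever $a(\theta)=0$ (the norm is not differentiable there), whereas your argument is valid with no such caveat. One small correction to your framing, though: differentiating the norm directly does \emph{not} force an extra dimensional factor. The paper's own proof of this lemma obtains the same constant-free bound $\norm{\frac{\partial}{\partial\theta}\norm{a(\theta)}_{2}}_{2}\le\norm{\frac{\partial a(\theta)}{\partial\theta}}_{2}$ by applying Cauchy--Schwarz tightly to $\sum_{i}a_{i}\frac{\partial a_{i}}{\partial\theta_{r}}$; only the earlier, unnumbered ``Derivative Cauchy Schwartz'' lemma carries the loose $D_{A}$ factor, and this lemma's proof does not route through it. Finally, both your proof and the paper's implicitly require the segment joining $\theta_{1}$ and $\theta_{2}$ to lie in $\Omega_{\theta}$ (convexity of the domain); you flag this explicitly, which the paper does not.
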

\begin{proof}
For any tensor $a$ with multi-index $i$,
\begin{align*}
\norm{\frac{\partial}{\partial\theta}\norm a_{2}^{2}}_{2}^{2} & =\sum_{r=1}^{D}\left(\frac{\partial}{\partial\theta_{r}}\norm a_{2}^{2}\right)^{2}\\
 & =\sum_{r=1}^{D}\left(\frac{\partial}{\partial\theta_{r}}\sum_{i=1}^{D_{A}}a_{i}^{2}\right)^{2}\\
 & =\sum_{r=1}^{D}\left(2\sum_{i=1}^{D_{A}}a_{i}\frac{\partial a_{i}}{\partial\theta_{r}}\right)^{2}\\
 & \le4\sum_{r=1}^{D}\sum_{i=1}^{D_{A}}a_{i}^{2}\sum_{i=1}^{D_{A}}\left(\frac{\partial a_{i}}{\partial\theta_{r}}\right)^{2}\textrm{ (Cauchy-Schwartz)}\\
 & =4\sum_{i=1}^{D_{A}}a_{i}^{2}\sum_{r=1}^{D}\sum_{i=1}^{D_{A}}\left(\frac{\partial a_{i}}{\partial\theta_{r}}\right)^{2}\\
 & =4\norm a_{2}^{2}\norm{\frac{\partial a}{\partial\theta}}_{2}^{2}.
\end{align*}

Consequently,
\begin{align*}
\norm{\frac{\partial}{\partial\theta}\norm{a\left(\theta\right)}_{2}}_{2}^{2} & =\norm{\frac{1}{2\norm{a\left(\theta\right)}_{2}}\frac{\partial}{\partial\theta}\norm{a\left(\theta\right)}_{2}^{2}}_{2}^{2}\\
 & =\frac{1}{4\norm{a\left(\theta\right)}_{2}^{2}}\norm{\frac{\partial}{\partial\theta}\norm{a\left(\theta\right)}_{2}^{2}}_{2}^{2}\\
 & \le\frac{4\norm{a\left(\theta\right)}_{2}^{2}}{4\norm{a\left(\theta\right)}_{2}^{2}}\norm{\frac{\partial}{\partial\theta}a\left(\theta\right)}_{2}^{2}\\
 & =\norm{\frac{\partial a\left(\theta\right)}{\partial\theta}}_{2}^{2}.
\end{align*}
So for any $\theta_{1},\theta_{2}\in\Omega_{\theta}$,
\begin{align*}
\left|\mbe\left[\norm{a\left(\theta_{1}\right)}_{2}\right]-\mbe\left[\norm{a\left(\theta_{2}\right)}_{2}\right]\right| & \le\mbe\left[\left|\norm{a\left(\theta_{1}\right)}_{2}-\norm{a\left(\theta_{2}\right)}_{2}\right|\right]\\
 & \le\mbe\left[\left(\sup_{\theta\in\Omega_{\theta}}\norm{\frac{\partial}{\partial\theta}\norm{a\left(\theta\right)}_{2}}_{2}\right)\right]\norm{\theta_{1}-\theta_{2}}_{2}\textrm{ (}\theta\textrm{ is not random)}\\
 & \le\mbe\left[\left(\sup_{\theta\in\Omega_{\theta}}\norm{\frac{\partial a\left(\theta\right)}{\partial\theta}}_{2}\right)\right]\norm{\theta_{1}-\theta_{2}}_{2}\\
 & \le\sqrt{\mbe\left[\sup_{\theta\in\Omega_{\theta}}\norm{\frac{\partial a\left(\theta\right)}{\partial\theta}}_{2}^{2}\right]}\norm{\theta_{1}-\theta_{2}}_{2}.
\end{align*}
The result follows. Note that the bound still holds (though vacuously)
if $\mbe\left[\sup_{\theta\in\Omega_{\theta}}\norm{\frac{\partial a\left(\theta\right)}{\partial\theta}}_{2}^{2}\right]$
is infinite.
\end{proof}
\begin{prop}
\label{prop:assumptions_hold}Let $\Omega_{\theta}$ be a compact
set. Let $g_{n}\left(\theta\right)$ be twice continuously differentiable
IID random functions. Define
\begin{align*}
h_{n}\left(\theta\right) & :=\frac{\partial g{}_{n}\left(\theta\right)}{\partial\theta}\\
r_{n}\left(\theta\right) & :=\frac{\partial^{2}g{}_{n}\left(\theta\right)}{\partial\theta\partial\theta},
\end{align*}
where $r_{n}\left(\theta\right)$ is a $D\times D\times D$ tensor.
Assume that

1a) $\mbe\left[\sup_{\theta\in\Omega_{\theta}}\norm{g_{n}\left(\theta\right)}_{2}^{2}\right]<\infty$;

1b) $\mbe\left[\sup_{\theta\in\Omega_{\theta}}\norm{h_{n}\left(\theta\right)}_{2}^{2}\right]<\infty$;

1c) $\mbe\left[\sup_{\theta\in\Omega_{\theta}}\norm{r_{n}\left(\theta\right)}_{2}^{2}\right]<\infty;$

2) $\mbe\left[h_{n}\left(\theta\right)\right]$ is non-singular for
all $\theta\in\Omega_{\theta}$;

3) We can exchange expectation and differentiation.

Then $\lim_{N\rightarrow\infty}P\left(\textrm{\coreassum\ hold}\right)=1.$
\end{prop}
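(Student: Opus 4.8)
The plan is to verify each of the four core assumptions \coreassum separately, showing that each holds with probability tending to one, and then to combine them by a union bound over the finitely many events. The workhorse throughout is a uniform law of large numbers (ULLN) of the form in Theorems 9.1 and 9.2 of \citet{keener:2011:theoretical}: for an IID family $\{m_n(\theta)\}$ that is continuous in $\theta$ on the compact set $\Omega_\theta$ with $\mbe[\sup_{\theta}|m_n(\theta)|] < \infty$, one has $\sup_{\theta\in\Omega_\theta}|\frac1N\sum_n m_n(\theta) - \mbe[m_n(\theta)]| \to 0$ almost surely, and the limit $\theta\mapsto\mbe[m_n(\theta)]$ is continuous. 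The smoothness assumption \assuref{paper_smoothness} needs no probabilistic argument: each $g_n$ is twice continuously differentiable by hypothesis, hence continuously differentiable, so it holds surely.

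For the bounded-averages assumption \assuref{paper_bounded}, I would apply the ULLN to the scalar maps $\theta \mapsto \norm{g_n(\theta)}_2^2$ and $\theta \mapsto \norm{h_n(\theta)}_2^2$, whose suprema are integrable by hypotheses (1a) and (1b). This gives uniform convergence of $\frac1N\norm{g(\theta)}_2^2$ and $\frac1N\norm{h(\theta)}_2^2$ to continuous limits, which are bounded on the compact $\Omega_\theta$; taking $\constg$ and $\consth$ slightly above the suprema of these limits then secures the assumption with high probability for large $N$. The non-degeneracy assumption \assuref{paper_hessian} is the crux. Applying the ULLN entrywise to $h_n(\theta)$ (using (1b) and the exchangeability hypothesis (3) to identify $\mbe[h_n(\theta)]$ with the derivative of $\mbe[g_n(\theta)]$) yields $\sup_\theta \norm{H(\theta,\onevec) - \mbe[h_n(\theta)]}_1 \to 0$. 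Since $\mbe[h_n(\theta)]$ is nonsingular by hypothesis (2) and continuous on the compact $\Omega_\theta$, its smallest singular value is bounded below, so $\sup_\theta\norm{\mbe[h_n(\theta)]^{-1}}_{op} \le C_0 < \infty$. The quantitative perturbation bound \proprefref{operator_norm_continuity} then converts this population bound into a uniform finite-sample one: once the uniform $L_1$ deviation falls below $\frac12 C_0^{-1}$, which occurs with high probability, we obtain $\norm{H(\theta,\onevec)^{-1}}_{op} \le 2C_0$ for all $\theta$, and we set $\constop = 2C_0$.

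For the local-smoothness assumption \assuref{paper_lipschitz}, I would bound $h_n$ termwise through its derivative $r_n$ by the fundamental theorem of calculus, $\norm{h_n(\theta) - h_n(\thetaone)}_2 \le (\sup_{\theta'}\norm{r_n(\theta')}_2)\norm{\theta-\thetaone}_2$ up to a dimension factor, so that
\begin{align*}
\frac{\norm{h(\theta)-h(\thetaone)}_2^2}{N}
&\le \Big(\frac1N\sum_{n=1}^N \sup_{\theta'}\norm{r_n(\theta')}_2^2\Big)\norm{\theta-\thetaone}_2^2.
\end{align*}
Here the ordinary strong law suffices: by hypothesis (1c), $\frac1N\sum_n \sup_{\theta'}\norm{r_n(\theta')}_2^2 \to \mbe[\sup_{\theta'}\norm{r_n(\theta')}_2^2] < \infty$ almost surely, so with high probability this average is at most twice its limit; taking $\liph$ equal to the square root of that bound (and $\thetasize$ as large as the diameter of $\Omega_\theta$, since the estimate holds globally) gives the assumption. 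The careful version of the termwise derivative estimate is exactly the content of \lemref{lipschitz_helper} together with the preceding derivative Cauchy--Schwarz bound.

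To finish, each of the four events has probability tending to one, so by a union bound the probability that at least one of \coreassum fails is bounded above by a sum of four vanishing probabilities; hence $\lim_{N\to\infty} P\left(\textrm{\coreassum hold}\right) = 1$. I expect \assuref{paper_hessian} to be the main obstacle, as it is the only step that requires both the full uniformity of the ULLN and the perturbation estimate of \proprefref{operator_norm_continuity} to promote pointwise nonsingularity of the mean Hessian into a uniform operator-norm bound on its empirical inverse.
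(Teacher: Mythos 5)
Your proposal is correct, and for three of the four assumptions it follows the paper's own proof essentially step for step: \assuref{paper_smoothness} holds surely; \assuref{paper_bounded} follows from the uniform law of large numbers applied to $\norm{g_{n}\left(\theta\right)}_{2}^{2}$ and $\norm{h_{n}\left(\theta\right)}_{2}^{2}$ plus compactness, with the constants taken slightly above the population suprema; and \assuref{paper_hessian} follows from entrywise uniform convergence of $H\left(\theta,\onevec\right)$ to $\mbe\left[h_{n}\left(\theta\right)\right]$, the uniform bound $\sup_{\theta}\norm{\mbe\left[h_{n}\left(\theta\right)\right]^{-1}}_{op}\le C_{0}$ obtained from continuity, nonsingularity, and compactness, and the perturbation bound of \proprefref{operator_norm_continuity} yielding $\constop = 2C_{0}$---exactly the paper's argument. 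Where you genuinely depart from the paper is \assuref{paper_lipschitz}. The paper controls the Lipschitz constant of $\theta \mapsto N^{-1/2}\norm{h\left(\theta\right)}_{2}$ by differentiating the norm itself (its ``Derivative Cauchy Schwartz'' lemma together with \lemref{lipschitz_helper}), which bounds that constant by $D^{2}\sqrt{\sup_{\theta}\frac{1}{N}\norm{r\left(\theta\right)}_{2}^{2}}$, and then invokes the uniform law of large numbers a third time, applied to $\norm{r_{n}\left(\theta\right)}_{2}^{2}$, to control this $\theta$-dependent supremum. You instead bound $\norm{h_{n}\left(\theta\right)-h_{n}\left(\thetaone\right)}_{2}$ termwise via the fundamental theorem of calculus by $\sup_{\theta'}\norm{r_{n}\left(\theta'\right)}_{2}\norm{\theta-\thetaone}_{2}$, so that the quantity to be controlled, $\frac{1}{N}\sum_{n=1}^{N}\sup_{\theta'}\norm{r_{n}\left(\theta'\right)}_{2}^{2}$, no longer depends on $\theta$; the ordinary strong law of large numbers, valid by hypothesis (1c), then suffices. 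This is a genuine simplification: it dispenses with the third uniform LLN and with \lemref{lipschitz_helper} entirely, at no loss---both routes produce a valid global Lipschitz constant (so $\thetasize$ may be taken as the diameter of $\Omega_{\theta}$) of the same order, $\liph^{2}$ comparable to $\mbe\left[\sup_{\theta}\norm{r_{n}\left(\theta\right)}_{2}^{2}\right]$, and your dimension factor is no worse than the paper's $D^{2}$. The price is a slightly cruder constant (average of suprema rather than supremum of averages), which is immaterial for this qualitative conclusion. The closing union bound over the four events is the same (implicit) step the paper takes.
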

\begin{proof}
The proof follows from Theorems 9.1 and
9.2 of \citet{keener:2011:theoretical}. We will first show that the expected values of
the needed functions satisfy \coreassum, and then that the sample versions
converge uniformly.

By Jensen's inequality,
\begin{align*}
\mbe\left[\sup_{\theta\in\Omega_{\theta}}\norm{g_{n}\left(\theta\right)}_{2}\right] & =\mbe\left[\sqrt{\sup_{\theta\in\Omega_{\theta}}\norm{g_{n}\left(\theta\right)}_{2}^{2}}\right]\le\sqrt{\mbe\left[\sup_{\theta\in\Omega_{\theta}}\norm{g_{n}\left(\theta\right)}_{2}^{2}\right]}.
\end{align*}
Also, for the $i^{th}$ component of $g_{n}\left(\theta\right)$
\begin{align*}
\mbe\left[\sup_{\theta\in\Omega_{\theta}}\left|g_{n,i}\left(\theta\right)\right|\right] & \le\mbe\left[\sup_{\theta\in\Omega_{\theta}}\norm{g_{n}\left(\theta\right)}_{\infty}\right]\le\mbe\left[\sup_{\theta\in\Omega_{\theta}}\norm{g_{n}\left(\theta\right)}_{2}\right].
\end{align*}
By Theorem 9.1 of \citet{keener:2011:theoretical}, $\mbe\left[\norm{g_{n}\left(\theta\right)}_{2}^{2}\right]$
, $\mbe\left[\norm{g_{n}\left(\theta\right)}_{2}\right]$, and $\mbe\left[g_{n}\left(\theta\right)\right]$
are continuous functions of $\theta$, and because $\Omega_{\theta}$
is compact, they are each bounded. Similar reasoning applies to $h_{n}\left(\theta\right)$
and $r_{n}\left(\theta\right)$. Consequently we can define
\begin{align*}
\sup_{\theta\in\Omega_{\theta}}\mbe\left[\norm{g_{n}\left(\theta\right)}_{2}^{2}\right] & =:Q_{g}^{2}<\infty\\
\sup_{\theta\in\Omega_{\theta}}\mbe\left[\norm{h_{n}\left(\theta\right)}_{2}^{2}\right] & =:Q_{h}^{2}<\infty.
\end{align*}
Below, these constants will be used to satisfy \assuref{paper_smoothness}
and \assuref{paper_bounded} with high probability.

Because $\Omega_{\theta}$ is compact, $\mbe\left[h_{n}\left(\theta\right)\right]$
is continuous, $\mbe\left[h_{n}\left(\theta\right)\right]$ is non-singular,
and the operator norm is a continuous function of $\mbe\left[h_{n}\left(\theta\right)\right]$,
we can also define
\begin{align*}
\sup_{\theta\in\Omega_{\theta}}\norm{\mbe\left[h_{n}\left(\theta\right)\right]^{-1}}_{op} & =:Q_{op}<\infty.
\end{align*}
Below, this constant be used to satisfy \assuref{paper_hessian}
with high probability.

Finally, we turn to the Lipschitz condition. \lemref{lipschitz_helper}
implies that
\begin{align*}
\left|\mbe\left[\norm{h_{n}\left(\theta_{1}\right)}_{2}\right]-\mbe\left[\norm{h_{n}\left(\theta_{2}\right)}_{2}\right]\right| & \le\sqrt{\mbe\left[\sup_{\theta\in\Omega_{\theta}}\norm{r_{n}\left(\theta\right)}_{2}^{2}\right]}\norm{\theta_{1}-\theta_{2}}_{2}.
\end{align*}
Define
\begin{align*}
\Lambda_{h} & =\sqrt{\mbe\left[\sup_{\theta\in\Omega_{\theta}}\norm{r_{n}\left(\theta\right)}_{2}^{2}\right]},
\end{align*}
so that we have shown that $\mbe\left[\norm{h_{n}\left(\theta\right)}_{2}\right]$
is Lipschitz in $\Omega_{\theta}$ with constant $\Lambda_{h}$, which
is finite by assumption.

We have now shown, essentially, that the expected versions of the
quantities we wish to control satisfy \coreassum with $N=1$.
We now need to show that the sample versions satisfy \coreassum
with high probability, which will follow from the fact that the sample
versions converge uniformly to their expectations by
Theorem 9.2 of \citet{keener:2011:theoretical}.

First, observe that \assuref{paper_smoothness} holds with probability
one by assumption. For the remaining assumption choose an $\epsilon>0$,
and define
\begin{align*}
\constg & :=\sqrt{Q_{g}^{2}+\epsilon}\\
\consth & :=\sqrt{Q_{h}^{2}+\epsilon}\\
\constop & :=2Q_{op}\\
\liph & :=\sqrt{D^{4}\Lambda_{h}^{2}+\epsilon}.
\end{align*}

By \citet{keener:2011:theoretical} Theorem 9.2,
\begin{align*}
\sup_{\theta\in\Omega_{\theta}}\left|\frac{1}{N}\sum_{n=1}^{N}\norm{g_{n}\left(\theta\right)}_{2}^{2}-\mbe\left[\norm{g_{n}\left(\theta\right)}_{2}^{2}\right]\right| & \xrightarrow[N\rightarrow\infty]{p}0.
\end{align*}
Because
\begin{align*}
\sup_{\theta\in\Omega_{\theta}}\left|\frac{1}{N}\sum_{n=1}^{N}\norm{g_{n}\left(\theta\right)}_{2}^{2}\right| & >Q_{g}^{2}+\epsilon\ge\sup_{\theta\in\Omega_{\theta}}\mbe\left[\norm{g_{n}\left(\theta\right)}_{2}^{2}\right]+\epsilon\Rightarrow\\
\sup_{\theta\in\Omega_{\theta}}\left|\frac{1}{N}\sum_{n=1}^{N}\norm{g_{n}\left(\theta\right)}_{2}^{2}-\mbe\left[\norm{g_{n}\left(\theta\right)}_{2}^{2}\right]\right| & >\epsilon,
\end{align*}

we have
\begin{align*}
 & P\left(\sup_{\theta\in\Omega_{\theta}}\left|\frac{1}{N}\sum_{n=1}^{N}\norm{g_{n}\left(\theta\right)}_{2}^{2}\right|\ge Q_{g}^{2}+\epsilon\right)\le\\
 & \quad P\left(\sup_{\theta\in\Omega_{\theta}}\left|\frac{1}{N}\sum_{n=1}^{N}\norm{g_{n}\left(\theta\right)}_{2}^{2}-\mbe\left[\norm{g_{n}\left(\theta\right)}_{2}^{2}\right]\right|\le\epsilon\right),
\end{align*}
so
\begin{align*}
P\left(\sup_{\theta\in\Omega_{\theta}}\left|\frac{1}{N}\sum_{n=1}^{N}\norm{g_{n}\left(\theta\right)}_{2}^{2}\right|\ge\constg^{2}\right) & \xrightarrow[N\rightarrow\infty]{}0.
\end{align*}
An analogous argument holds for $\frac{1}{N}\norm{h_{n}\left(\theta\right)}_{2}^{2}$.
Consequently, $P\left(\textrm{Assumption \ref{assu:paper_bounded} holds}\right)\xrightarrow[N\rightarrow\infty]{}1$.

We now consider \assuref{paper_hessian}. Again, by \citet{keener:2011:theoretical} Theorem 9.2
applied to each element of the matrix $h_{n}\left(\theta\right)$,
using a union bound over each of the $D^{2}$ entries,
\begin{align*}
\sup_{\theta\in\Omega_{\theta}}\norm{\frac{1}{N}\sum_{n=1}^{N}h_{n}\left(\theta\right)-\mbe\left[h_{n}\left(\theta\right)\right]}_{1} & \xrightarrow[N\rightarrow\infty]{p}0.
\end{align*}
By the converse of \proprefref{operator_norm_continuity}, because
$\norm{\mbe\left[h_{n}\left(\theta\right)\right]^{-1}}_{op}\le Q_{op}$,
\begin{align*}
\norm{\left(\frac{1}{N}\sum_{n=1}^{N}h_{n}\left(\theta\right)\right)^{-1}}_{op} & >2Q_{op}=\constop\Rightarrow\\
\norm{\frac{1}{N}\sum_{n=1}^{N}h_{n}\left(\theta\right)-\mbe\left[h_{n}\left(\theta\right)\right]}_{1} & >\frac{1}{2}Q_{op}^{-1}.
\end{align*}
Consequently,
\begin{align*}
 & P\left(\norm{\left(\frac{1}{N}\sum_{n=1}^{N}h_{n}\left(\theta\right)\right)^{-1}}_{op}\ge\constop\right)\le\\
 & \quad P\left(\norm{\frac{1}{N}\sum_{n=1}^{N}h_{n}\left(\theta\right)-\mbe\left[h_{n}\left(\theta\right)\right]}_{1}\right)\xrightarrow[N\rightarrow\infty]{p}0,
\end{align*}
and $P\left(\textrm{Assumption \ref{assu:paper_hessian} holds}\right)\xrightarrow[N\rightarrow\infty]{}1.$

Finally, applying \lemref{lipschitz_helper} to $\frac{1}{\sqrt{N}}\norm{h\left(\theta_{2}\right)}_{2}$,
\begin{align*}
\left|\frac{1}{\sqrt{N}}\norm{h\left(\theta_{1}\right)}_{2}-\frac{1}{\sqrt{N}}\norm{h\left(\theta_{2}\right)}_{2}\right| & \le\sup_{\theta\in\Omega_{\theta}}\norm{\frac{\partial}{\partial\theta}\frac{1}{\sqrt{N}}\norm{h\left(\theta\right)}_{2}}_{2}\norm{\theta_{1}-\theta_{2}}_{2}\\
 & \le\frac{D^{2}}{\sqrt{N}}\sup_{\theta\in\Omega_{\theta}}\norm{r\left(\theta\right)}_{2}\norm{\theta_{1}-\theta_{2}}_{2}\\
 & =D^{2}\sqrt{\sup_{\theta\in\Omega_{\theta}}\frac{1}{N}\norm{r\left(\theta\right)}_{2}^{2}}\norm{\theta_{1}-\theta_{2}}_{2}.
\end{align*}
Consequently,
\begin{align*}
\left|\frac{1}{\sqrt{N}}\norm{h\left(\theta_{1}\right)}_{2}-\frac{1}{\sqrt{N}}\norm{h\left(\theta_{2}\right)}_{2}\right| & \ge\liph\norm{\theta_{1}-\theta_{2}}_{2}\Rightarrow\\
D^{2}\sqrt{\sup_{\theta\in\Omega_{\theta}}\frac{1}{N}\norm{r\left(\theta\right)}_{2}^{2}} & \ge\liph\Rightarrow\\
\sup_{\theta\in\Omega_{\theta}}\frac{1}{N}\norm{r\left(\theta\right)}_{2}^{2}-\sup_{\theta\in\Omega_{\theta}}\mbe\left[\norm{r_{n}\left(\theta\right)}_{2}^{2}\right] & \ge\frac{\liph^{2}}{D^{4}}-\sup_{\theta\in\Omega_{\theta}}\mbe\left[\norm{r_{n}\left(\theta\right)}_{2}^{2}\right]\Rightarrow\\
\sup_{\theta\in\Omega_{\theta}}\left|\frac{1}{N}\norm{r\left(\theta\right)}_{2}^{2}-\mbe\left[\norm{r_{n}\left(\theta\right)}_{2}^{2}\right]\right| & \ge\frac{\liph^{2}}{D^{4}}-\Lambda_{h}^{2}=\epsilon.
\end{align*}
However, again by \citet{keener:2011:theoretical} Theorem 9.2,
\begin{align*}
\sup_{\theta\in\Omega_{\theta}}\left|\frac{1}{N}\norm{r\left(\theta\right)}_{2}^{2}-\mbe\left[\norm{r_{n}\left(\theta\right)}_{2}^{2}\right]\right| & \xrightarrow[N\rightarrow\infty]{p}0,
\end{align*}
so $P\left(\textrm{Assumption \ref{assu:paper_lipschitz} holds}\right)\xrightarrow[N\rightarrow\infty]{}1$.
\end{proof}

\section{Genomics Experiments Details}\label{sec:appendix_genomics}

We demonstrate the Python and R code used to run and analyze the experiments on
the genomics data in a sequence of Jupyter notebooks. The output of these
notebooks are included below, though they are best viewed in their original
notebook form. The notebooks, as well as scripts and instructions for
reproducing our analysis in its entirety, can be found in the git repository
\href{https://github.com/rgiordan/AISTATS2019SwissArmyIJ}{rgiordan/AISTATS2019SwissArmyIJ}.

\includepdf[pages=1-, scale=0.8, pagecommand={}]{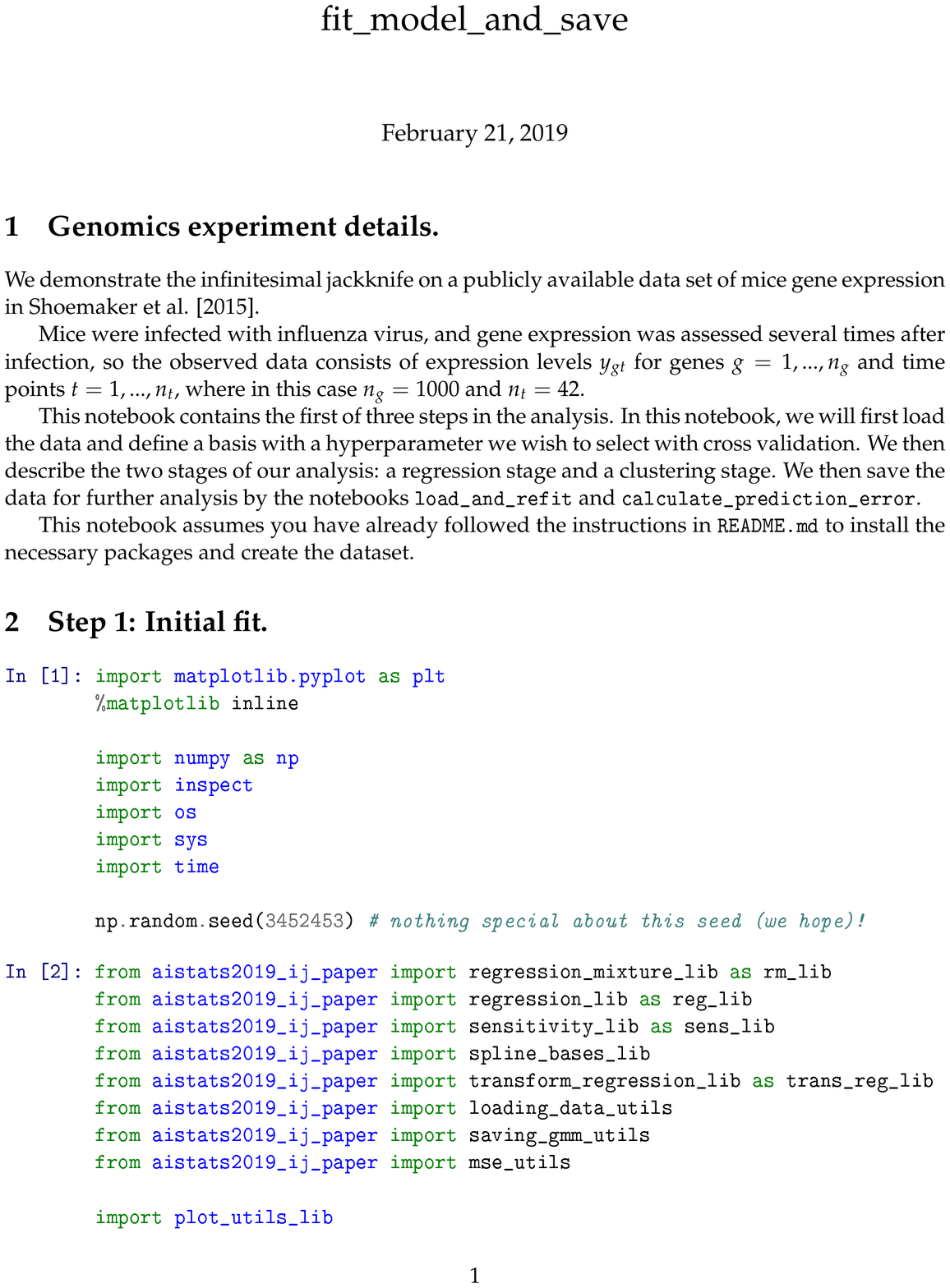}
\includepdf[pages=1-, scale=0.8, pagecommand={}]{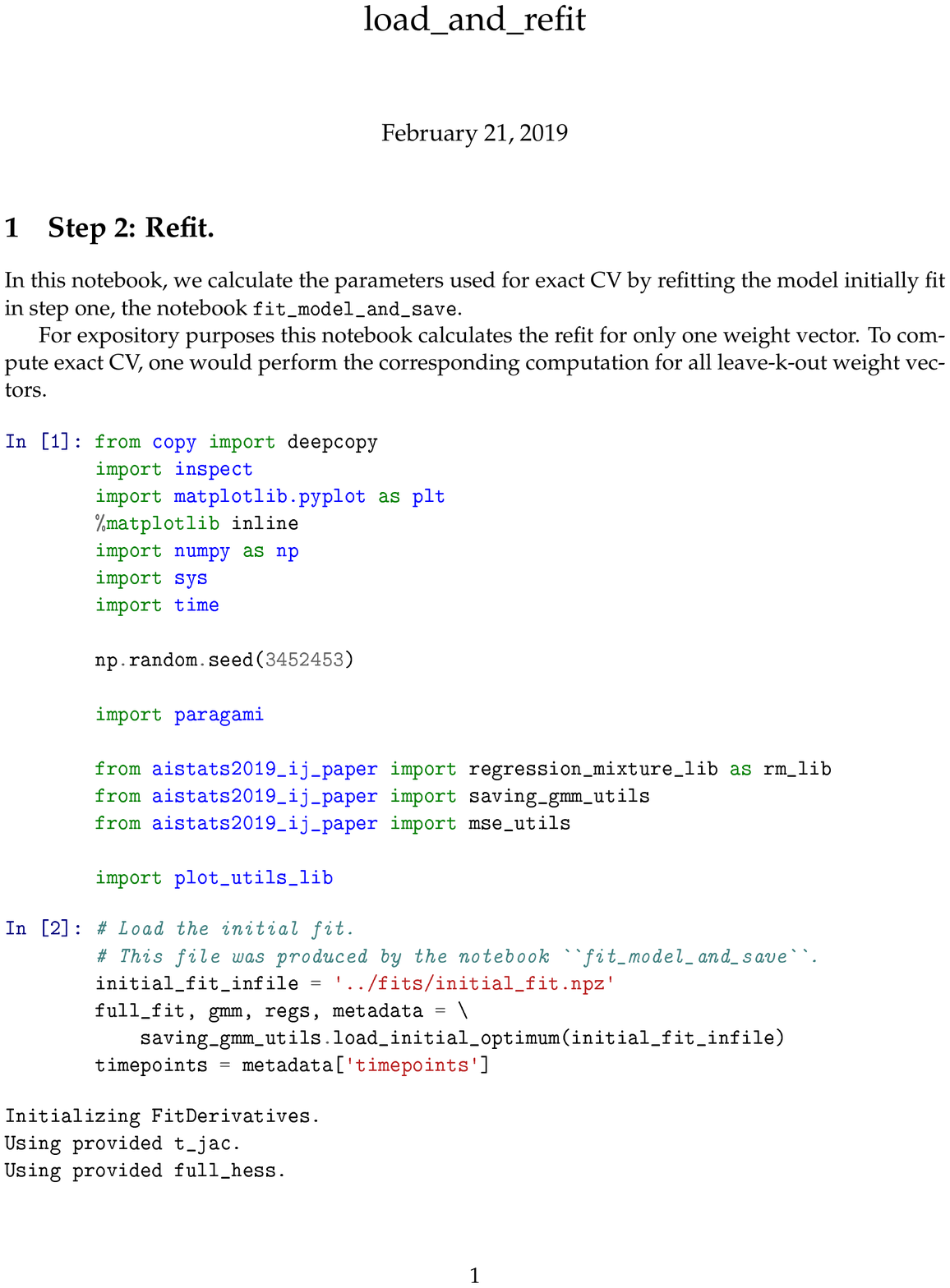}
\includepdf[pages=1-, scale=0.8, pagecommand={}]{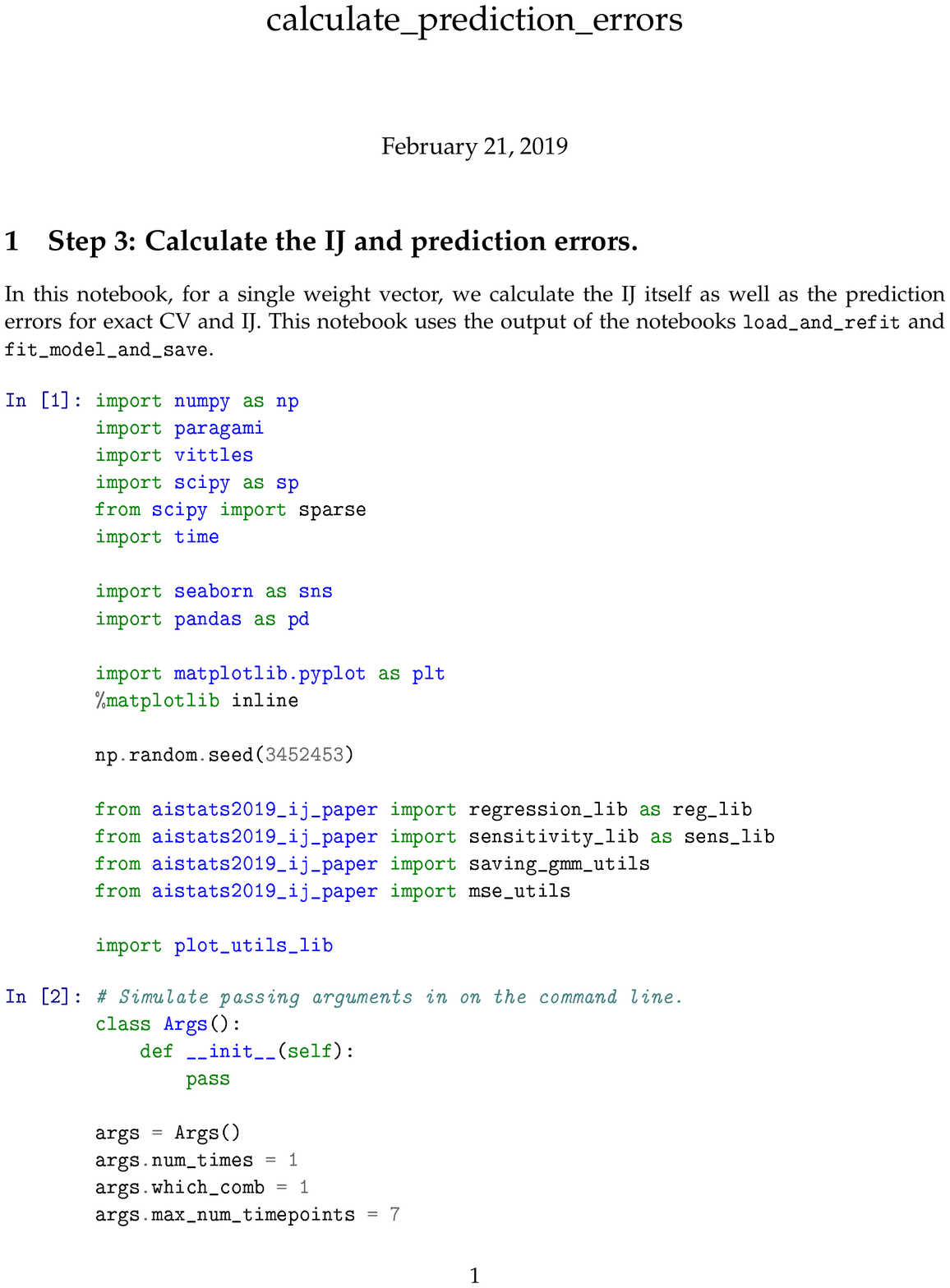}
\includepdf[pages=1-, scale=0.8, pagecommand={}]{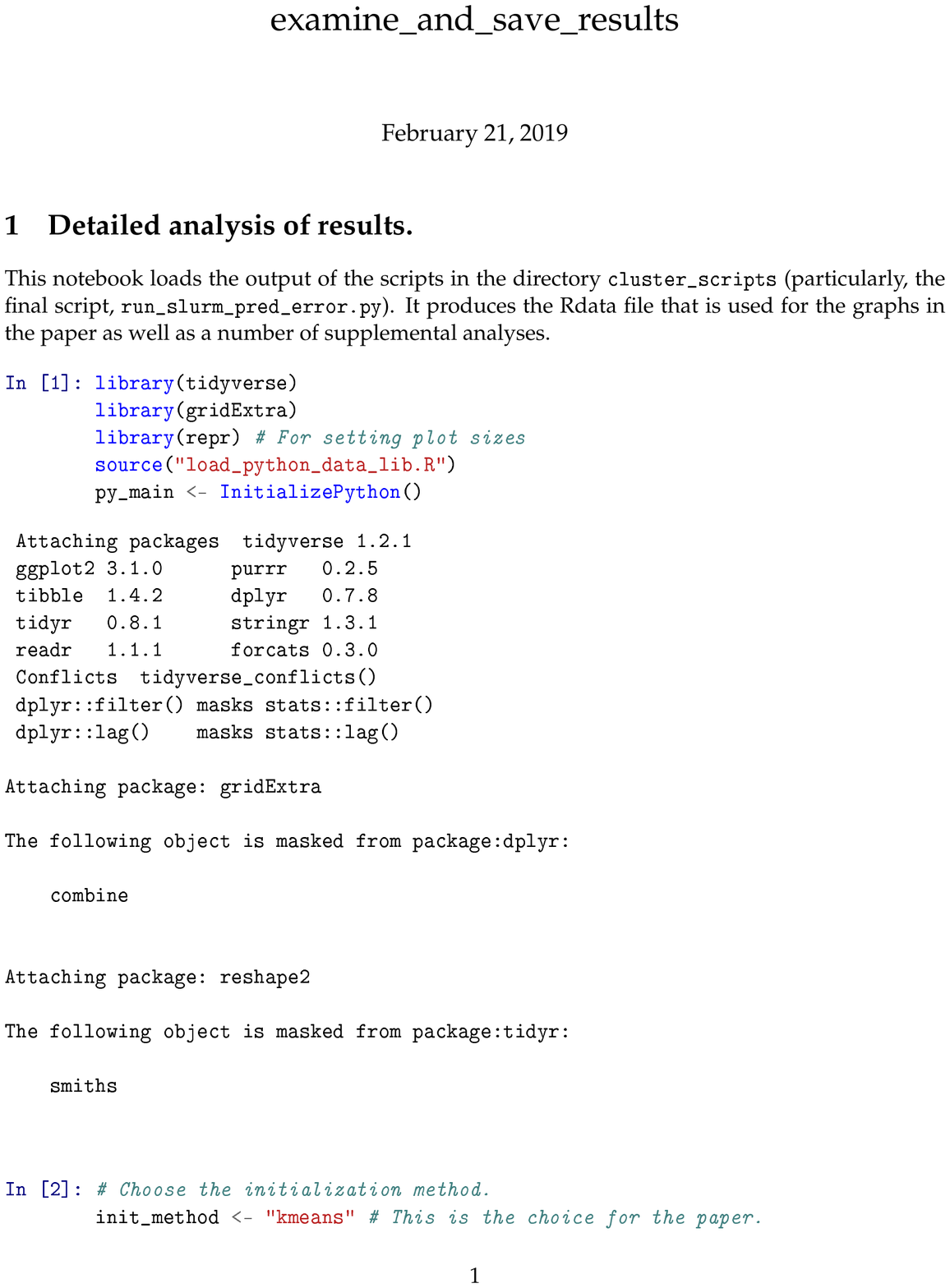}

\end{document}